\documentclass{article}

\usepackage{sdhcmds}
\usepackage{sdhpapers}

\addbibresource{refs.bib}

\doublespacing

\begin{document}

\title{Policy Learning with New Treatments}
\author{Samuel D. Higbee}
\institute{Department of Economics, University of North Carolina at Chapel Hill}
\email{sdhigbee@unc.edu}
\date{July 2025}

\abstract{
    I study the problem of a decision maker choosing a policy which allocates
    treatment to a heterogeneous population on the basis of experimental data that 
    includes only a subset of possible treatment values.  
    The effects of new treatments are partially identified by shape restrictions on 
    treatment response.  
    Policies are compared according to the minimax regret criterion, and I show 
    that the empirical analog of the population decision problem has a tractable 
    linear- and integer-programming formulation.  
    I prove the maximum regret of the estimated policy converges to the lowest 
    possible maximum regret at a rate which is the maximum of $N^{-1/2}$ and the 
    rate at which conditional average treatment effects are estimated in the 
    experimental data.  
    In an application to designing targeted subsidies for electrical grid connections 
    in rural Kenya, I find that nearly the entire population should be given a 
    treatment not implemented in the experiment,
    reducing maximum regret by over $60\%$ compared to the policy that 
    restricts to the treatments implemented in the experiment.
}

\acknowledgements{
    I am grateful to
    Max Tabord-Meehan,
    Alex Torgovitsky,
    St\'ephane Bonhomme,
    Guillaume Pouliot,
    Stefan Wager,
    Chen Qiu,
    and Davide Viviano for helpful feedback.
    I would also like to thank seminar participants at 
    the University of Chicago, 
    Washington University in St. Louis,
    and the North American Winter Meetings of the Econometric Society 2024
    for helpful comments.
}

\maketitle
\newpage

\section{Introduction}
\label{section-introduction}

Heterogeneous treatment effects are often estimated with a decision problem in mind--- 
should a particular individual be treated?  
This question has fostered much research in econometrics, statistics, and machine learning.  
However, relatively less attention has been given to another important margin of the decision--- 
should the treatment itself be adjusted? 
Whether the treatment is a medical treatment, subsidy, job training, or audit probability, 
decision makers can usually entertain changing the treatment value that was observed in the data.  
Even experiments with multivalued treatments may not implement an exhaustive list of treatment values.  
This is especially true in the social sciences, where testing multiple interventions can be costly, 
and in the medical sciences, where specific treatment doses are often tested in clinical trials.  
In this paper I propose a method for allocating treatment to a population when the treatment values 
themselves can be adjusted to values never before seen in the data.  
I show how combining the data on existing treatments with economically motivated shape restrictions 
can be used to design policies that outperform those possible when only previously implemented 
treatments are considered.

I first formulate a decision problem in which the decision maker observes 
experimental data on some treatment values and seeks to construct a 
mapping, or policy, from the space of covariates to the space of treatments 
in order to maximize some objective function.  
I assume all experimentation is done before the policy is constructed.  
This setting, which is common in econometrics, is often referred to as 
treatment choice or offline policy learning.
Examples include 
\textcite{atheyPolicyLearningObservational2021}
\textcite{bhattacharyaInferringWelfareMaximizing2012},
\textcite{kitagawaWhoShouldBe2018}
and other examples mentioned in the literature review thereof, 
\textcite{liuPolicyLearningEndogeneity2024},
\textcite{mbakopModelSelectionTreatment2021},
\textcite{qianPerformanceGuaranteesIndividualized2011},
\textcite{sasakiWelfareAnalysisMarginal2024},
\textcite{zhangEstimatingOptimalTreatment2012},
and
\textcite{zhaoEstimatingIndividualizedTreatment2012}.
A distinctive feature of this paper as opposed to most policy learning problems 
is that the set of treatments that the decision maker can consider may be 
a strict superset of the support of the treatment random variable observed in the data.  
This extends policy learning to practically relevant situations in which constraints 
in the design and implementation of experiments or simply differences in the 
objectives of the experimenter versus decision maker result in only a few treatment 
values being piloted in the experiment, while the decision maker may want to consider 
many more.

Despite the lack of data on the impacts of these never-before-implemented treatments, 
I show how to bound the response to new treatments using simple, economically 
interpretable restrictions on the shape of treatment response. 
For example, a financial incentive may be assumed to have a positive effect, exhibit 
diminishing returns, or satisfy smoothness conditions. 
Such shape restrictions are often exploited to partially identify treatment effects 
(e.g. \cite{manskiIdentificationPredictionDecision2009},
\cite{mogstadUsingInstrumentalVariables2018}).
The empirical analysis of the present paper demonstrates that such bounds can be 
adequately informative for choosing whether and how to implement new treatment values.
Based on these bounds, I construct a population decision problem to choose which 
treatment to assign to each covariate value.
I use the minimax regret criterion to evaluate treatment choice under partial 
identification following \textcite{manskiMinimaxregretTreatmentChoice2007}.

As in \textcite{manskiStatisticalTreatmentRules2004a},
\textcite{kitagawaWhoShouldBe2018} and the 
subsequent literature on empirical welfare maximization methods, 
I propose a decision rule based on solving the  
empirical analog of the decision problem as a surrogate for the 
infeasible population objective.  
The resulting empirical minimax regret estimator is constructed by minimizing maximum regret
across an estimate of the partially identified set of treatment response functions.  
In this way, the resulting policy is robust to model ambiguity induced by introducing 
new treatments.
Despite involving nested, non-closed form optimization problems which characterize the 
identified set for treatment response, I show how the optimal policy can be computed 
using the same linear and integer programming tools common in the policy learning 
literature. 
The estimator is thus computationally feasible and can be implemented by widely available 
software.

I show that the proposed decision rule possesses desirable regret properties.
The maximum regret obtained under the estimated policy converges to the smallest
possible maximum regret that the decision maker could have achieved in the absence of 
sampling uncertainty-- that is, if the population identified set were observed--
uniformly across a set of data distributions.
The rate at which the regret of the estimated policy converges to its optimum depends on 
the estimation rate of the response to the treatments which were observed in the data,
and hence is an asymptotic rather than finite-sample convergence guarantee.  
In the case of discrete covariates, or more generally parametric rates of convergence for 
estimated treatment effects, the rate of convergence of maximum regret is $N^{-1/2}$.
Otherwise, maximum regret converges at the nonparametric rate.

I apply the method to data from \textcite{leeExperimentalEvidenceEconomics2020},
in which households in rural Kenya were offered one of four prices in 
$0$, $15$, $25$, or $35$ thousand shillings to connect to the electrical grid.  
I consider a decision maker able to offer prices in increments of $2.5$ thousand shillings 
based on household size and income.  
This represents a much richer set of fifteen possible treatments, allowing for finer 
targeting of personalized prices to optimize the cost-effectiveness of the subsidy program. 
To bound the takeup at these new prices, I assume demand is downward sloping and convex.  
The estimated minimax regret optimal policy assigns prices that were not implemented in the 
experiment to nearly the entire population.
Moreover, the maximum regret of the estimated policy is over $60\%$ lower than the maximum regret 
of the best policy that only implements the prices implemented in the experiment,
illustrating that constraining the decision 
maker to treatments that appear in the experimental data can result in suboptimal decisions.

\subsection{Related literature}

This paper contributes to a growing literature on statistical treatment rules in
econometrics beginning with 
\textcite{manskiStatisticalTreatmentRules2004a} and
\textcite{kitagawaWhoShouldBe2018},
which introduced the now-common empirical welfare maximization framework.  
I follow a similar strategy of constructing an empirical analog of the 
population objective, but seek to minimize the worst-case regret that can occur 
within the identified set of treatment response.  

Forecasting the effects of treatments or policies never before observed in the 
data is a fundamental goal of econometrics, especially when applied as a guide 
for public policy 
(see \textcite{heckmanChapter70Econometric2007} and
\textcite{manskiEconometricsDecisionMaking2021} 
for a deep discussion, including a historical overview).
Nonetheless, the recent literature on policy learning and treatment choice has 
generally not considered the introduction of new treatments with partially 
identified effects.

Previous literature has treatment choice
under various forms of partial identification.
\textcite{manskiSearchProfilingPartial2006} and 
\textcite{manskiVaccinationPartialKnowledge2010} consider minimax regret
treatment choice when the decision maker only observes data under
the status quo policy and uses monotonicity restrictions to partially identify
the effects of counterfactual treatment intensities on welfare.
The analysis is in the population, and issues of statistical estimation are not
considered.
\textcite{manskiMinimaxregretTreatmentChoice2007}
considers minimax regret treatment choice when some outcome data is missing 
not necessarily at random, leading to partial identification of treatment effects,
and proposes an empirical analog.
In contrast to this paper and much of the statistical policy learning
literature,
\textcite{manskiSearchProfilingPartial2006,manskiVaccinationPartialKnowledge2010,manskiMinimaxregretTreatmentChoice2007}
do not consider restricted policy classes,
yielding a decision problem that is separable in covariates.
These restricted policy classes are also important for the convergence
properties of the estimated policy.

The paper most closely related to this one is
\textcite{manskiUsingLimitedTrial2025},
which studies the question of how to allocate new dosage levels of a treatment 
given experimental evidence on a subset of possible dosage levels.
Like this paper, \textcite{manskiUsingLimitedTrial2025} uses shape restrictions to bound the
response to new treatments, and uses the minimax regret criterion to choose
a decision rule.
Unlike this paper,
\textcite{manskiUsingLimitedTrial2025} assumes population-level quantities are known and
hence does not consider statistical properties of estimated decision rules, 
nor does it consider targeting new treatments on the basis of
covariates using complexity-constrained policy classes.
The two papers also use different utility functions--
the present paper using a linear-in-outcome utility function,
while \textcite{manskiUsingLimitedTrial2025} assuming four discrete outcomes
associated with different utility levels.
Finally, \textcite{manskiUsingLimitedTrial2025} considers fractional treatment assignment
in addition to the deterministic treatment assignment considered in this paper.

\textcite{ben-michaelSafePolicyLearning2022} develops a method for learning
policies from data gathered under a deterministic policy for which strict 
overlap fails;
\textcite{zhangSafePolicyLearning2024} tailors this framework to the case where
the deterministic policy is a regression discontinuity design.
The introduction of new treatments is similar to the deterministic policy
setting considered here in that it is also a case where strict overlap fails.
\textcite{ben-michaelSafePolicyLearning2022} 
and \textcite{zhangSafePolicyLearning2024} use a maximin gain welfare criterion,
where the objective is to learn a policy that is guaranteed to weakly improve
on the status quo policy.
\textcite{khanOffpolicyEvaluationOverlap2024} studies robust policy evaluation 
using Lipschitz constraints when strict overlap fails.
\textcite{ben-michaelSafePolicyLearning2022},
\textcite{zhangSafePolicyLearning2024},
and \textcite{khanOffpolicyEvaluationOverlap2024} 
focus on partial identification through restrictions on response as a function of covariates,
while this paper focuses on shape restrictions on the response across treatment values.

Unobserved confounding can be a source of partial identification
in policy choice with observational data.
\textcite{kallusMinimaxOptimalPolicyLearning2021} studies this setting,
and use bounds on the distance between the true propensity weights and the
observed (biased) weights to partially identify the effect of policies.
Their criterion is maximum regret relative to a baseline policy such as the
status quo, and like the present paper they show that the maximum regret of the
estimated policy converges to the lowest possible maximum regret.
\textcite{puEstimatingOptimalTreatment2021} uses an instrumental variable
to partially identify treatment effects when unobserved confounding precludes point
identification, and proposes a classification-based approach
with a surrogate loss to learn the optimal policy under a maximin welfare criterion.

While unobserved confounding threatens the internal validity of the estimated
policy on the experimental population,
other papers consider threats to external validity,
where the experimental population is different from the target population.
\textcite{adjahoExternallyValidPolicy2023a} studies this setting,
using Wasserstein neighborhoods to construct the identified set,
and derive closed form expressions worst case welfare within these 
neighborhoods.
\textcite{leiPolicyLearningBiased2023} studies policy learning when the 
experimental population may self-select into the experiment on the basis 
of unobserved characteristics.
They consider maximin, maximin gain, and minimax regret policies.
They solve for a closed form when the policy class is unconstrained,
and propose an estimation method which is not a plug-in method.

\textcite{dadamoOrthogonalPolicyLearning2023} studies policy learning with a 
binary treatment where the identified set is rectangular,
meaning it is constructed by taking the product of pointwise bounds on each treatment effect.
In contrast, shape restrictions on the response across treatment values
generally yield nonrectangular identified sets.
This leads to difficulties when estimating the optimal policy in my setting because the bounds I 
identify do not in general admit a closed form.
However, the extra information provided by these shape restrictions
can lead to lower maximum regret than one would obtain using pointwise bounds.
This is illustrated in the empirical example of Section 
\ref{section-application}.
\textcite{dadamoOrthogonalPolicyLearning2023} also provides a doubly robust estimator
that can improve the convergence rate of the estimated policy under a margin condition.

\textcite{stoyeMinimaxRegretTreatment2012a} gives exact finite-sample results
for minimax regret treatment choice in Binomial and Gaussian experiments,
also with unrestricted policy classes.
\textcite{yataOptimalDecisionRules2025} gives exact finite-sample minimax regret
results in more general Gaussian settings with binary policies.
These papers do not consider treatment choice with multiple treatments.
Another difference is that the present paper only delivers asymptotic performance guarantees,
but does not require distributional assumptions.

Many of the previously mentioned works are concerned with binary treatments, 
while I am concerned with multivalued treatments.
\textcite{zhouOfflineMultiActionPolicy2023} and
\textcite{kallusPolicyEvaluationOptimization2018}
consider policy learning with multivalued treatments and continuous treatments, respectively, 
but in point-identified settings where all possible treatment values are 
implemented in the experiment. 

\textcite{atheyPolicyLearningObservational2021} extends policy learning to observational studies 
where exogeneity of treatment only holds after conditioning on high-dimensional 
covariates. 
In contrast, I am motivated by settings in which decision makers have data from 
a pilot experiment which tested a few treatment values.  
When this is the case, estimating the effects of policies involving new 
treatments only requires conditioning on the set of covariates used in the 
treatment rule, which is typically low-dimensional due to exogenous constraints 
on the policy class (\cite{kitagawaWhoShouldBe2018}).
\textcite{atheyPolicyLearningObservational2021} also considers infinitesimal, local changes to 
treatment values;
however, I consider new treatments that are sufficiently far from the support of
the data as to make local approximations or parametric extrapolations 
unreliable, necessitating a partial identification approach.

An alternative to the plug-in approach used in this paper and common in policy learning is to average across the 
parameter space according to some distribution.
\textcite{christensenOptimalDecisionRules2023}
study optimal decisions in a discrete set under partial identification where Bayes 
rules and the bootstrap distribution are used to average over the space of identified parameters, 
while a minimax approach is taken over the partially identified parameters. 
An important finding is that plug-in-rules may be dominated in the limit experiment.  
See \textcite{hiranoAsymptoticsStatisticalTreatment2009} and
\textcite{hiranoAsymptoticAnalysisStatistical2020}
for further discussion of asymptotic
optimality of statistical treatment rules.

The rest of the article is organized as follows:
Section \ref{section-decision} describes the decision problem in the population 
and shows how to incorporate information from shape restrictions.  
Section \ref{section-estimation} describes the empirical minimax regret problem 
and the algorithm for estimating the optimal policy.  
Section \ref{section-convergence} describes the convergence guarantees.   
Section \ref{section-application} applies the method to study personalized 
subsidies to connect to the electrical grid in rural Kenya.

\section{Population decision problem}
\label{section-decision}
\subsection{General framework}

A decision maker has access to experimental data and must choose a rule 
assigning individuals to treatments based on their observable covariates.  
The experimental data is described by random variables $(D, X, Y)$ taking 
values in $\mathcal{D}_0 \times \mathcal{X} \times \mathcal{Y}$ where $D$ 
is a treatment taking $|\mathcal{D}_0| = J_0$ values in the data,
$X$ are observed covariates, 
and $Y$ is a univariate outcome of interest.
$D$ is assumed to be randomly assigned, perhaps conditionally on $X$.

Although the random variable $D$ only takes values in $\mathcal{D}_0$, 
the decision maker can consider assigning individuals to any treatment 
value $d \in \mathcal{D}$ where $\mathcal{D}$ is potentially larger than 
$\mathcal{D}_0$.
Hence, I assume the existence of potential outcomes $Y(d)$ for all 
$d \in \mathcal{D}$.
The set $\mathcal{D}$ has cardinality $| \mathcal{D} | = J < \infty$, and 
its elements are denoted by $d_j$ for $j \in\{1,\dots,J\}$.
The observed outcome $Y$ is generated as $Y = Y(D)$.
Let $P$ denote the distribution of $(D, X, (Y(d))_{d \in \mathcal{D}})$.

The decision maker seeks a policy $\pi : \mathcal{X} \mapsto \mathcal{D}$ 
which assigns individuals to treatment status based on their observable 
covariates.  
The policy is chosen from some set $\Pi$ which is taken as given.
The treatment assigned to an individual with covariate values $X$ is
$\pi(X)$ and the realized outcome is $Y(\pi(X))$.

The decision maker has some utility function $u(d,x,y)$ which may depend 
on the treatment assigned, covariates, and the realized outcome of interest.  
I assume the decision maker is utilitarian and ultimately cares about the 
expected utility derived from the data realized from the policy, resulting 
in the following problem that the decision maker would like to solve
\begin{align*}
    &\max_{\pi \in \Pi} \E_P\bigg[ u(\pi(X), X, Y(\pi(X)))\bigg]
    \\
    &=\max_{\pi \in \Pi} \E_P\bigg[ v_P(\pi(X), X)\bigg]
\end{align*}
where $v_P(d,x) := \E_P[u(d,X,Y(d)) \mid X = x]$ is the conditional mean utility. 

Two sources of ignorance on the decision maker's part make this problem 
infeasible to solve in practice.  
The first is that only a sample is observed, so the population probability 
distribution is unknown.  
The second is that even if the population distribution of the data $(D,X,Y)$ 
were known, the effects of some treatments are not identified because they 
are never observed.  
In particular, the function $v$ depends on the distribution of potential 
outcomes $Y(d)$ for values of $d$ not in $\mathcal{D}_0$.  
Since data on these potential outcomes are not observed in the sample, the 
decision maker's objective is not point identified.
To deal with partial identification, I will solve a proxy problem which 
is robust to partial identification in that it achieves uniformly low regret 
across the identified set for $v$.
Since only sample data is available, I solve the empirical or plug-in version 
of this problem.

Following \textcite{manskiStatisticalTreatmentRules2004a} and much of the 
econometric literature on treatment choice, policies will be evaluated based on 
their expected regret. 
For a chosen policy $\pi$, the regret of $\pi$ is the difference in expected 
utility obtained from implementing the first-best policy versus $\pi$.  
The first-best policy maps each covariate value $x$ to $\arg\max_d v_P(d,x)$. 
For any chosen policy $\pi$, the expected regret of implementing $\pi$ versus 
implementing the first-best policy is
\begin{align*}
    R_P(\pi) &:= \E_P\bigg[ \max_d v_P(d,X) - v_P(\pi(X), X)\bigg]
    .
\end{align*}
Since $v_P$ is not identified, regret is not identified either.  
However, letting $\mathcal{V}_P$ be the identified set for $v_P$ determined by 
the experimental data (which will be characterized shortly), 
the maximum expected regret that can occur if the decision 
maker implements policy $\pi$ is given by
\begin{align*}
    \overline{R}_P(\pi) := 
    \max_{v \in \mathcal{V}_P} 
    \E_P\bigg[ \max_{d \in \mathcal{D}} v(d,X) - v(\pi(X), X) \bigg]
\end{align*}
I use the minimax regret criterion to guide the choice of policy.  
This means $\pi$ is chosen to minimize the largest regret that can occur within 
the identified set--- that is, $\overline{R}_P(\pi)$.  
Therefore, the decision maker chooses $\pi$ to minimize the worst-case expected 
regret as follows
\begin{align}
    \label{MMR}
    \pi_P^* \in \arg\min_{\pi \in \Pi} \overline{R}_P(\pi)
    = \arg \min_{\pi \in \Pi}\max_{v \in \mathcal{V}_P}
    \E_P\bigg[ \max_{d \in \mathcal{D}} v(d,X) - v(\pi(X), X) \bigg]
\end{align}
This ensures that the chosen policy minimizes regret uniformly across the identified set.  
If the minimizer is not unique, the decision maker is indifferent among them.
Since $v$ is unknown, the minimum maximum regret is generally larger than zero.

The minimax regret criterion is not the only method for comparing statistical decisions 
with partially identified effects.  
In the context of treatment choice, 
\textcite{manskiChoosingTreatmentPolicies2011}
compares the minimax 
regret criterion with the maximin welfare and subjective expected welfare criteria, two 
common alternatives. 
Under the maximin welfare criterion, the decision maker seeks to maximize the minimum 
possible level of the outcome that could be attained as opposed to the minimum gap between 
the attained and first-best level of the outcome.  
The method for construction and estimation of the optimal policy that follows can be 
applied when using the maximin welfare criterion as well. 
Indeed, it can be obtained as a simplification of what follows by replacing 
$\max_{d\in\mathcal{D}} v(d, X)$ with $0$ in (\ref{MMR}).  
However, the resulting estimator will of course have different behavior and regret properties.

In some settings the maximin criterion can be quite conservative
(see the discussion of 
\textcite{waldStatisticalDecisionFunctions1949}
found in \textcite{savageTheoryStatisticalDecision1951}).
Indeed, unless a new treatment $d\in\mathcal{D}\setminus\mathcal{D}_0$ can be guaranteed to 
outperform the original set of treatments in every possible state of the world 
$v \in \mathcal{V}_P$, the maximin welfare criterion will not implement new treatments.  
This is because under the maximin welfare criterion the decision is driven entirely by hedging 
against the least favorable state of the world.  
In contrast, the minimax regret criterion considers the suboptimality gap in all possible 
states of the world.  
The decision maker measures the performance of the policy in each state of the world according 
to the benchmark of optimality in that state of the world.  
I follow \textcite{manskiMinimaxregretTreatmentChoice2007} 
in applying the minimax regret criterion to treatment choice.  
This represents a particular choice of loss function and in turn delivers a point estimate of an 
optimal policy.

When the probability of each state of the world $v \in \mathcal{V}_P$ can be described by a 
probability distribution, the Bayesian approach to decision-making can be applied.  
This consists of setting a prior on states of the world $v \in \mathcal{V}_P$, 
using the data to form a posterior, and selecting a treatment policy which maximizes posterior 
expected welfare. 
One potential weakness of this approach in the context of introducing new treatments
is that the lack of identification means that even in large samples,
the influence of the prior on the posterior will be substantial.
Yet another possible approach to estimate the effects of treatments that lie outside the support of 
the data could be to extrapolate using a parametric model, thus circumventing entirely the need for 
partial identification.  
However, when the new treatments are sufficiently far from the support of the data, a parametric 
point-identified model substantially understates the degree of model uncertainty.  
This is illustrated in Section \ref{section-application} where a policy based on
parametric extrapolation leads to substantially higher maximum regret than
the estimated minimax regret policy.

\subsection{Imposing shape restrictions}

I now describe how a tractable characterization of the minimax regret problem 
(\ref{MMR}) can be obtained using shape restrictions on the treatment response.
This requires that the utility function is linear in the outcome of interest.
That is, there exist known functions $b$ and $c$ such that
\begin{align*}
    u(d,x,y) = b(d,x)y - c(d,x)
    .
\end{align*}
While it is often possible to avoid the assumption of linear utility by simply 
redefining $Y$ as utility, in some applications 
(such as in Section \ref{section-application}) it may be more natural 
to impose shape restrictions in terms of the original outcome variable, 
which may relate to a structural economic quantity such as a demand curve.  
In Section \ref{section-application}, $y$ will be a purchase indicator, 
$b(d,x)$ will be value of connections net of the cost of the subsidy, 
and $c(d,x)$ represents the cost of offering the subsidy regardless of takeup, 
which I take to be $0$.\footnote{
    If $b(d,x)$ and $c(d,x)$ represent preferences of a population, they may
    have to be estimated from the data.
    While this paper focuses on uncertainty about the response of $y$ to 
    new treatments, Appendix \ref{appendix-extensions} discusses how to extend
    the methods to the case where $b(d,x)$ and $c(d,x)$ are estimated.
}

Note that the assumption of linearity implies that
\begin{align*}
    v_P(d,x) = b(d,x) m_P(d,x) - c(d,x)
\end{align*}
where $m_P(d,x) := \E_P[Y(d) \mid X=x]$ is the conditional mean response function. 
Moreover, any two probability distributions which induce the same conditional mean response 
function will induce the same expected utility function.  
I therefore will also use the notation
$v_m(d,x) := b(d,x) m(d,x) - c(d,x)$ where conditional mean utilities are
indexed by conditional mean response functions rather than probability distributions.
Since $b$ and $c$ are known functions, in order to characterize maximum regret 
it is sufficient to characterize the identified set for $m_P$.

The decision maker has experimental data on the effectiveness of some treatments.  
This means that $m_P(d,\cdot)$ is identified for every $d\in\mathcal{D}_0$.
For this information on the effects of treatments in $\mathcal{D}_0$ to be informative 
about the effects of treatments in $\mathcal{D} \setminus \mathcal{D}_0$, some 
structure must be known about the mean conditional response function $m_P$.  
For example, the decision maker may know that demand is downward sloping, 
that a particular intervention features decreasing returns to scale, 
or that the treatment response exhibits some smoothness properties.  
By combining knowledge of $m_P(d,\cdot)$ for $d\in\mathcal{D}_0$ with such shape 
restrictions, the effects of new treatments may be partially identified.

Let the set of shape-restricted mean conditional response functions be denoted
by $\mathcal{S}$.
The sharp identified set for $m_P$ is
\begin{align*}
    \mathcal{M}_P &:= \mathcal{S} \cap \{ m : m(d, X) = m_P(d, X), \; P-\text{a.s.},\; \forall d \in \mathcal{D}_0 \}
\end{align*}
which represents the set of functions which obey the shape restrictions and 
match identified population means.
I assume that $\mathcal{S}$ restricts the shape of $m(d,x)$ in $d$ for any given $x$,
leaving the behavior of $m$ across $x$ unrestricted.

\begin{assumption}
    \label{assumption:pointwise_shape_restriction}
    There exist sets $\mathcal{S}_x \subset \R^J$ for each $x \in \mathcal{X}$ such that
    \begin{align*}
        \mathcal{S} = \{ m : m(\cdot, X) \in \mathcal{S}_X \; P-\text{a.s.} \}.
    \end{align*}
\end{assumption}

Under Assumption \ref{assumption:pointwise_shape_restriction}, 
a hypothetical conditional mean response function $m$ is in $\mathcal{S}$ 
if and only if $m(\cdot, X)$ satisfies some shape restrictions almost surely
in $X$.
That is, $\mathcal{S}$ encapsulates assumptions about the shape of $m_P$ across $d$ 
for fixed $x$, leaving the behavior of $m_P$ across $x$ unrestricted
(\cite{manskiMonotoneTreatmentResponse1997},
\cite{manskiSearchProfilingPartial2006}).
This means that an individual at a particular covariate value is assumed to have an 
expected treatment response that is decreasing, convex, smooth, etc. 
The sets $S_x$ may also stipulate that $m(\cdot, x)$ belongs to some parametric family,
such as polynomials.

Under Assumption \ref{assumption:pointwise_shape_restriction}, 
the maximization over $v$ (equivalently maximization over 
$m \in \mathcal{M}_P$) in (\ref{MMR}) is solved by considering each value of $x$ 
in isolation and finding the $m(\cdot, x)$ which maximizes regret.  
This allows the maximum to be interchanged with the expectation in the minimax 
regret problem (\ref{MMR})
\begin{align}
    \label{MMRLP}
    \nonumber
    \min_{\pi \in \Pi}  \max_{m \in \mathcal{M}_P} &\;  \E_P\bigg[ \max_{d \in \mathcal{D}} v_m(d,X) - v_m(\pi(X), X) \bigg]
    \\
    = \min_{\pi \in \Pi} & \; \E_P\bigg[ \max_{m \in \mathcal{M}_P} \bigg( \max_{d \in \mathcal{D}} v_m(d,X) - v_m(\pi(X), X) \bigg)\bigg]
    \nonumber
    \\
    = \min_{\pi \in \Pi} & \; \E_P\bigg[ \sum_{j=1}^J \pi_j(X) \Gamma_{j,P}(X) \bigg]
\end{align}
where $\pi_j(X) = \1[\pi(X) = d_j]$ and
\begin{align*}
    \Gamma_{j,P}(X) = \max_{m \in \mathcal{M}_P} \bigg( \max_{d \in \mathcal{D}} v_m(d,X) - v_m(d_j, X) \bigg)
\end{align*}
which can be interpreted as the contribution to maximum expected regret of 
assigning a person with covariate values $X$ to treatment $d_j$. 

The optimization problem (\ref{MMRLP}) defines the policy which is optimal 
in terms of its population minimax regret.  
The maximum regret of any policy depends on the strength of the assumptions 
encoded in $\mathcal{S}$, and their implications for the size of the identified 
set $\mathcal{M}_P$.
Larger identified sets will lead to higher maximum regret,
since it expands the set from which the worst-case response $m$ can be chosen.
The size of the identified set also depends on the relationship between the 
new and existing treatments.
If a new treatment $d_j$ lies between two existing treatments,
restrictions on $m$ such as monotonicity can provide informative bounds on
$m(d_j, x)$.
When $d_j > d$ for all $d \in \mathcal{D}_0$, monotonicity 
can leave the identified set unbounded in the absence of additional assumptions.

The benefit of imposing shape 
restrictions only on the behavior of $m_P$ across $d$ is that the 
representation (\ref{MMRLP}) is an optimization problem over a population 
expected loss defined by $\Gamma_{j,P}(X)$.  
This problem possesses a form similar to decision problems presented in 
\textcite{atheyPolicyLearningObservational2021}, 
\textcite{dadamoOrthogonalPolicyLearning2023}
and others, with the key 
distinction that the covariate-level loss $\Gamma_{j,P}(X)$ is itself the solution 
to an optimization problem which generally will not have a closed-form solution.  
Nonetheless, the minimax regret problem (\ref{MMRLP}) can be cast in terms of 
the empirical welfare maximization framework of 
\textcite{kitagawaWhoShouldBe2018}.
In the following section, I discuss how to set up the empirical analog of the 
nested optimization problem (\ref{MMRLP}) and provide a computationally 
attractive algorithm for solving it.

\section{Estimation}
\label{section-estimation}
The optimization problem (\ref{MMRLP}) is infeasible for the decision maker 
because in practice only a sample \\ $\{(D_i, X_i, Y_i)\}_{i=1}^N$ is 
observed.
Instead, I propose solving the empirical analog of (\ref{MMRLP}) to obtain
an estimate of the population optimal policy.
Insofar as the constraints of this problem are constructed from consistent
estimators, the optimal policy will inherit similar properties.

In this section I describe the empirical analog of (\ref{MMRLP}) and
provide a solution procedure.
It consists of first estimating the effects of the treatments which were 
implemented in the experimental data, then constructing estimates of 
$\Gamma_{j,P}(X_i)$ for every observation $i$ and treatment $j$, and finally 
plugging these estimates into the empirical analog of (\ref{MMRLP}) where the 
sample mean is used instead of the population expectation.  
I show how these estimates of $\Gamma_{j,P}(X_i)$ can be computed using linear programming, 
resulting in a mixed integer-linear programming formulation for (\ref{MMRLP}) 
for many policy classes $\Pi$.

First, I estimate the mean conditional response function for every 
$d \in \mathcal{D}_0$, denoted $\hat m_0(d,x)$.  
Except for high level conditions on the accuracy of the estimate detailed in 
Section \ref{section-convergence}, I remain agnostic about how the estimate is 
constructed.  
The estimate $\hat m_0$ is used to construct an estimate of the identified set 
for $m_P(\cdot, X_i)$ for each $i$ as a function of $d$ in all of $\mathcal{D}$, which 
represents covariate-level bounds on the effects of new treatments. 
The empirical analog of $\mathcal{M}_P$ is the set of functions which obey the 
shape restrictions and match estimated sample means, and is denoted by 
$\hat{\mathcal{M}}:= \mathcal{S} \cap \{ m : m(d, X_i) = \hat m_0(d, X_i) \; \forall i,\; \forall d \in \mathcal{D}_0 \}$.  
I assume it is nonempty.  
As discussed in Section \ref{section-convergence}, estimates which violate the 
shape restrictions and hence yield an empty $\hat{\mathcal{M}}$ can be projected 
onto the set of functions which satisfy the shape restrictions.  
Since these are assumed to hold in the population, imposing such shape 
restrictions on estimators typically improves performance in finite samples 
(\cite{chetverikovEconometricsShapeRestrictions2018}).

This is then used to construct estimates $\hat \Gamma_j(X_i)$ of the 
covariate-level loss $\Gamma_{j,P}(X_i)$, for every observation $i$ and treatment $j$.  
That is,
\begin{align}
    \label{gammaj}
    \hat \Gamma_j(X_i) = \max_{ m \in \hat{\mathcal{M}}} \bigg( \max_{d \in \mathcal{D}} v_m(d,X_i) - v_m(d_j, X_i) \bigg)
\end{align}
These estimates are then used in the program
\begin{align}
    \label{empiricalMMR}
    \hat \pi := \arg \min_{\pi\in\Pi} \sum_{i=1}^{n} \sum_{j=1}^J \pi_{ij} \hat\Gamma_j(X_i)
\end{align}
where $\pi_{ij} = \1[\pi(X_i) = d_j]$. 

Having defined the estimator for the minimax regret optimal policy 
(\ref{empiricalMMR}), I turn to computationally convenient methods for 
estimating $\hat \Gamma_j(X_i)$ and thereby the policy $\hat \pi$.  
This is achieved by expressing $\hat \Gamma_j(X_i)$ through linear 
programs and considering policy classes $\Pi$ which can be expressed 
using linear and integer constraints.
In doing so, I impose some additional structure on the set of shape 
restricted functions $\mathcal{S}$ specified in Assumption \ref{assumption:pointwise_shape_restriction}.
Specifically, I assume the a priori knowledge on shape restrictions can be
summarized through $\ell$ linear inequalities on the treatment response vector
for almost every $x$.

\begin{assumption}
    \label{linear}
    There exists a matrix $S \in \R^{\ell \times J}$ and 
    a vector $r \in \R^\ell$ such that the conditional 
    mean response function $m(d,x)$ is in the set $\mathcal{S}$ if and 
    only if $S m(\cdot,X) \leq r$ $P-$almost surely, 
    where $m(\cdot,x) := (m(d_1,x), \dots, m(d_J,x))'$.
\end{assumption}

This assumption strengthens Assumption 2.1 by requiring that the shape 
restrictions are linear in the treatment response vector.
Such linear restrictions can accommodate a wide range of shape restrictions 
that may be used in practice.
For example, restrictions on the first, second, or higher differences of the mean 
conditional response can be expressed this way, allowing for $m_P$ to be constrained 
to be decreasing, Lipschitz, convex, or obey higher order smoothness conditions 
(\cite{mogstadUsingInstrumentalVariables2018}).\footnote{
    The analysis can be extended to allow $S$ and $r$ to depend on $x$.
    I have focused on the case where the shape restrictions are the same for all $x$
    for simplicity.
}
Upper and lower bounds on $m_P$ can also be expressed through such constraints.  
Appendix \ref{appendix-computation} describes in detail how the restrictions of 
decreasing demand and diminishing responsiveness to the subsidy are applied to the empirical 
example in Section \ref{section-application}.

The following examples convey the practical use of the assumption.

\begin{example}
    Suppose $\mathcal{D} = \{1,2,3,4\}$ and $m_P(d, x)$ is assumed to be 
    increasing and concave in $d$.  
    Then $ m \in \mathcal{S}$ if and only if 
    $S m(\cdot, X) \leq r$ $P$-almost surely where
    \begin{align*}
        S = \begin{bmatrix}
            1 & -1 & 0 & 0\\
            0 & 1 & -1 & 0\\
            0 & 0 & 1 & -1\\
            1 & -2 & 1 & 0\\
            0 & 1 & -2 & 1\\
        \end{bmatrix}
        \quad \quad
        r = \begin{bmatrix}
            0 \\ 0 \\ 0 \\ 0 \\ 0
        \end{bmatrix}
    \end{align*}
    Here the first three rows of $S$ ensure that $m$ is increasing, 
    and the second two rows of $S$ ensure concavity.
\end{example}

\begin{example}
    Suppose $m_P(\cdot,x)$ is assumed to be a polynomial of degree 
    $L$ where $J_0-1 \leq L \leq J-1$,
    in addition to shape restrictions such as boundedness or monotonicity.
    These restrictions can be imposed by using Bernstein polynomials.
    That is,
    \begin{align*}
        \mathcal{S}_x = \bigg\{
            m \in \R^J :
            m_j = B(d_j, L)' \beta, \;
            T \beta \leq r \;
            \beta \in \R^{L+1}
        \bigg\}
    \end{align*}
    where $B(d_j, L)$ is an $L+1$ vector of Bernstein polynomials of degree
    $L$ in $d_j$,
    and $\beta$ is a vector of coefficients.
    The matrix $T$ encodes shape restrictions on $m$ through
    constraints on the coefficients $\beta$.

    Let $B$ be the $J \times (L+1)$ matrix of Bernstein polynomials of degree $L$.
    Since $B$ has linearly independent columns, it has a left inverse $B^\dagger$
    and $m \in S_x$ if and only if $S m \leq r$ where $S = T B^\dagger$.
\end{example}

To ensure that $\hat{\mathcal{M}}$ consists of functions which match sample 
analogs of identified means on $\mathcal{D}_0$, I introduce the 
$J_0 \times J$ matrix $F$ where $F_{kj} = 1$ if $d_j$ is the $k$th 
element of $\mathcal{D}_0$ and $F_{kj} = 0$ otherwise.  
Then the identified set $\mathcal{M}_P$ is the set of all $m$ such that 
$S m(\cdot,X) \leq r$ and $F m(\cdot,X) = m_{0,P}(\cdot, X)$ almost surely, 
where $m_{0,P}(\cdot, X) = (m_P(d,X))_{d \in \mathcal{D}_0}$.
That is,
\begin{align*}
    \mathcal{M}_P 
    &= \{ m : S m(\cdot,X) \leq r, \; F  m(\cdot,X) = m_{0,P}(\cdot, X), \; P-\text{a.s.}\}
\end{align*}
which allows the empirical analog $\hat{\mathcal{M}}$ to be expressed as
\begin{align*}
    \hat{\mathcal{M} }
    &= \{m : S m(\cdot,X_i) \leq r \; \forall i, F  m(\cdot,X_i) = \hat m_0(\cdot,X_i) \; \forall i\}
\end{align*}
where $\hat m_0(\cdot,x)' = (\hat m_0(d, x))_{d\in\mathcal{D}_0}$.
By expressing $\hat{\mathcal{M}}$ this way it is possible to express 
$\hat \Gamma_j(X_i)$ as the maximum of $J$ linear programs.
Define the estimate
\begin{align*}
    \hat \Gamma_{jk}(X_i) := \max_{m \in \hat{\mathcal{M}}} v_m(d_k, X_i) - v_m(d_j, X_i)
\end{align*}
which measures the contribution to expected regret of assigning an individual $d_j$ 
instead of assigning them $d_k$, conditional on $X_i$.
Recall that $b(d,x)$ and $c(d,x)$ parametrize the linear utility function.
For each observation $i$ and treatments $j$ and $k$, construct $b_{jk}(X_i)$ as a vector 
in $\R^J$ with $b(d_k,X_i)$ in the $k^{th}$ entry and $b(d_j,X_i)$ in the $j^{th}$ entry, 
and zeros everywhere else.  
Construct $c_{jk}(X_i)$ likewise.  
Then
\begin{align}
    \label{gammalp}
    \begin{split}
    \hat \Gamma_{jk}(X_i) = \max_{m \in \R^J} \quad   b_{jk}(X_i)' m - & c_{jk}(X_i)
    \\ s.t. \quad S m &\leq r
    \\  Fm &= \hat m_0(\cdot,X_i).
    \end{split}
\end{align}
After defining $\hat \Gamma_j(X_i) = \max_k \hat\Gamma_{jk}(X_i)$, 
these estimates can be used in the program (\ref{empiricalMMR}).

Despite the linear programming representation of $\hat \Gamma_{jk}(X_i)$, 
computing $\hat \Gamma_j(X_i)$ for all $i$ and $j$ appears to require 
$N J^2$ linear programs total (one for each $i$, $j$, and $k$ combination).
However, a dual formulation detailed in Appendix \ref{appendix-computation} 
demonstrates that $\hat\Gamma_{j}(X_i)$ can be computed with a single 
linear program.
Moreover, the linear programs for all observations can be stacked together and 
solved simultaneously.
This can be done prior to or in conjunction with the policy optimization over $\pi$.
Additionally, in the case of discrete covariates, it is only necessary to 
construct $\hat \Gamma_j(X_i)$ for unique values of $X_i$, which may be 
substantially smaller than the sample size $N$.

Having computed the covariate-level loss estimates $\hat \Gamma_j(X_i)$
that appears in (\ref{empiricalMMR}), optimization of the policy $\pi$ 
over the set $\Pi$ can be performed according to established methods in policy learning.  
In many cases, $\Pi$ can be represented by linear and integer constraints.  
Examples include linear eligibility scores, decision trees, and treatment sets with 
piecewise linear boundaries 
(\cite{kitagawaWhoShouldBe2018},
\cite{mbakopModelSelectionTreatment2021},
\cite{zhouOfflineMultiActionPolicy2023}).
When this is the case, the problem (\ref{empiricalMMR}) is a mixed integer-linear program 
for which highly optimized solvers are readily available.  
In Section \ref{section-application}, I use a class of linear eligibility 
score policies, which is described using linear and integer constraints in 
Appendix \ref{appendix-computation}.  

Since optimization of $\pi$ using mixed integer-linear programming is standard practice 
in policy learning problems, the only additional computational burden resulting from 
considering new treatments is that of solving the linear programs corresponding to 
$\hat \Gamma_j(X_i)$ as described above.  
For the example in Section \ref{section-application} I found the computation 
time for $\hat \Gamma_j(X_i)$ to be at most a similar order of magnitude as that of the 
estimation of $\hat \pi$ and sometimes much shorter, depending on the complexity of the 
policy class.
Constructing $\hat \Gamma_j(X_i)$ can often benefit from parallelization so that the 
overall computational burden is not much larger than the point identified case.

\section{Regret convergence}
\label{section-convergence}
In this section I investigate theoretical guarantees on the performance of the 
estimated policy $\hat \pi$.
Following \textcite{manskiStatisticalTreatmentRules2004a}, I evaluate the performance of 
policies in terms of their statistical regret.
In particular, I show that the regret of the estimated policy $\hat \pi$ 
converges to the lowest possible maximum regret the decision maker could achieve
if the population identified set under distribution $P$ were observed, uniformly across $P$.
Specifically, the regret guarantees will be of the form
\begin{align*}
    \sup_P \left(\E_P[\overline{R}_P(\hat\pi)] - \overline{R}_P(\pi^*_P) \right)
        \leq \mathcal{O}(N^{-1/2} \vee \rho_N^{-1})
\end{align*}
where $P$ ranges across an appropriate set defined below, implying that
\begin{align*}
    \sup_P \E_P[\overline{R}_P(\hat \pi)] \leq \sup_P \overline{R}_P(\pi^*_P) + \mathcal{O}(N^{-1/2} \vee \rho_N^{-1})
\end{align*}
for an appropriate sequence $\rho_N \to \infty$.
Since the estimated policy $\hat \pi$ is constructed using consistent estimates of the
partially identified set, these guarantees will generally be asymptotic in nature.
Above, the expectation only averages across realizations of the estimator $\hat \pi$ because
$\overline{R}_P(\cdot)$ is defined using the population probability measure $P$.

The interpretation of this bound is that in large samples the performance of the estimated policy
$\hat \pi$ as measured by its maximum regret across distributions $P$
(and the identified set under $P$) approaches the performance of the population optimal
policy $\pi^*_P$.
This bound on the difference between the maximum regret of $\hat \pi$ and the
best-in-class policy $\pi^*_P$ is similar to the bounds often obtained in the 
empirical welfare maximization or empirical risk minimization literature
in the point-identified case
(e.g. \textcite{kitagawaWhoShouldBe2018})
after replacing (unidentified) welfare with maximum regret.
Since $0 \leq \overline{R}_P(\pi^*_P) \leq \overline{R}_P(\hat\pi)$, 
this means that averaging across realizations of the estimate $\hat \pi$, 
the worst-case expected regret of $\hat \pi$ is growing arbitrarily close to 
$\overline{R}_P(\pi^*_P)$, the lowest possible maximum regret the decision maker
could achieve in the absence of sampling uncertainty.
In general no policy $\pi$ can achieve zero maximum regret across the entire identified
set, resulting in $\overline{R}_P(\pi^*_P) \geq 0$ typically holding with strict inequality for the
population optimal $\pi^*_P$.

I now discuss assumptions sufficient for such guarantees.
The main assumptions on the joint distribution of the data are random assignment 
of treatment and boundedness of the components of utility.

\begin{assumption}
    \label{DGP}
    $\{(Y_i, D_i, X_i)\}_{i=1}^n$ are i.i.d. copies of $(Y, D, X)$, generated by
    $P$ which satisfies
    \begin{enumerate}
        \item $D \perp Y(d) \mid X$ for all $d \in \mathcal{D}$
        \item There exists $C < \infty$ such that $m_P(d, X)$, $b(d,X)$, and $c(d,X)$ 
            are all bounded in absolute value by $C$ almost surely, for each $d\in\mathcal{D}$
    \end{enumerate}
\end{assumption}

Assumption \ref{DGP}.1 reflects the standard exogeneity condition that holds in 
the randomized experiment settings I use as a motivating example.  
It may also hold in observational studies, in which case it may be a strong assumption.  
In many randomized experiments the stronger condition $D \perp (Y(d), X)$ is satisfied.  
When this is true, it is sufficient to estimate $m_P(d,x) = \E_P[Y(d) \mid \tilde X=x]$, 
where $\tilde X$ is a subset of covariates $X$ which directly enter the policy.  
$\tilde X$ may be of a much lower dimension than $X$ since policies are often 
restricted to be relatively simple (\cite{kitagawaWhoShouldBe2018}).  
In Section \ref{section-application}, there are two covariates which enter 
the policy.  
Henceforth, I do not distinguish between the covariates required for 
Assumption \ref{DGP}.1 and the covariates used for the policy.
Assumption \ref{DGP}.2 restricts decision maker preferences 
by requiring that the mean response function is bounded, 
as well as the parameters of the linear utility function.
In the example of Section \ref{section-application} the outcome $Y$ is 
bounded while $b(d,x)$ is constant in $x$ and $c(d,x)=0$, satisfying this condition 
trivially.

The estimate $\hat m$ also must be sufficiently accurate in the following sense
\begin{assumption}
    \label{rootnconvergence}
    For some sequence $\rho_N \to \infty$ and some class of distributions $\mathcal{P}$, 
    the estimate $\hat m$ satisfies 
    \begin{enumerate}
        \item $\lim \sup_{N \to \infty} \sup_{P \in \mathcal{P}}
            \rho_N \E_P\bigg[ \frac{1}{N} \sum_{i=1}^N \lVert \hat m_0(\cdot,X_i) - m_{0,P}(\cdot,X_i) \rVert \bigg] < \infty$
        \item $\hat{\mathcal{M}} = \{ m : S m(\cdot,X) \leq r, \; F m(\cdot,X) = \hat m_0(\cdot,X)\}$ is nonempty, almost surely,
            for all $P \in \mathcal{P}$.
    \end{enumerate}
\end{assumption}
One common setting in which Assumption \ref{rootnconvergence}.1 holds with 
$\rho_N = N^{1/2}$ is when the covariates are discrete and sample averages may be used.  
Alternatively, $m_P(d, \cdot)$ may be assumed to belong to a parametric family, for each 
value of $d \in \mathcal{D}_0$.  Since $m_P(d,\cdot)$ is identified holding $d\in\mathcal{D}_0$ 
fixed, parametric assumptions on the relationship between covariates and the outcome of 
interest conditional on treatment values observed in the data may be weaker assumptions 
than the kinds of parametric assumptions that would allow one to extrapolate to new treatments,
in the sense that the former are testable.
\textcite{kitagawaWhoShouldBe2018} provides more general conditions under which 
Assumption \ref{rootnconvergence}.1 is satisfied when $\hat m$ is constructed via local 
polynomial regression.

Assumption \ref{rootnconvergence}.2 is not a restrictive assumption.  
Unless $m_P$ is on the boundary of $\mathcal{S}$, the estimated set $\hat{\mathcal{M}}$ 
will typically be nonempty with high probability as $N$ grows even if 
\ref{rootnconvergence}.2 is not assumed. 
In finite samples, an estimator that yields an empty $\hat{\mathcal{M}}$ can be 
projected onto the set of all $\hat m$ such that $\hat{\mathcal{M}}$ is nonempty.  
Since $\hat m_0(\cdot, X)$ is a vector in $\R^{J_0}$ and $\mathcal{S}$ 
is described by linear inequalities, this is a convex minimum norm problem that
can be solved by quadratic programming.

Finally, the choice set $\Pi$ is assumed to satisfy a standard condition on its complexity.
\begin{assumption}
    \label{policy}
    For each $d \in \mathcal{D}$, the class of sets 
    $
        \{ x: \pi(x) = d, \; \pi \in \Pi\}
    $
    is a VC-class of sets with VC dimension at most $V < \infty$.
\end{assumption}
For a formal definition of the VC dimension, see 
\textcite{vandervaartWeakConvergence1996}.
The assumption of finite VC dimension limits the complexity of the class $\Pi$; 
specifically, Assumption \ref{policy} ensures that $\Pi$ cannot be so flexible 
as to assign any arbitrary subset of a collection of $V+1$ points in $\mathcal{X}$ 
to treatment $d$.
This assumption is commonly invoked in offline policy learning settings as a way 
to express the constraints faced by decision makers (\cite{kitagawaWhoShouldBe2018}); 
this may be for the sake of interpretation, fairness, ease of implementation, 
political constraints, etc.  
The types of rules discussed in Section \ref{section-estimation} which can be 
expressed using linear and integer constraints, like linear eligibility scores and 
decision trees, satisfy this assumption under bounds on the number of inputs to the 
eligibility score or the depth of the decision tree.  
The assumption of VC dimension also plays an important role in the convergence of 
the regret of the optimal policy by ensuring the policy does not overfit the sample data.  
This assumption can be relaxed by instead using a holdout validation sample which
regularizes estimation of the policy (\cite{mbakopModelSelectionTreatment2021}).

Under these assumptions, the following regret bound is obtained:
\begin{theorem}
    \label{regretconvergence}
    Let $\mathcal{P}_C$ be a set of distributions for which  (1) Assumptions \ref{DGP} holds with
    constant $C$ and (2) Assumption \ref{rootnconvergence} holds.
    Under Assumptions \ref{linear} and \ref{policy},
    \begin{align*}
        \sup_{P\in\mathcal{P}_C} 
        \left(\E_P[\overline{R}_P(\hat\pi)] - \overline{R}_P(\pi^*_P) \right)
        \leq \mathcal{O}(N^{-1/2} \vee \rho_N^{-1})
    \end{align*}
    is satisfied.  As a result,
    \begin{align}
        \sup_{P\in\mathcal{P}_C} \E_P[\overline{R}_P(\hat \pi)] \leq \sup_{P\in\mathcal{P}_C} \overline{R}_P(\pi^*_P) + \mathcal{O}(N^{-1/2} \vee \rho_N^{-1})
    \end{align}
\end{theorem}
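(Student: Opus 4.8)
The plan is to decompose the excess maximum-regret bound into two pieces by inserting the population-optimal policy as an intermediate term. Write $\overline{R}_P(\pi) = \E_P[\sum_j \pi_j(X) \Gamma_{j,P}(X)]$ as in (\ref{MMRLP}), and similarly define the \emph{feasible empirical objective} $\widehat{W}(\pi) := \frac{1}{N}\sum_i \sum_j \pi_{ij} \hat\Gamma_j(X_i)$, so that $\hat\pi$ minimizes $\widehat{W}$ over $\Pi$. Then for any $\pi\in\Pi$,
\begin{align*}
    \overline{R}_P(\hat\pi) - \overline{R}_P(\pi^*_P)
    \leq \big(\overline{R}_P(\hat\pi) - \widehat{W}(\hat\pi)\big)
        + \big(\widehat{W}(\pi^*_P) - \overline{R}_P(\pi^*_P)\big)
    \leq 2 \sup_{\pi\in\Pi} \big| \overline{R}_P(\pi) - \widehat{W}(\pi) \big|,
\end{align*}
using $\widehat{W}(\hat\pi) \leq \widehat{W}(\pi^*_P)$. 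Taking $\E_P$ and then $\sup_{P\in\mathcal{P}_C}$, it suffices to bound $\sup_{P\in\mathcal{P}_C}\E_P \sup_{\pi\in\Pi} |\overline{R}_P(\pi) - \widehat{W}(\pi)|$ by $\mathcal{O}(N^{-1/2}\vee\rho_N^{-1})$.

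I would further split this uniform deviation into a \emph{plug-in error} and an \emph{empirical-process error}. For the plug-in error, compare $\widehat{W}(\pi)$ to its "oracle" analog $W_P(\pi) := \frac{1}{N}\sum_i\sum_j \pi_{ij}\Gamma_{j,P}(X_i)$, i.e. the sample mean of the true covariate-level losses. Since $|\sum_j \pi_{ij}(\hat\Gamma_j(X_i)-\Gamma_{j,P}(X_i))| \leq \max_j |\hat\Gamma_j(X_i)-\Gamma_{j,P}(X_i)|$, the key lemma is a Lipschitz-type stability bound: the value of the linear program (\ref{gammalp}) defining $\hat\Gamma_{jk}(X_i)$ is a Lipschitz function of its right-hand side data $\hat m_0(\cdot,X_i)$, with constant depending only on $S$, $F$, the feasible set geometry, and the bound $C$ from Assumption \ref{DGP}.2. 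This uses standard sensitivity analysis for linear programs (e.g. via the dual, whose feasible region does not depend on $\hat m_0$, so the optimal dual vertex set is fixed and the value is piecewise-linear in the RHS); here Assumption \ref{linear} and nonemptiness of $\hat{\mathcal{M}}$ (Assumption \ref{rootnconvergence}.2) are essential. Consequently $\max_j|\hat\Gamma_j(X_i)-\Gamma_{j,P}(X_i)| \leq L\,\lVert \hat m_0(\cdot,X_i)-m_{0,P}(\cdot,X_i)\rVert$ for a constant $L=L(S,F,C)$, and averaging over $i$ and invoking Assumption \ref{rootnconvergence}.1 gives a plug-in contribution of order $\rho_N^{-1}$, uniformly over $\mathcal{P}_C$.

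For the empirical-process error, compare $W_P(\pi)$ to $\overline{R}_P(\pi) = \E_P[\sum_j \pi_j(X)\Gamma_{j,P}(X)]$; this is a uniform-over-$\Pi$ law-of-large-numbers statement. By Assumption \ref{DGP}.2 the integrand $\sum_j\pi_j(x)\Gamma_{j,P}(x)$ is uniformly bounded (each $\Gamma_{j,P}$ is bounded by a constant depending on $C$, since it is a difference of values $v_m$ with $m$ ranging over the bounded set $\mathcal{M}_P$), and by Assumption \ref{policy} the class $\{x\mapsto \1[\pi(x)=d_j]: \pi\in\Pi\}$ is VC with dimension at most $\mathbf{v}$ for each $j$, so the class of functions $\{x\mapsto\sum_j\pi_j(x)\Gamma_{j,P}(x):\pi\in\Pi\}$ has a polynomial uniform covering number with exponent controlled by $J\mathbf{v}$. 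A standard symmetrization / Dudley-entropy bound (as in \textcite{kitagawa2018should}) then yields $\E_P\sup_{\pi\in\Pi}|W_P(\pi)-\overline{R}_P(\pi)| \leq \mathcal{O}(\sqrt{\mathbf{v} J / N})$, uniformly over $P\in\mathcal{P}_C$ since the bound depends on $P$ only through $C$, $\mathbf{v}$, $J$. Combining the two pieces gives the $\mathcal{O}(N^{-1/2}\vee\rho_N^{-1})$ rate, and the displayed corollary follows immediately from $\overline{R}_P(\pi^*_P)\geq 0$ and rearranging.

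The main obstacle I expect is making the linear-program sensitivity bound fully rigorous and \emph{uniform}: one must show the Lipschitz constant $L$ can be taken independent of $i$, $j$, $k$, and $P$. The subtlety is that as $\hat m_0(\cdot,X_i)$ varies the LP (\ref{gammalp}) could in principle become infeasible or its optimal basis could change in a way that blows up the local Lipschitz constant near faces of the feasible polyhedron. Assumption \ref{rootnconvergence}.2 rules out infeasibility on the event in question, and finiteness of the constant should follow because the dual feasible set $\{\,\lambda\geq 0,\ \mu : S'\lambda + F'\mu = b_{jk}(X_i)\,\}$ has finitely many vertices determined by the fixed matrices $S,F$ (and $b_{jk}$ takes finitely many "shapes" as $j,k$ range over $\{1,\dots,J\}$), so the LP value is a maximum of finitely many linear functions of $\hat m_0(\cdot,X_i)$ with slopes bounded uniformly — but care is needed to handle the boundary case and to absorb the dependence on $b(d,X_i),c(d,X_i)$ via the bound $C$. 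Once this uniform stability lemma is established, the remaining steps are routine.
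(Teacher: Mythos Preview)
Your proposal is correct and follows essentially the same approach as the paper: the same decomposition into an empirical-process piece (the paper's Lemma~\ref{eplemma}) and an LP-stability/plug-in piece (Lemma~\ref{lpconvergence}), each bounded exactly as you describe. The one minor difference is that the paper establishes the uniform Lipschitz constant for the LP value on the primal side, showing the feasible-set correspondence is Hausdorff-Lipschitz via the Moore--Penrose pseudoinverse of $F$ (yielding constant $\sup_x\lVert b_{jk}(x)\rVert\,\lVert F^\dag\rVert$), which sidesteps the dependence of the dual feasible region on $b_{jk}(X_i)$ that you correctly flag as the main obstacle in your dual-vertex argument.
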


The rate of convergence of the maximum regret is the slower of two rates:
$N^{-1/2}$, and the estimation rate of $\hat m_0$ in Assumption \ref{rootnconvergence}.  
This first rate is driven by the convergence of an empirical process uniformly over the 
policy class, which is $N^{-1/2}$ under Assumption \ref{policy} (\cite{vandervaartWeakConvergence1996}).  
The second rate reflects that
the regret of the estimated policy depends on the behavior of the linear program (\ref{gammalp}), 
the constraints of which depend on identified moments of the data and must be estimated.  
In turn, the value of the linear program can be shown to converge to its population counterpart 
at the same rate as the constraints 
(see \textcite{hoffmanApproximateSolutionsSystems1952}
and \textcite{rockafellarVariationalAnalysis2009};
related results in econometrics include 
\textcite{fangInferenceLargeScaleLinear2023} and
\textcite{freybergerIdentificationShapeRestrictions2015}).
Because Assumption \ref{rootnconvergence} is only a condition on the rate of convergence of
this estimator, the bound of Theorem \ref{regretconvergence} is a rate result.
If non-asymptotic bounds on the estimator $\hat m_0$ are available,
(for example if covariates are discrete and outcomes are bounded),
then the proof of Lemma \ref{lpconvergence} can be used to obtain non-asymptotic regret bounds.
This is developed in more detail in Appendix \ref{appendix-extensions}.

I give a heuristic sketch of the proof and defer the details to Appendix \ref{appendix-proofs}.  
I first define the quantities
\begin{align*}
    \tilde{R}_{N,P}(\pi) &:= \frac{1}{N}\sum_{i=1}^N  \sum_{j=1}^J \pi_{ij}\Gamma_{j,P}(X_i)
    \\
    \overline{R}_{N}(\pi) &:= \frac{1}{N}\sum_{i=1}^N \sum_{j=1}^J \pi_{ij} \hat\Gamma_j(X_i)
\end{align*}
$\tilde R_{N,P}(\pi)$ measures the in-sample or empirical maximum regret of policy $\pi$, 
supposing the true $m_P$ and hence the true $\Gamma_{j,P}$ were known.  
$\overline{R}_{N}(\pi)$ is the objective function of the empirical minimax regret problem 
(\ref{empiricalMMR}).
The difference between the maximum regret of the estimated policy and that of the 
minimax regret optimal policy can then be decomposed in terms of these quantities as follows:

\begin{align}
    \label{decomposition}
    \begin{split}
    0 \leq \overline{R}_P(\hat\pi) - \overline{R}_P(\pi^*) 
        &= \overline{R}_P(\hat\pi) - \tilde{R}_{N,P}(\hat\pi)
        \\ &\quad + \tilde{R}_{N,P}(\hat\pi) - \overline{R}_{N}(\hat\pi)
        \\ &\quad + \overline{R}_{N}(\hat\pi) - \overline{R}_{N}(\pi^*)
        \\ &\quad + \overline{R}_{N}(\pi^*) - \tilde{R}_{N,P}(\pi^*) 
        \\ &\quad + \tilde{R}_{N,P}(\pi^*) - \overline{R}_P(\pi^*)
    \end{split}
\end{align}

The first and last lines of (\ref{decomposition}) each concern the difference between a 
sample mean and the population expectation, holding the policy and distribution fixed and assuming 
$\Gamma_{j,P}(X)$ is known.  
They are each bounded by
$$
\sup_{\pi\in\Pi} \bigg|  \overline{R}_P(\pi) 
- \tilde{R}_{N,P}(\pi) \bigg|
$$
The second and fourth lines of (\ref{decomposition}) concern the difference between 
sample means of the true quantities $\Gamma_{j,P}(X_i)$ and their estimated counterparts, 
holding the policy and distribution fixed.  
They are each bounded by
$$
    \sup_{\pi\in\Pi} \bigg| \tilde{R}_{N,P}(\pi) 
    -  \overline{R}_N(\pi)\bigg|
$$
The third line of (\ref{decomposition}) concerns the difference between the in-sample 
performances of $\hat \pi$ and $\pi^*$. 
This is always negative because $\hat\pi$ is optimal for the empirical minimax regret 
problem (\ref{empiricalMMR}). 
Hence, the decomposition (\ref{decomposition}) yields
\begin{align} 
    \overline{R}_P(\hat\pi) - \overline{R}_P(\pi^*)
    &\leq 2\sup_{\pi\in\Pi} \bigg|  \overline{R}_P(\pi) 
    - \tilde{R}_{N,P}(\pi) \bigg|
    \label{firsttermregret}
    \\
    &\quad + 2\sup_{\pi\in\Pi} \bigg| \tilde{R}_{N,P}(\pi) 
    -  \overline{R}_N(\pi)\bigg|
    \label{secondtermregret}
\end{align}

Term (\ref{firsttermregret}) is the sup-$\Pi$ norm of a centered empirical process.  
Its expectation can be shown to converge uniformly at $N^{-1/2}$ rate using techniques 
in empirical process theory.
\begin{lemma}
    \label{eplemma}
    Under assumptions \ref{linear}, \ref{DGP}, and \ref{policy},
    $$
    \sup_{P\in\mathcal{P}_C}
        \E_P\bigg[\sup_{\pi\in\Pi} \bigg| \overline{R}_P(\pi) 
        - \tilde{R}_{N,P}(\pi) \bigg| \bigg]
        \leq K \sqrt{\frac{V}{N}}
    $$
    for some constant $K$ depending only on $C$ and $J$.
\end{lemma}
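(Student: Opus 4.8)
The plan is to bound the expected supremum of the centered empirical process
$\overline{R}_P(\pi) - \tilde{R}_{N,P}(\pi) = \E_P\big[\sum_j \pi_j(X)\Gamma_{j,P}(X)\big] - \frac{1}{N}\sum_i\sum_j\pi_{ij}\Gamma_{j,P}(X_i)$
by viewing it as an empirical process indexed by $\Pi$ and applying a standard maximal inequality based on VC bounds. First I would rewrite the quantity of interest: for a fixed policy $\pi$, the summand $\sum_{j=1}^J \pi_{ij}\Gamma_{j,P}(X_i) = \Gamma_{\pi(X_i),P}(X_i)$ is a function of $X_i$ alone, so the object is $\E_P[g_\pi(X)] - \frac{1}{N}\sum_i g_\pi(X_i)$ with $g_\pi(x) := \Gamma_{\pi(x),P}(x)$. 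I would then take the class $\mathcal{G} = \{g_\pi : \pi \in \Pi\}$ and show two things: (i) $\mathcal{G}$ is uniformly bounded by a constant depending only on $C$ and $J$, and (ii) $\mathcal{G}$ is a VC-subgraph class with VC dimension controlled by $\mathbf{v}$ and $J$.

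For boundedness, I would argue from Assumption \ref{linear} and Assumption \ref{DGP}.2 that every $m \in \mathcal{M}_P$ is uniformly bounded. This follows because $m \in \mathcal{M}_P$ matches the identified means $m_{0,P}(\cdot,X)$ on $\mathcal{D}_0$, which are bounded by $C$, and satisfies the linear shape constraints $Sm(\cdot,X)\le r$; together with the a priori knowledge that $m_P$ itself lies in $\mathcal{S}$ (so the set is nonempty and the relevant polyhedron is bounded in the directions picked out by $b_{jk}$), one gets a uniform bound, say $\bar C$, on $|m(d,x)|$ over $m\in\mathcal{M}_P$, $d\in\mathcal{D}$. Since $\Gamma_{j,P}(x) = \max_{m,k}\big(b(d_k,x)m(d_k,x) - c(d_k,x) - b(d_j,x)m(d_j,x) + c(d_j,x)\big)$ and each of $b,c$ is bounded by $C$, we obtain $|\Gamma_{j,P}(x)| \le 2C\bar C + 2C =: C'$, a constant depending only on $C$ and $J$. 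Here the dependence on $J$ enters through the geometry of the polyhedron (the constant $\bar C$ can grow with the number of constraints).

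For the complexity bound, I would note that $g_\pi(x) = \sum_{j=1}^J \Gamma_{j,P}(x)\,\1[\pi(x)=d_j]$ is a sum of $J$ terms, each a fixed function $\Gamma_{j,P}$ multiplied by the indicator of the set $\{x:\pi(x)=d_j\}$, which by Assumption \ref{policy} ranges over a VC-class of dimension at most $\mathbf{v}$. Standard preservation results (products of a fixed function with a VC-class of indicators, and sums of $J$ VC-subgraph classes — see \textcite{van1996empirical}) imply $\mathcal{G}$ is a VC-subgraph class with uniform covering number bounded by a polynomial in $1/\varepsilon$ of degree $O(J\mathbf{v})$. Then the Dudley entropy integral / standard symmetrization bound for empirical processes gives $\E_P\big[\sup_{\pi}|\overline R_P(\pi)-\tilde R_{N,P}(\pi)|\big] \le \text{const}\cdot C'\sqrt{J\mathbf{v}/N}$, which after absorbing the $\sqrt{J}$ and other $J$-dependence into the constant $K$ yields the claimed $K\sqrt{\mathbf{v}/N}$.

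The main obstacle I expect is establishing the uniform boundedness of $\mathcal{M}_P$ in step (ii) above — i.e., ruling out that the polyhedron $\{m : Sm\le r,\ Fm = m_{0,P}(\cdot,x)\}$ is unbounded in a direction along which $b_{jk}'m$ can grow without limit. This needs to be handled either by an explicit assumption that the shape restrictions pin down a bounded identified set (which is natural: e.g., monotonicity plus matching two endpoint values bounds everything in between), or by noting it is implied by nonemptiness of $\hat{\mathcal{M}}$ together with $\Gamma_{j,P}$ being finite, which is implicit in the problem being well-posed. Once uniform boundedness is secured, the remaining steps — VC preservation and the maximal inequality — are routine.
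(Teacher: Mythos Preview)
Your approach is essentially the paper's: symmetrization, Dudley's entropy integral, and covering-number control via the VC assumption on $\Pi$; the paper packages the covering step as a dedicated lemma showing $N(\delta,\mathcal{F},L_2(P_N))\le\prod_{j=1}^J N(\delta/(B\sqrt{J}),\Pi_j,L_2(P_N))$ rather than citing preservation results, but the content is identical. One small imprecision: sums of VC-subgraph classes need not themselves be VC-subgraph, so state the conclusion as a polynomial uniform covering-number bound of degree $O(J\mathbf{v})$ (which is all Dudley requires) rather than asserting $\mathcal{G}$ is VC-subgraph. Your concern about uniform boundedness of $v_m$ over $m\in\mathcal{M}_P$ is well taken---the paper handles it implicitly by introducing a bound $B$ on $v(d,x)$ ``implied by Assumption~\ref{DGP}.2,'' which tacitly assumes the shape-restricted polyhedron is bounded in the relevant directions, exactly the point you flag.
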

The constant $K$ hides a dependence on the number of treatments $J$.  
This dependence represents a cost to introducing arbitrarily large sets of new treatments.  
Just as Assumption \ref{policy} restricts the complexity of the sets of covariate values assigned 
to each treatment, the assumption of a fixed $J$ represents an exogenous constraint on the 
overall complexity of the policy.  

Term (\ref{secondtermregret}) concerns the difference between the value of the linear program 
defining $\hat\Gamma_{jk}(X_i)$, in which the constraints are estimated, versus the linear 
program defining $\Gamma_{jk,P}(X_i)$, in which the true value of the constraint vector is used.  
When the estimated constraints converge at $N^{-1/2}$ rate, the value of the linear program 
can be shown to exhibit similar convergence uniformly across $\Pi$.  
More generally, the value of the linear program converges at the same rate as the estimated constraints. 
This is because the feasible set of a linear program is Lipschitz in its constraints with respect 
to the Hausdorff metric.
\begin{lemma}
    \label{lpconvergence}
    Under assumptions \ref{linear}, \ref{DGP}, and \ref{rootnconvergence},
    $$
    \sup_{P\in\mathcal{P}_C}\E_P\bigg[\sup_{\pi\in\Pi} \bigg|\tilde{R}_{N,P}(\pi) 
    -  \overline{R}_N(\pi) \bigg| \bigg]\leq \mathcal{O}(\rho_N^{-1})
    $$
\end{lemma}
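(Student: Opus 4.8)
The plan is to dispose of the supremum over $\Pi$ essentially for free, reduce the problem to controlling the gap between the population linear program defining $\Gamma_{jk,P}(X_i)$ and its estimated counterpart $\hat\Gamma_{jk}(X_i)$, bound that gap by a Hoffman-type stability estimate, and conclude with Assumption \ref{rootnconvergence}.1. First I would note that, for any $\pi\in\Pi$, because $(\pi_{ij})_{j=1}^J$ indicates a partition (each $\pi_{ij}\in\{0,1\}$ and $\sum_j\pi_{ij}=1$),
\[
\bigl|\tilde R_{N,P}(\pi)-\overline R_N(\pi)\bigr|
=\biggl|\frac1N\sum_{i=1}^N\sum_{j=1}^J\pi_{ij}\bigl(\Gamma_{j,P}(X_i)-\hat\Gamma_j(X_i)\bigr)\biggr|
\le\frac1N\sum_{i=1}^N\max_{1\le j\le J}\bigl|\Gamma_{j,P}(X_i)-\hat\Gamma_j(X_i)\bigr|,
\]
and the right-hand side does not involve $\pi$, so it also bounds $\sup_{\pi\in\Pi}$ of the left-hand side. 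Since $\Gamma_{j,P}(X_i)=\max_k\Gamma_{jk,P}(X_i)$ and $\hat\Gamma_j(X_i)=\max_k\hat\Gamma_{jk}(X_i)$, and a maximum is $1$-Lipschitz in its arguments, it suffices to bound $\bigl|\Gamma_{jk,P}(X_i)-\hat\Gamma_{jk}(X_i)\bigr|$ for each triple $(i,j,k)$.

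Next I would invoke stability of the linear program (\ref{gammalp}). The values $\Gamma_{jk,P}(X_i)$ and $\hat\Gamma_{jk}(X_i)$ solve the same program --- same objective vector $b_{jk}(X_i)$, same inequality block $Sm\le r$ --- differing only in the equality right-hand side, $m_{0,P}(\cdot,X_i)$ versus $\hat m_0(\cdot,X_i)$. Both feasible sets are nonempty ($m_P\in\mathcal{M}_P$ for the first, Assumption \ref{rootnconvergence}.2 for the second), and both optima are finite under the maintained regularity of the problem (automatic whenever $\mathcal{S}$ encodes bounds on $m_P$, as in the running examples, so the feasible polyhedron is bounded in the coordinates entering the objective). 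By Hoffman's lemma (\cite{hoffman2003approximate}; see also \cite{rockafellar2009variational}), there is a constant $H$ depending only on $(S,F)$ --- and hence, crucially, not on $i,j,k$, nor on $P$, since those matrices are fixed by Assumption \ref{linear} --- such that any point feasible for one right-hand side lies within Euclidean distance $H\lVert m_{0,P}(\cdot,X_i)-\hat m_0(\cdot,X_i)\rVert$ of the feasible set for the other. Combined with $\lVert b_{jk}(X_i)\rVert\le\sqrt2\,C$ from Assumption \ref{DGP}.2, and symmetrizing in the two right-hand sides, this gives
\[
\bigl|\Gamma_{jk,P}(X_i)-\hat\Gamma_{jk}(X_i)\bigr|\le L\,\lVert m_{0,P}(\cdot,X_i)-\hat m_0(\cdot,X_i)\rVert,\qquad L:=\sqrt2\,C\,H,
\]
uniformly over $i,j,k$ and $P\in\mathcal{P}_C$.

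Chaining the two displays, taking expectations, and maximizing over $1\le j,k\le J$ (which only affects constants since $L$ is uniform), I would obtain
\[
\E_P\biggl[\sup_{\pi\in\Pi}\bigl|\tilde R_{N,P}(\pi)-\overline R_N(\pi)\bigr|\biggr]
\le L\,\E_P\biggl[\frac1N\sum_{i=1}^N\lVert\hat m_0(\cdot,X_i)-m_{0,P}(\cdot,X_i)\rVert\biggr].
\]
Multiplying and dividing by $\rho_N$, taking $\sup_{P\in\mathcal{P}_C}$, and applying Assumption \ref{rootnconvergence}.1 then yields that the left-hand side is $\mathcal{O}(\rho_N^{-1})$, which is the claim.

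The hard part will be the second step --- establishing the Lipschitz dependence of the optimal value of (\ref{gammalp}) on the equality right-hand side with a modulus that is \emph{uniform} over the observation $X_i$, the treatment pair $(j,k)$, and the distribution $P\in\mathcal{P}_C$. This uniformity is precisely what the rate statement requires, and it rests on Assumption \ref{linear} keeping the constraint matrices $S,F$ fixed (so the Hoffman constant is frozen across the sample) together with the uniform bound $C$ on the objective coefficients. A secondary technical point that needs care is verifying that the programs are finite-valued --- ruling out an unbounded objective along a recession direction of the feasible polyhedron --- which is the reason it is convenient to take $\mathcal{S}$ to include bounds on $m_P$.
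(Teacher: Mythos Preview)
Your proposal is correct and follows essentially the same route as the paper: eliminate the supremum over $\Pi$ using the partition structure of $(\pi_{ij})_j$, reduce to controlling $|\Gamma_{jk,P}(X_i)-\hat\Gamma_{jk}(X_i)|$ via the $1$-Lipschitz property of the max, establish a Lipschitz bound on the LP value in terms of $\lVert\hat m_0(\cdot,X_i)-m_{0,P}(\cdot,X_i)\rVert$ with a constant uniform in $i,j,k,P$, and conclude by Assumption~\ref{rootnconvergence}.1. The only substantive difference is in how the LP stability step is carried out: the paper argues directly via the Moore--Penrose pseudoinverse of $F$ (exploiting that only the equality right-hand side is perturbed, giving Lipschitz constant $\sup_x\lVert b_{jk}(x)\rVert\,\lVert F^\dag\rVert$ with $\lVert F^\dag\rVert=1$ here), whereas you invoke Hoffman's lemma as a black box. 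Your route is slightly more robust in that Hoffman's constant automatically accounts for the interaction between the equality and inequality blocks, and you are explicit about needing nonemptiness (Assumption~\ref{rootnconvergence}.2) and finiteness of the programs---points the paper handles more implicitly.
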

Taking the expectation of the bound given by (\ref{firsttermregret}) and (\ref{secondtermregret}) and combining this with
Lemmas \ref{eplemma} and \ref{lpconvergence} yields the bound of Theorem \ref{regretconvergence}.

\section{Application to rural electrification}
\label{section-application}
Investment in energy infrastructure is an important focus of development aid 
and there is a large body of research in development economics devoted to its study 
(reviews include 
\cite{leeDoesHouseholdElectrification2020},
\textcite{petersImpactsRuralElectrification2016},
and
\textcite{vandewalleLongTermGainsElectrification2015}).
\textcite{leeExperimentalEvidenceEconomics2020}
examines the relationship between the price of connections 
to the electrical grid and takeup in rural Kenya.  
This particular setting provides a compelling use case for the procedure outlined in this paper.  
There are only four prices observed in the data, leading to substantial model ambiguity in the 
form of partial identification of the demand curve outside these four prices. 
Further, the treatments are subsidies valued at hundreds of US dollars,
making subsequent experimentation with new treatments expensive.  
In this section, I take experimental data collected to study the economics of rural 
electrification 
(\cite{leeDataArchiveExperimental2020})
and illustrate how the method outlined in the present paper can be used to 
design cost-effective targeted subsidy policies to maximize household takeup.

Prices of $d$-thousand Kenyan shillings for $d \in \mathcal{D}_0 = \{0, 15, 25, 35\}$ are 
randomly offered to households, who have an eight-month period in which to decide whether 
to purchase the connection at the offered price.  
After the period is over, households continue to have the option to connect at the full price 
of $35$ thousand shillings.  
Here $D$ is price, $Y$ is a takeup indicator, and $X$ is a two-dimensional random vector 
containing household size and income.

Given this experimental data, I consider a decision maker able to offer subsidies to households.  
However, the decision maker has no reason to restrict themselves to the four prices that appear 
in the data.  
In my baseline analysis, I examine an expanded treatment set of 
$\mathcal{D} = \{0, 2.5, 5, \dots, 35\}$ thousand shillings.  
The sensitivity of results to coarser and finer treatment sets is reported in 
Appendix \ref{appendix-robustness}.
I assume the decision maker values each connection at $\alpha$-thousand Kenyan shillings 
and must pay the value of the subsidy if the recipient purchases a connection.  
There is no fixed cost for offering the subsidy.  
This means $u(d,x,y) = (\alpha - (35 - d)) y $ so that $b(d,x) = \alpha - (35 - d)$ and $c(d,x) = 0$.  
As a baseline specification, I take $\alpha$ to be the full market price of 
$35$ thousand shillings and explore policies under other valuations in 
Appendix \ref{appendix-robustness}. 

\subsection{Optimal policy without covariates}

Before estimating the optimal policy mapping covariate values to prices,
I illustrate the method for the simple case of no covariates.
Ignoring covariates for the time being makes the process easier to visualize,
and transparently demonstrates how combining the experimental data, 
shape restrictions, and the minimax regret criterion drives 
the choice of whether and how to implement new treatments.  
In the next subsection, where I consider policies which target prices
on the basis of covariates,
the worst-case regret is computed for each covariate value similarly
to the no-covariate case of this subsection.

I first estimate the average takeup at each price.  
Using these first stage estimates, I construct bounds for the effects 
of each new treatment and explain the difference between these pointwise 
bounds on outcomes and the estimated identified set $\hat{\mathcal{M}}$.  
Then I consider a fixed policy which assigns a single price to the entire 
population and find the regret-maximizing demand curve.  
I find the minimax regret optimal policy by finding the policy for which 
the maximum regret is as small as possible.

\begin{figure}
    \center
    \caption{Takeup and utility by price}
    \includegraphics[width=0.49\textwidth]{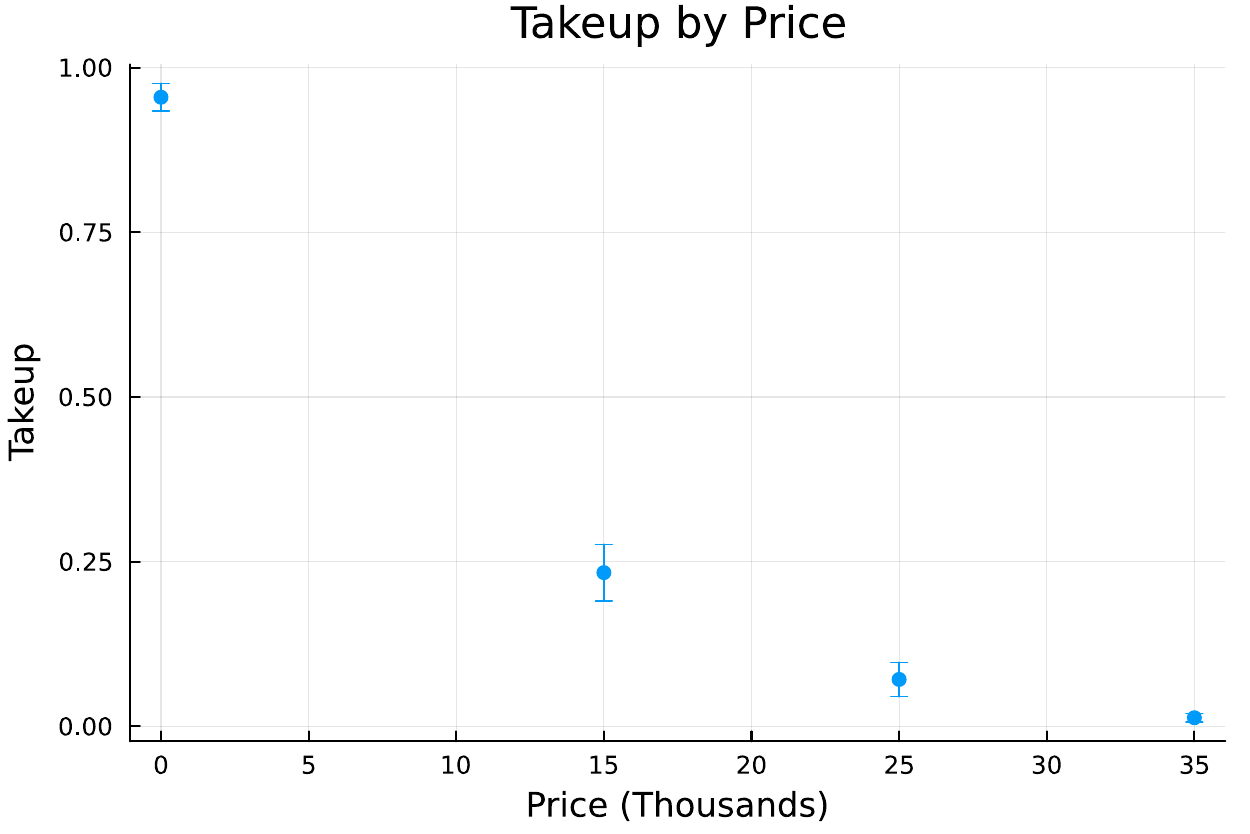}
    \includegraphics[width=0.49\textwidth]{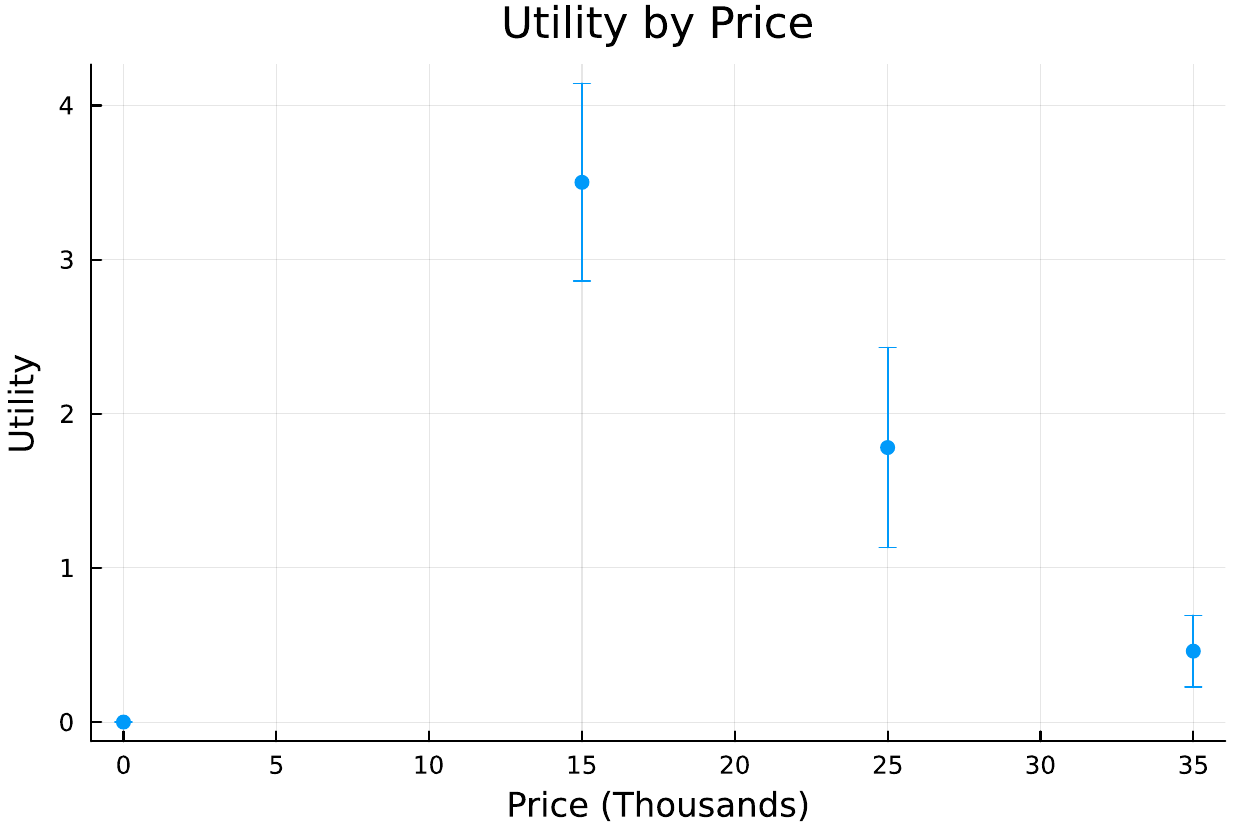}
    \label{experimentalresults}
    \\
    \small \raggedright
    Mean estimated takeup and utility at each price included in the experiment of 
    \textcite{leeExperimentalEvidenceEconomics2020}.  $95\%$ confidence intervals shown in bars.
\end{figure}

For each $d$ in the experimental data, I plot the mean takeup and utility in 
Figure \ref{experimentalresults}.  
Mean takeup $m_{0,P}(d) = \E_P[Y(d)]$ is identified from the experimental data for 
$d\in\mathcal{D}_0$, and estimated mean takeup $\hat m_0(d)$ is simply the 
sample mean at each price.  
Expected utility for experimental subsidy values is given by 
$v_{m_P}(d) = (\alpha - (35 - d)) m_P(d)$ and is estimated for $d \in \mathcal{D}_0$ 
by plugging in $\hat m_0(d)$. 
The price $d=0$ represents a fully subsidized connection, which is clearly 
undesirable from the decision maker's perspective because the decision maker 
will receive $0$ utility, which is the minimum possible, regardless of whether 
the household connects.  
Amongst the treatment values that appear in the data, $d=15$ achieves the 
highest utility on average.  
While not shown here, this is largely true of estimated mean utility conditional 
on $X$ as well.  
Indeed, setting $\mathcal{D} = \mathcal{D}_0$ and solving the empirical welfare 
maximization problem as in \textcite{kitagawaWhoShouldBe2018} with a linear eligibility 
score as the policy class assigns all individuals to a price of $15$.

A key question from the decision maker's perspective is whether prices not in the 
support of $D$ in the data could yield higher utility, and how data from the 
experiment can provide information on the magnitude of such gains.  
To answer this, I impose shape restrictions which imply bounds on takeup at new prices.  
The shape restrictions I study here are that demand is downward sloping and the price 
subsidy exhibits diminishing returns.
Takeup is also bounded between zero and one. 
Downward sloping demand is expected to be satisfied in all but a few exceptional markets, 
and represents one of the weaker assumptions a researcher may impose.  
Diminishing sensitivity to treatment may be more context specific, and can be motivated 
by a simple binary choice model where the density of valuations is decreasing on the 
support of treatments.  
Another setting where such a restriction may be applied is the analysis of production functions 
(\cite{manskiMonotoneTreatmentResponse1997}). 
The shape restrictions I impose, which can be expressed as linear inequalities involving 
the $J$-dimensional vector $m$ as shown in Appendix \ref{appendix-computation}, 
define the constraint $S m \leq r$ in the linear program (\ref{gammalp}).

\begin{figure}
    \center 
    \caption{Bounds on takeup and utility at each price}
    \includegraphics[width=0.49\textwidth]{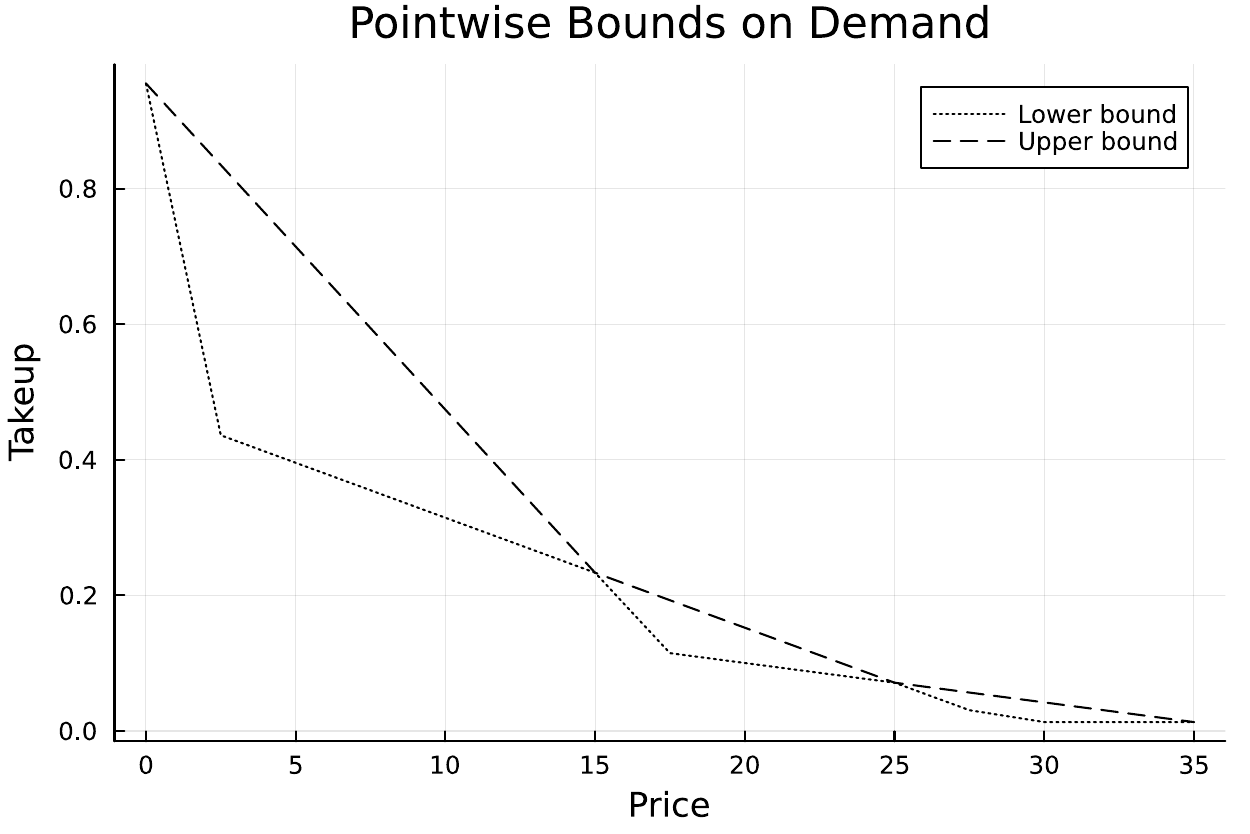}
    \includegraphics[width=0.49\textwidth]{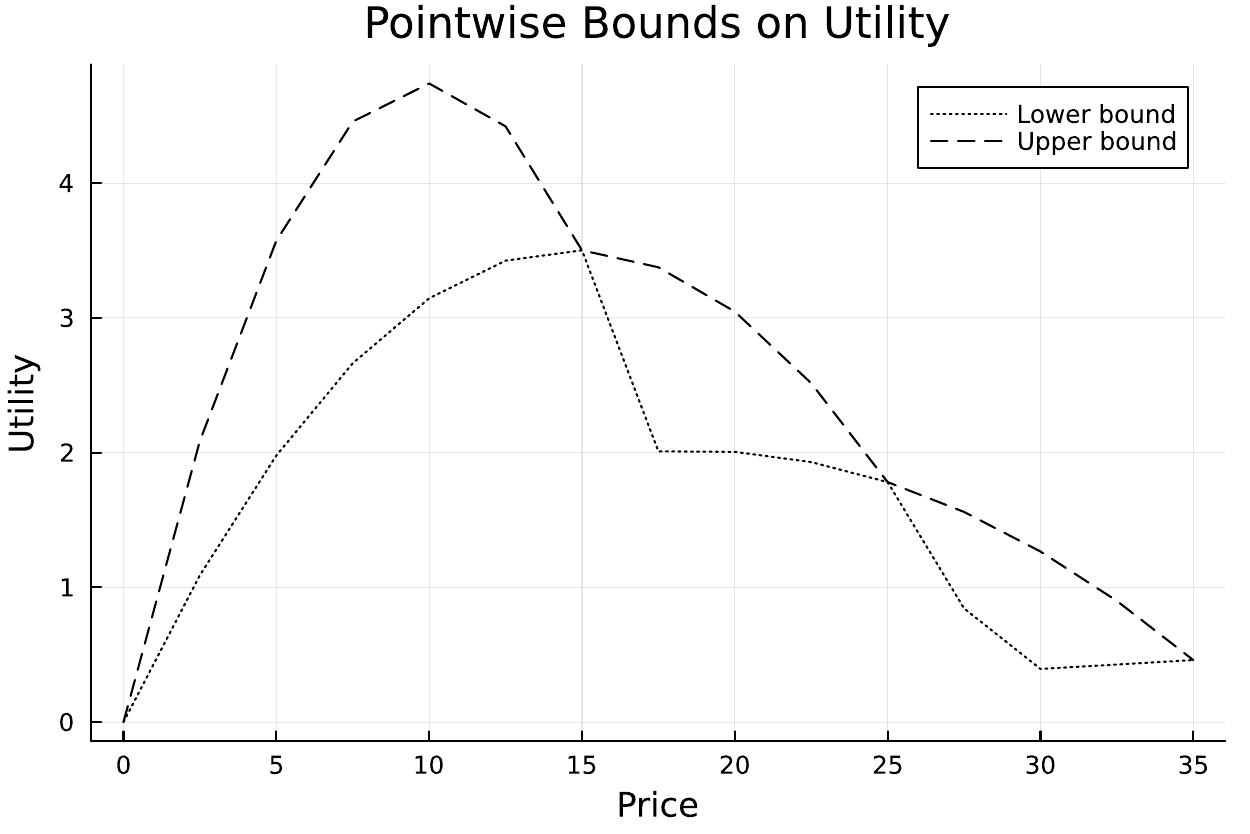}
    \label{worstcaseexample}
    \\
    \small \raggedright
    The maximal and minimal possible expected takeup at each price, 
    and the corresponding bounds on expected utility generated by these bounds on takeup.
\end{figure}

To explore the potential effects of new treatments informally in the simple case 
of no covariates, in Figure \ref{worstcaseexample} I plot pointwise upper and lower bounds 
on takeup at each possible price.  
These are obtained by calculating $\min_{m \in \hat{\mathcal{M}}} m(d)$ and 
$\max_{m \in \hat{\mathcal{M}}} m(d)$ for each $d$.  
Note that the lower bound is not convex.  
This is an illustration of the non-rectangularity induced by the shape restrictions-- 
there is no $m \in \hat{\mathcal{M}}$ that simultaneously minimizes takeup for all prices $d$. 
More generally, not every curve that lies within the pointwise bounds of Figure \ref{worstcaseexample} 
satisfies the shape restrictions.  
This can be expressed formally as 
\begin{align*} 
    \hat{\mathcal{M}}
    \subset
    \bigg\{\tilde m \in \R^J : \min_{m \in \hat{\mathcal{M}}} m \leq \tilde m_j \leq \max_{m \in \hat{\mathcal{M}}} m, \forall j \bigg\} 
\end{align*}
with strict containment.  
This difference is key for the informativeness of the linear program (\ref{gammalp}) 
because regret is defined by comparing the outcomes under the chosen policy 
to those of the first-best policy under the same demand curve $m$.  
If the chosen policy achieves low utility for one demand curve and the first-best policy 
achieves high utility only for a different demand curve, this does not contribute to high regret.

Along with the bounds on takeup in Figure \ref{worstcaseexample}, I also plot the bounds on 
expected utility $v_{m}$ generated by the bounds on takeup.  
These curves illustrate a range of possible outcomes that may result from implementing new treatments. 
The upper bounds on utility illustrate the potential for much better outcomes as a result of 
implementing new treatments, especially in the range of $7.5$ to $12.5$.  
The lower bounds imply the possibility of worse outcomes as well.  
A maximin welfare approach to this problem would not assign a price in 
$\mathcal{D} \setminus \mathcal{D}_0$ to anyone for whom that price was not guaranteed to outperform 
the prices in $\mathcal{D}$.  
This ends up assigning a price of $15$ to the entire sample, 
which seems excessively conservative in this example\footnote{
    It is not generally true that the maximin welfare policy restricts to the original set of 
    treatments.
    See Appendix \ref{appendix-lipschitz} for an example in which the maximin welfare policy
    assigns a new treatment to the population.
}.
On the other hand, the minimax regret approach considers losses relative to the ex-post 
optimal decision in each state of the world represented by $m \in \hat{\mathcal{M}}$.

Given the bounds in Figure \ref{worstcaseexample}, one could imagine naively constructing 
$\hat \Gamma_{jk}$ by comparing the worst possible $v_{m}(d_j)$ to the best possible 
$v_{m}(d_k)$. 
For example, taking $d_k = 7.5$ and $d_j = 10$ would result in an estimate of about $4.5 - 3.2 = 1.3$.  
However, recall that these bounds on $v_{m}$ were constructed from the bounds on $m$.  
Observing the bounds on $m$, it can be seen that the demand curve $m$ which achieves 
maximal takeup at $d=7.5$ and minimal takeup at $d=10$ is not convex, and thus the regret estimate 
obtained by comparing the pointwise bounds is unnecessarily pessimistic.  
Likewise, taking $d_k = 12.5$ and $d_j = 10$ and comparing the pointwise bounds would yield an estimate 
of about $4.4 - 3.2 = 1.2$, but a demand curve which achieves these bounds is not decreasing.  
Formally, 
$\max_{m} [v_{m}(d_k) -  v_{m}(d_j)] \leq \max_{m} v_{m}(d_k) - \min_{m} v_{m}(d_j)$. 
Thus, it is necessary to construct the regret estimates by finding a demand curve $m$ 
which maximizes regret while satisfying the shape restrictions.
This illustrates that the linear program (\ref{gammalp}) defining $\hat \Gamma_{jk}$, 
while requiring more computations than pointwise bounds for each $d_j \in \mathcal{D}$, 
carries additional useful information.

\begin{figure}
    \center
    \caption{Regret-maximizing demand curves for alternative policies}
    \includegraphics[width=0.49\textwidth]{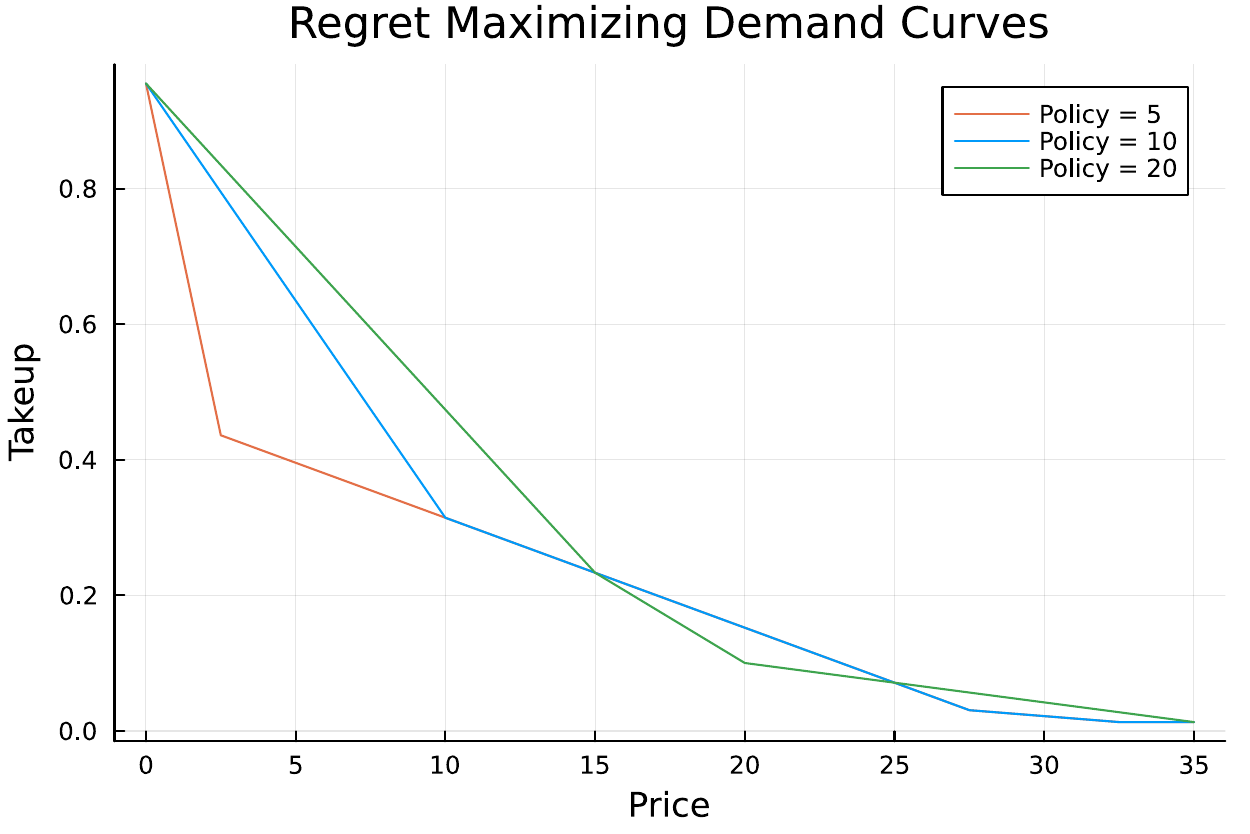}
    \includegraphics[width=0.49\textwidth]{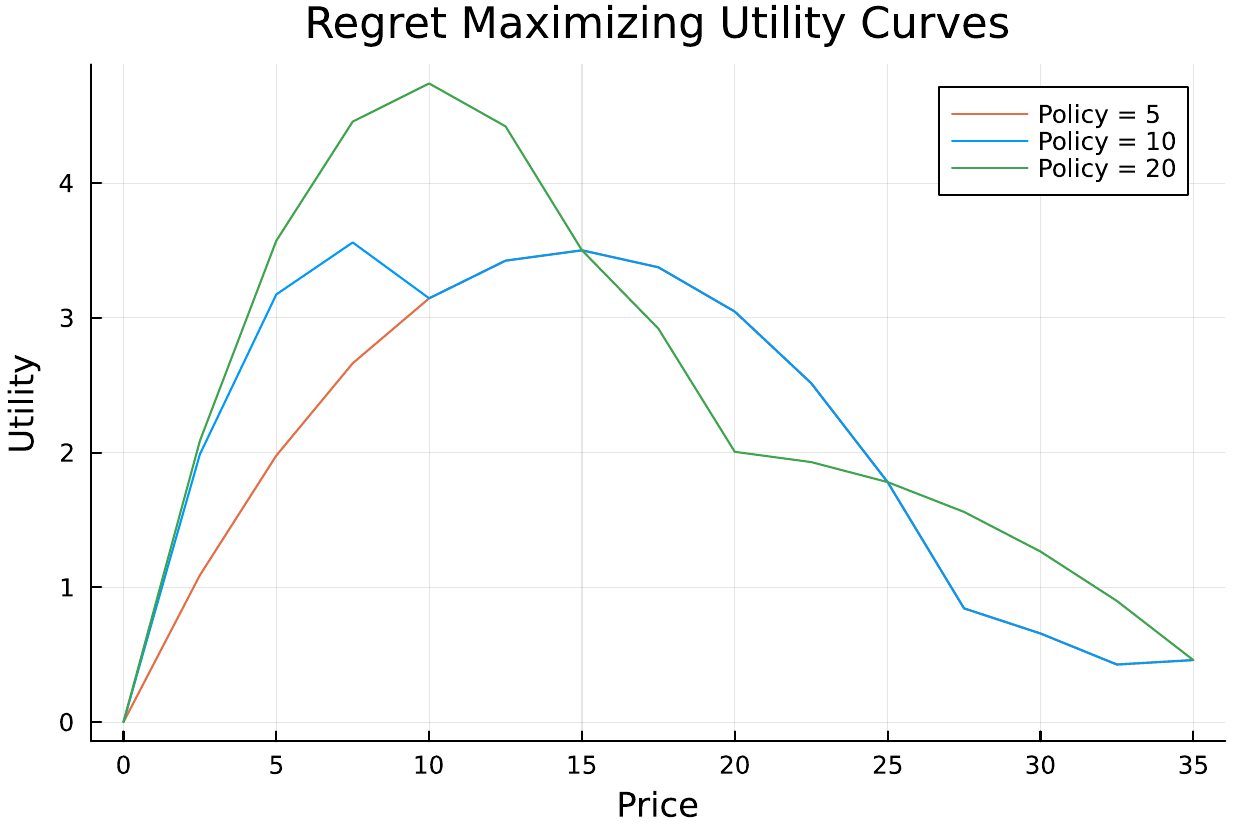}
    \label{maxregret}
    \\
    \small \raggedright
    The demand curves that maximize regret for each of three policies-- 
    assigning $d=5$ to everyone, assigning $d=10$ to everyone, 
    and assigning $d=20$ to everyone.  
    Also plotted are the utility curves generated by these regret-maximizing demand curves.  
    Given a policy, the regret-maximizing demand curve is chosen to achieve low utility 
    at the chosen price but high utility elsewhere. 
\end{figure}

To understand how maximal regret is computed for each policy, 
I plot regret-maximizing demand curves for each of three different policies in Figure \ref{maxregret}.  
In this case, a policy is a single value of $d$ that will be assigned to the entire population. 
The regret-maximizing demand curve is the vector $m$ which solves (\ref{gammaj}) 
with no covariates.  
To compute it, I solve (\ref{gammalp}) for each $k$ and find the $m$ corresponding 
to the optimal $k$. 
Supposing the decision maker assigns a price of $d=5$ to the entire population, 
the regret-maximizing demand curve is chosen to yield low expected utility when $d=5$ 
but high utility for some other price, 
thus incurring high regret in the sense that the chosen policy of $d=5$ was ex-post 
a poor policy compared to, say, a price of $d=15$.  
The same process is enacted for the policies which assign $d=10$ to the entire population 
and $d=20$ to the entire population.  
Under the policy $d=20$, regret is very high because the difference between 
expected utility at $d=20$ and the optimal expected utility 
under the regret-maximizing demand curve is very large.  
Comparatively, the maximum regret incurred under the policy $d=10$ is small.  
Importantly, the regret-maximizing demand curves which generate these worst-case utility curves 
obey the shape restrictions, as can be seen in the left-hand pane of Figure \ref{maxregret}.

\begin{figure}
    \center
    \caption{Regret-maximizing demand curves for $d=10$}
    \includegraphics[width=0.49\textwidth]{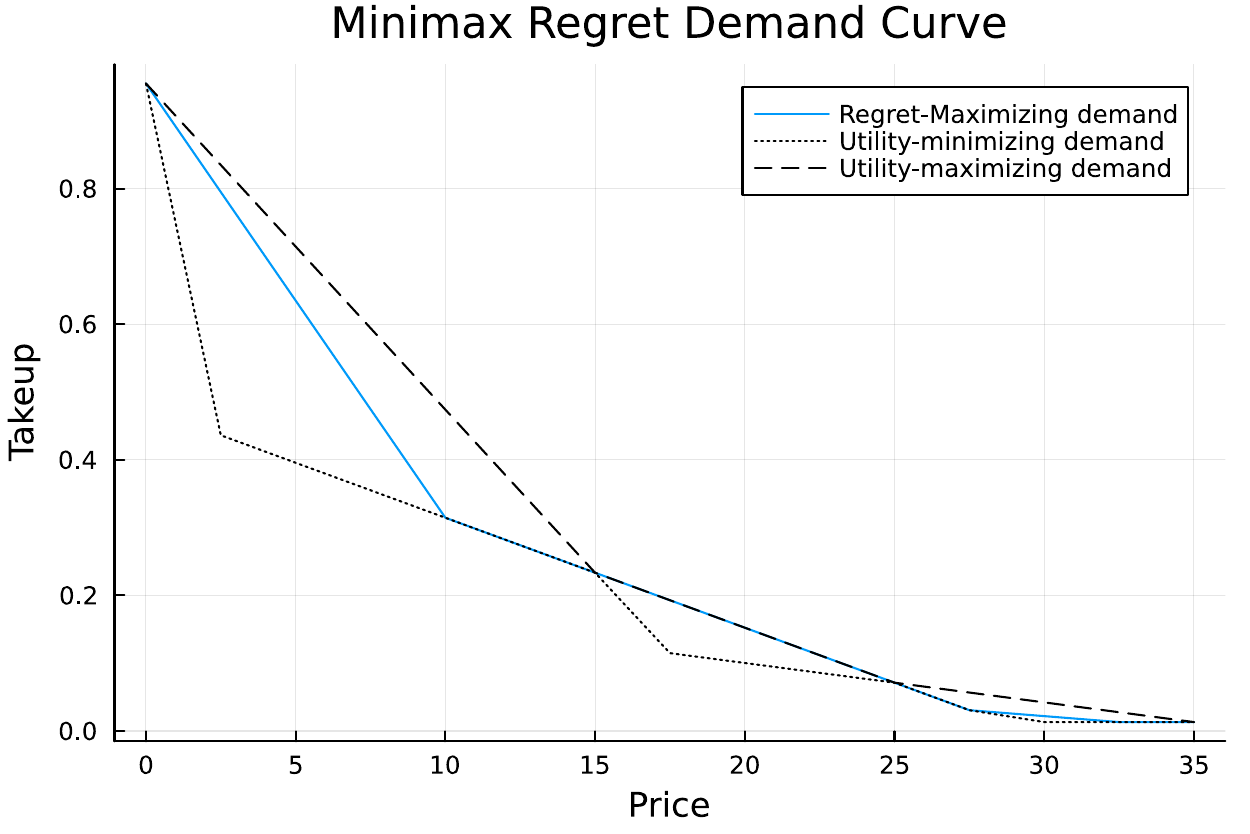}
    \includegraphics[width=0.49\textwidth]{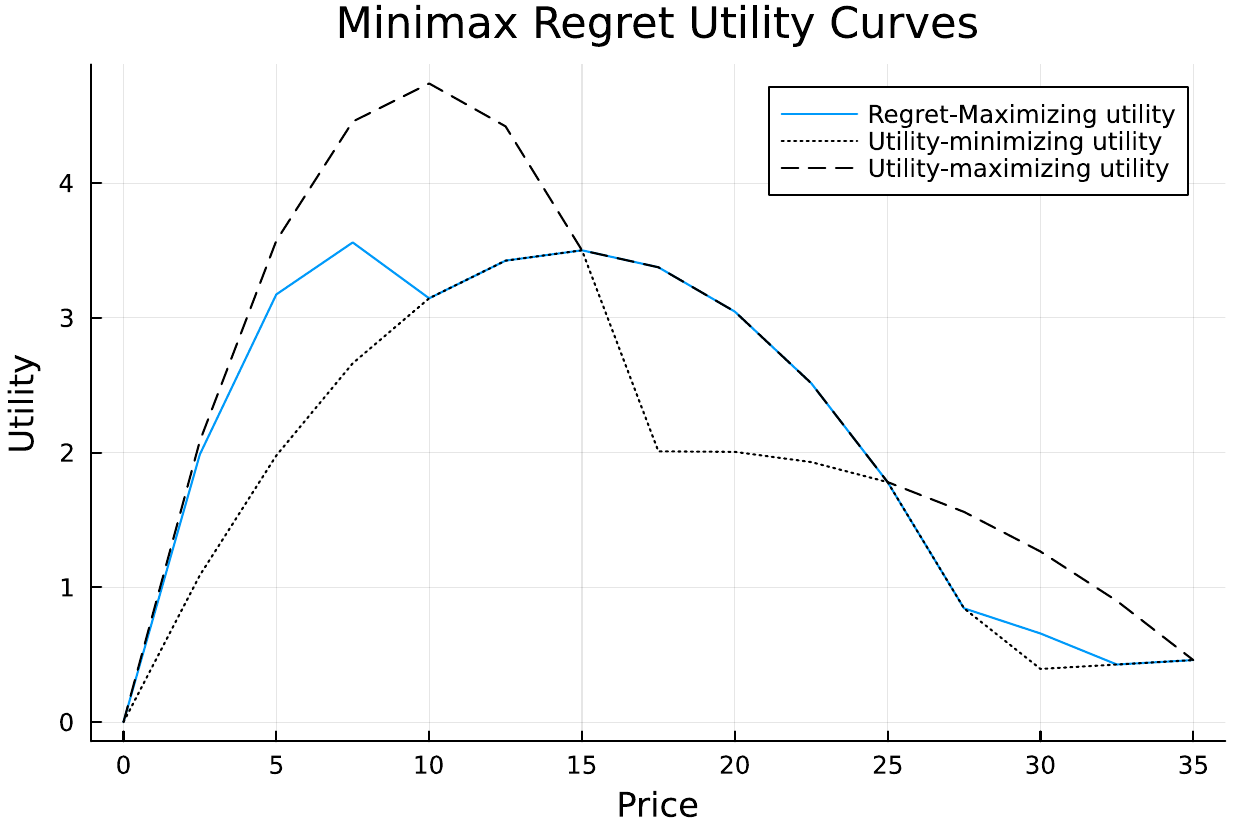}
    \label{optpolicy}
    \\
    \small \raggedright
    The demand curve that achieves maximum regret under the 
    minimax regret-optimal price of $d=10$ and the resulting welfare curve. 
    The curves lie between the pointwise bounds examined in Figure \ref{worstcaseexample}.
\end{figure}

Finally, I compute the optimal policy which does not target based on covariates.
The solution to the empirical minimax regret problem without covariates is given by 
the policy which assigns the price $d=10$ to the population.
This means that across all demand curves $m \in \hat{\mathcal{M}}$, $v_{m}(10)$ 
is uniformly as close as possible to $\max_d v_{m}(d)$.  
To visualize this, in Figure \ref{optpolicy} I overlay the regret-maximizing demand curve 
for the policy $d=10$ on top of the bounds on takeup and welfare plotted in Figure \ref{worstcaseexample}.  
The $m$ which maximizes regret is the one which maximizes utility at $d=7.5$ 
but performs somewhat worse when $d=10$.
This difference between the best possible outcome and the outcome realized under the chosen policy 
is the regret that nature seeks to maximize through the choice of $m$ and the decision maker 
seeks to minimize through the choice of $\pi$.  
Observe that maximum regret, given by $\max_{m} [v_{m}(7.5) -  v_{m}(10)]$, 
is much smaller than a naive comparison of the bounds.  
Hence, an adversarially chosen demand curve in $\hat{\mathcal{M}}$ can make a price of $d=10$ 
perform only mildly suboptimally. 

\subsection{Optimal policy with covariates}

Having illustrated the method for estimating $\hat m_0$, constructing $\hat{\mathcal{M}}$, 
and constructing $\hat \pi$ in the simple case of no covariates, 
I now solve for the optimal policy when the decision maker can target subsidies 
based on household size and income. 
I construct the estimate $\hat m_0(d,x)$ using a Lasso-penalized 
logistic regression of takeup on a dictionary of Chebyshev polynomials in 
household size and income, for each $d\in\mathcal{D}_0$.
As before, I use the shape restrictions that demand is decreasing and convex in $d$
for every $x$.
For some observations, the estimates $\hat m_0(d, X_i)$ violate
these shape restrictions.  
When this happens, I replace the estimates with 
$\arg\min_m \lVert m(d) -\hat m_0(d, X_i)  \rVert$, 
where the minimum is taken over all $m(d) \in \R^{J_0}$ 
that are decreasing and convex in $d$ 
and bounded between $0$ and $1$.  
This ensures that $\hat{\mathcal{M}}$ is nonempty.
These estimates are used to obtain $\hat \Gamma_{j}(X_i)$ for each $i$ and $j$.  

Finally, to estimate the optimal policy, I consider a policy class of 
linear eligibility score rules where each treatment shares the same eligibility score, 
but different cutoffs.
The decision maker chooses a vector of covariate weights $\beta$ and a vector of 
increasing cutoffs $\{c_j\}_{j=0}^{J-1}$.  
A household with covariates $X_i$ receives treatment $d_j$ if 
$c_{j-1} < X_i' \beta \leq c_j$, 
where $c_0 = -\infty$ and $c_J = \infty$. 
I impose that the eligibility score increases with income, implying that poorer 
households receive lower prices.  Formally,
\begin{align} 
    \label{policyclass}
    \Pi = \bigg\{
        \pi : 
        \pi(x) = \sum_{j=1}^{J-1} (d_{j+1} - d_{j}) \1[ X_i ' \beta > c_j], \; 
        c_{j-1} \leq c_j, \; \beta_1 > 0 
    \bigg\}
\end{align}
Appendix \ref{appendix-computation} discusses how this class can be formulated with 
linear and integer constraints, resulting in a mixed integer-linear program formulation 
for the empirical minimax regret problem (\ref{empiricalMMR}).  

\begin{figure}[ht]
    \center 
    \caption{Optimal policy}
    \includegraphics[width=0.8\textwidth]{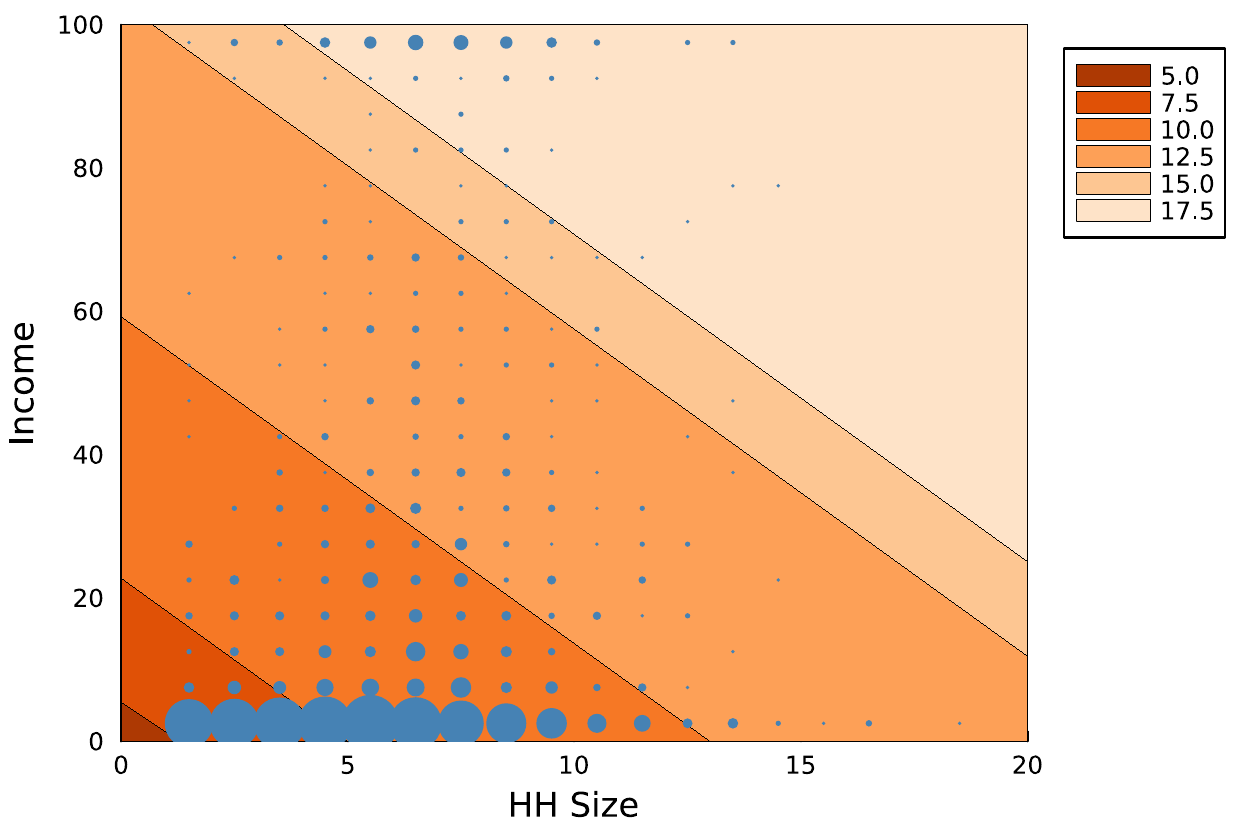}
    \label{results}
    \\
    \small \raggedright
    The estimated optimal treatment allocation as a function of household size and earnings.  
    The size of the dots is proportional to the number of people at each value of covariates.  
    The shaded regions indicate which covariate values are assigned to each treatment.
\end{figure}

\begin{table}[ht]
    \caption{Optimal policy}
    \label{resultstable}
    \centering
    \begin{tabular}{c c c c c c c }
$d$ & 5.0 & 7.5 & 10.0 & 12.5 & 15.0 & 17.5 \\ \hline
\\
\% Treated & 6.6\% & 27.4\% & 52.2\% & 8.3\% & 1.3\% & 4.2\% \\
\\
Cutoff & 1.48 & 6.1 & 15.81 & 27.57 & 31.09 & 42.23 \\
($\beta = [0.266, 1.221]$)
\\ \hline
\\
\end{tabular}

    \\ \small \raggedright 
    Percent of population assigned to each treatment and eligibility score cutoff 
    for each treatment for which a nonzero share of the population was assigned. 
    Households were assigned to treatment $j$ if their score was below cutoff $j$ 
    and above cutoff $j-1$. 
\end{table}

The optimal allocation is illustrated in Figure \ref{results}, and exact estimates of the 
optimal policy along with the fraction of the population assigned to each treatment are 
presented in Table \ref{resultstable}.
The optimal allocation assigns poor, small households the lowest prices as they have the 
lowest willingness to pay.  
Almost the entire population is assigned a price not observed in the experimental data, 
with only $1.3\%$ of the population being assigned to the price $d=15$ which was optimal 
amongst the prices that were used in the experiment.
Most of the population is assigned to a price of $d=7.5$ or $d=10$.

\subsection{Comparison with other policies}

I compare the optimal policy in Figure \ref{results} with two other policies that 
a decision maker might use in the absence of the method proposed in this paper.
I evaluate the performance of these policies in terms of their estimated maximum regret,
and compare this to the estimated maximum regret of the policy proposed in this paper.
These heuristic policies, which are not designed to control maximum regret,
have the potential to perform substantially worse than the minimax regret optimal policy.

For the first benchmark, I estimate the optimal policy using an ad-hoc parametric interpolation
that a decision maker might use to forecast the effects of new treatments.
I estimate takeup using OLS with linear and quadratic terms in household
size and income, motivated by the near-quadratic response to price observed in 
Figure \ref{experimentalresults}.
This means that the identified set $\hat{\mathcal{M}}$ is a singleton,
making the maximization over $\hat{\mathcal{M}}$ trivial.
These takeup estimates are then used to construct a policy which maximizes
estimated utility.
The resulting treatment allocation is shown in Figure \ref{fig:parametric},
and the associated policy is given in Table \ref{table:parametric}.

\begin{figure}[ht]
    \center
    \caption{Optimal policy with parametric interpolation}
    \includegraphics[width=0.8\textwidth]{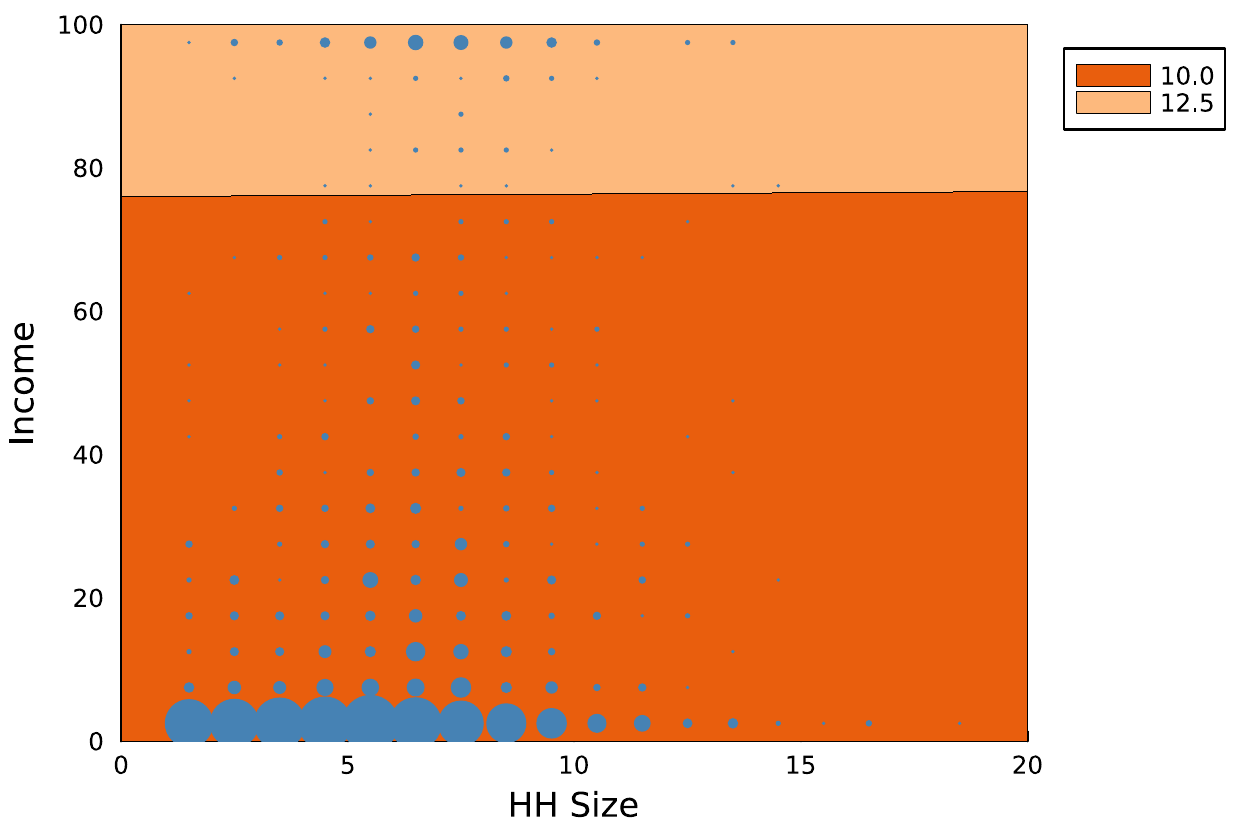}
    \label{fig:parametric}
    \\
    \small \raggedright
    The estimated optimal treatment allocation under the parametric interpolation 
    as a function of household size and earnings.  
    The size of the dots is proportional to the number of people at each value of covariates.  
    The shaded regions indicate which covariate values are assigned to each treatment.
\end{figure}

\begin{table}[ht]
    \caption{Optimal policy with parametric interpolation}
    \label{table:parametric}
    \centering
    \begin{tabular}{c c c }
$d$ & 10.0 & 12.5 \\ \hline
\\
\% Treated & 95.1\% & 4.9\% \\
\\
Cutoff & 2.99 & 3.89 \\
($\beta = [0.039, -0.001]$)
\\ \hline
\\
\end{tabular}

    \\ \small \raggedright 
    Percent of population assigned to each treatment and eligibility score cutoff 
    under the parametric interpolation,
    for each treatment for which a nonzero share of the population was assigned. 
    Households were assigned to treatment $j$ if their score was below cutoff $j$ 
    and above cutoff $j-1$.
\end{table}

For the second benchmark, I estimate the optimal policy under the restriction that the policy 
cannot assign new treatments.
This reflects what a decision maker might do if they did not have a method
for evaluating and choosing policies which involve new treatments.
That is, I take the regret estimates $\hat{\Gamma}_{j}$ from the Lasso model
and use it to construct a minimax regret policy
under the additional restriction that the policy cannot assign new treatments.
This reflects the maximum regret that a policymaker would incur by choosing
not to assign new treatments, even if it were possible to do so.
The resulting treatment allocation is shown in Figure \ref{fig:restricted},
and the associated policy is given in Table \ref{table:restricted}.

\begin{figure}[ht]
    \center
    \caption{Optimal policy with no new treatments}
    \includegraphics[width=0.8\textwidth]{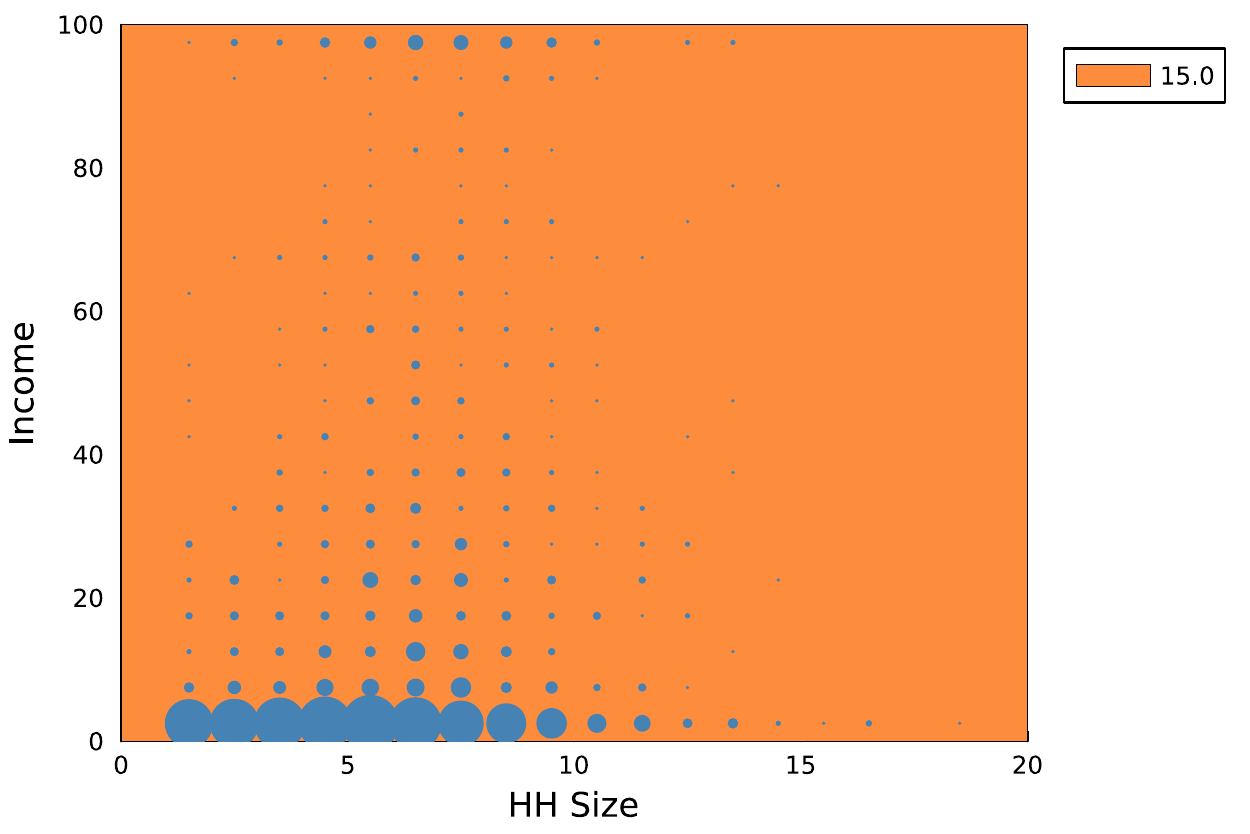}
    \label{fig:restricted}
    \\
    \small \raggedright
    The estimated optimal treatment allocation under the restriction that the policy 
    cannot assign new treatments,
    as a function of household size and earnings.  
    The size of the dots is proportional to the number of people at each value of covariates.  
    The shaded regions indicate which covariate values are assigned to each treatment.
\end{figure}

\begin{table}[ht]
    \caption{Optimal policy with no new treatments}
    \label{table:restricted}
    \centering
    \begin{tabular}{c c }
$d$ & 15.0 \\ \hline
\\
\% Treated & 100.0\% \\
\\
Cutoff & -5.26 \\
($\beta = [0.0, -5.263]$)
\\ \hline
\\
\end{tabular}

    \\ \small \raggedright 
    Percent of population assigned to each treatment and eligibility score cutoff 
    under the restriction that the policy cannot assign new treatments,
    for each treatment for which a nonzero share of the population was assigned. 
    Households were assigned to treatment $j$ if their score was below cutoff $j$ 
    and above cutoff $j-1$.
\end{table}

In Table \ref{regrettable}, I compare the estimated maximum regret of the 
estimated policy $\hat{\pi}$ with the estimated maximum regret of the other 
two policies described above.
By construction, the estimated policy $\hat{\pi}$ minimizes the estimated 
maximum regret.

The parametric extrapolation results in higher estimated regret than the
method proposed in this paper, which takes into account non-identification
of the effects of new treatments.
Thus, whether the decision maker should use the parametric extrapolation
is sensitive to how much they trust the parametric form.
If the parametric model is used only for convenience and the actual identified
set is described by $\mathcal{M}$, then the parametric model may lead to
higher regret than the robust method proposed in this paper.
When the decision maker uses the parametric extrapolation,
they fail to consider the worst-case effects of new treatments,
and hence are overconfident in the benefits of new treatments.

Restricting to the support of the experimental data greatly increases maximum 
regret.
When the decision maker chooses not to assign new treatments,
the worst-case utility function $v_m$ will be high on the set of new treatments 
to make the regret of the chosen policy large
(in the case without covariates, the worst-case utility curve coincides with the
upper bound on utility at $d=10$ in Figure \ref{worstcaseexample}).
By implementing new treatments, the decision maker can ensure they don't miss out
on potentially large gains from these new treatments.
However, it is not guaranteed that the decision maker will choose to implement
new treatments, as they must also ensure the potential downside of implementing
new treatments is not so large as to make worst-case regret higher than
restricting to old treatments.

\begin{table}
    \caption{Comparison of maximum regret of different policies}
    \label{regrettable}
    \centering
    \begin{tabular}{c c c}
Policy & Maximum Regret & Percent Increase \\
\hline
\\
Lasso & 488.19 &  \\
\\
Parametric & 620.75 & 27.15 \% \\
\\
Restricted & 1381.38 & 182.96 \% \\
\hline \\ 
\end{tabular}

    \\ \small \raggedright 
    Estimated maximum regret of the optimal policy, the policy with parametric interpolation, 
    and the policy with no new treatments.
\end{table}

\section{Conclusion}
\label{section-conclusion}
Experiments may not pilot all possible treatments a decision maker may consider.  
The existing literature on policy learning and treatment choice does not offer 
much guidance for how to use data on some treatment values to design policies 
involving new treatment values.  
I use data on previously observed treatments, 
partial identification, and the minimax regret criterion to extend 
empirical welfare maximization methods to settings where new treatments 
may be considered.
Since the effects of new treatments are partially identified, a single policy 
is chosen to uniformly minimize regret across the identified set.  
The empirical minimax regret estimator is computationally tractable and possesses 
favorable regret convergence properties.
In the setting of targeting subsidies to connect to the electrical grid, 
the estimator takes information on a small set of treatments and provides 
informative bounds on the effects of a much richer set new treatments, 
resulting in policies that implement new treatments which are uniformly 
close to optimal in every state of the world.

\pagebreak
\printbibliography
\pagebreak

\appendix

\section{Proof of Theorem \ref{regretconvergence}}
\label{appendix-proofs}
\subsection{Intermediate results}

I first introduce some notation and state existing results that I will use.  
Given a class of functions $\mathcal{F}$, the Rademacher complexity of $\mathcal{F}$ is defined as 
$$
    \mathcal{R}_N(\mathcal{F}) := \E\bigg[ \sup_{f\in\mathcal{F}} \bigg| \frac{1}{N} \sum_i \epsilon_i f(X_i) \bigg| \bigg] 
$$
where $\epsilon_i$ are i.i.d. Rademacher random variables.  

We say $\mathcal{T}$ is a $\delta$-cover of a metric space $(\mathcal{F} , h)$ 
if for every $f \in \mathcal{F}$, there is some $f_i \in \mathcal{T}$ such that $h(f_i, f)\leq \delta$.  
The cardinality of the smallest $\delta$-cover of $\mathcal{F}$ is called the 
\textit{$\delta$-covering number} of $\mathcal{F}$ and is denoted $N(\delta, \mathcal{F}, h)$.

I make use of the following existing results:
\begin{lemma}{(\textcite{kitagawaWhoShouldBe2018} Lemma A.1)}
    \label{subgraph}
    Let $\mathcal{G}$ be a VC-class of subsets of $\mathcal{X}$ with VC dimension $V<\infty$.  
    Let $g$ and $h$ be two given functions from $\mathcal{Y} \times \mathcal{D} \times \mathcal{X}$ to $\R$. Then 
    \begin{align*} 
        \mathcal{F} = \{f : f(y, d, x) = g(y,d,x) \1\{x \in G\} + h(y,d,x) \1\{x \not\in G\}, G \in \mathcal{G} \}
    \end{align*}
    is a VC subgraph class of functions with VC dimension less than or equal to $V$.
\end{lemma}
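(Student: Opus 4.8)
The plan is to argue directly from the definition of a VC-subgraph class. Recall that a class $\mathcal{F}$ of real-valued functions on a set $\mathcal{Z}$ is VC-subgraph with VC dimension at most $\mathbf{v}$ exactly when the collection of its subgraphs $\{\{(z,t)\in\mathcal{Z}\times\R:t<f(z)\}:f\in\mathcal{F}\}$ (with the usual convention on the inequality) is a VC-class of subsets of $\mathcal{Z}\times\R$ with VC dimension at most $\mathbf{v}$. Writing $\mathcal{Z}:=\mathcal{Y}\times\mathcal{D}\times\mathcal{X}$, it therefore suffices to show that no collection of $\mathbf{v}+1$ points in $\mathcal{Z}\times\R$ can be shattered by the subgraphs of functions in $\mathcal{F}$.

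First I would fix arbitrary points $(z_1,t_1),\dots,(z_{\mathbf{v}+1},t_{\mathbf{v}+1})$ with $z_i=(y_i,d_i,x_i)$, and observe that for $f=f_G\in\mathcal{F}$ the indicator that $(z_i,t_i)$ lies in the subgraph of $f_G$ equals
$$
\1\{t_i<g(z_i)\}\,\1\{x_i\in G\}+\1\{t_i<h(z_i)\}\,\1\{x_i\notin G\},
$$
since $g(z_i)$, $h(z_i)$, and $t_i$ are fixed numbers determined by the chosen points. The decisive observation is that this expression depends on $G$ only through the bit $\1\{x_i\in G\}$: the coefficients $\1\{t_i<g(z_i)\}$ and $\1\{t_i<h(z_i)\}$ do not involve $G$. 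Hence the subset of $\{1,\dots,\mathbf{v}+1\}$ that the subgraph of $f_G$ picks out is a fixed function of the binary vector $(\1\{x_1\in G\},\dots,\1\{x_{\mathbf{v}+1}\in G\})$, i.e., of the trace $G\cap\{x_1,\dots,x_{\mathbf{v}+1}\}$.

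Next I would count the traces. As $G$ ranges over $\mathcal{G}$, the number of distinct values of $G\cap\{x_1,\dots,x_{\mathbf{v}+1}\}$ is strictly less than $2^{\mathbf{v}+1}$: if the $x_i$ are distinct this is immediate because $\mathcal{G}$, having VC dimension $\mathbf{v}$, cannot shatter $\mathbf{v}+1$ points, and if some $x_i$ coincide the count is at most $2^m$ for some $m\le\mathbf{v}$, which is again strictly smaller. Since a function can only collapse cardinality, the number of subsets of $\{(z_1,t_1),\dots,(z_{\mathbf{v}+1},t_{\mathbf{v}+1})\}$ realizable as traces of subgraphs of $\mathcal{F}$ is strictly less than $2^{\mathbf{v}+1}$, so these points are not shattered. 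As the points were arbitrary, the subgraph class has VC dimension at most $\mathbf{v}$, which is the assertion.

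I do not expect a genuine obstacle here; the argument is purely combinatorial. The only points requiring care are (i) fixing the subgraph convention (strict versus weak inequality, and the range of the extra coordinate), which does not affect anything, and (ii) the degenerate case of repeated $x_i$'s, which the same counting bound absorbs. It is also worth noting explicitly that the full Sauer–Shelah lemma is not needed — only the trivial fact that a class of VC dimension $\mathbf{v}$ cannot shatter $\mathbf{v}+1$ points.
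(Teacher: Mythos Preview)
Your argument is correct. The paper does not supply its own proof of this lemma; it is quoted as an existing result from \textcite{kitagawa2018should} (their Lemma A.1) and used without demonstration. Your direct combinatorial argument---observing that the subgraph trace on any $\mathbf{v}+1$ points factors through the $\mathcal{G}$-trace on the corresponding $x$-coordinates, and then invoking the definition of VC dimension---is the standard proof and matches what one finds in the original reference. There is nothing to compare beyond noting that you have filled in a proof the present paper omits by citation.
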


\begin{lemma}{(Symmetrization) (\textcite{vandervaartWeakConvergence1996} Lemma 2.3.1)}
    \label{Symmetrization}
    For a class of measurable functions $\mathcal{F}$ and i.i.d random variables $X_1, \dots, X_N$,
    $$
    \E\bigg[\sup_{f\in\mathcal{F}} \bigg| \frac{1}{N}\sum_{i=1}^N f(X_i) - \E[f(X_i)] 
    \bigg| \bigg]\leq 2 \mathcal{R}_N(\mathcal{F})
    $$
\end{lemma}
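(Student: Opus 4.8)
The plan is to establish the symmetrization inequality by the classical \emph{ghost sample} argument. First I would introduce an independent copy $X_1', \dots, X_N'$ of the sample, drawn from the same law and independent of $(X_1,\dots,X_N)$. Writing $\E'$ for expectation over the ghost sample alone, the starting observation is that $\E[f(X_i)] = \E'[f(X_i')]$ for every $i$ and every $f$, so the centered average admits the representation
\begin{align*}
    \frac{1}{N}\sum_{i=1}^N f(X_i) - \E[f(X_i)] = \E'\bigg[\frac{1}{N}\sum_{i=1}^N \big(f(X_i) - f(X_i')\big)\bigg].
\end{align*}
Taking the supremum over $f$ and the absolute value, and then applying Jensen's inequality to pull the inner expectation $\E'$ outside both the supremum and the modulus (both convex operations), would give
\begin{align*}
    \E\bigg[\sup_{f\in\mathcal{F}} \bigg|\frac{1}{N}\sum_{i=1}^N f(X_i) - \E[f(X_i)]\bigg|\bigg]
    \leq \E\bigg[\sup_{f\in\mathcal{F}} \bigg|\frac{1}{N}\sum_{i=1}^N \big(f(X_i) - f(X_i')\big)\bigg|\bigg],
\end{align*}
where the outer expectation on the right now ranges over both samples jointly.

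The second step would insert the Rademacher signs. Since $X_i$ and $X_i'$ are i.i.d., for each fixed $i$ the difference $f(X_i) - f(X_i')$ is symmetric about zero, and interchanging $X_i \leftrightarrow X_i'$ simply flips its sign simultaneously across all $f$. Consequently, for any fixed realization of signs $\epsilon_1,\dots,\epsilon_N$, the joint law of the process $\big(\tfrac{1}{N}\sum_i \epsilon_i(f(X_i)-f(X_i'))\big)_{f\in\mathcal{F}}$ coincides with that of $\big(\tfrac{1}{N}\sum_i (f(X_i)-f(X_i'))\big)_{f\in\mathcal{F}}$, so averaging over independent Rademacher $\epsilon_i$ leaves the right-hand side unchanged:
\begin{align*}
    \E\bigg[\sup_{f\in\mathcal{F}} \bigg|\frac{1}{N}\sum_{i=1}^N \big(f(X_i) - f(X_i')\big)\bigg|\bigg]
    = \E\bigg[\sup_{f\in\mathcal{F}} \bigg|\frac{1}{N}\sum_{i=1}^N \epsilon_i\big(f(X_i) - f(X_i')\big)\bigg|\bigg].
\end{align*}

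Finally I would split $\epsilon_i(f(X_i) - f(X_i'))$ into $\epsilon_i f(X_i)$ and $\epsilon_i f(X_i')$ via the triangle inequality, bounding the supremum of the difference by the sum of the two suprema. Taking expectations, each of the two resulting terms equals $\mathcal{R}_N(\mathcal{F})$ because $X$ and $X'$ share the same distribution, which produces the factor of $2$ and closes the argument. The step demanding the most care is the sign insertion: its rigorous justification is a conditioning (Fubini) argument showing that, for each sign pattern, the coordinatewise swaps preserve the joint distribution of the pairs $(X_i,X_i')$, together with the standard measurability caveats surrounding $\sup_{f\in\mathcal{F}}$, which \textcite{van1996empirical} dispatch using outer expectations. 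Since the lemma enters the present paper only as an off-the-shelf tool, these technicalities can be cited rather than reproduced.
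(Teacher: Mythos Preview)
Your argument is the standard ghost-sample symmetrization proof and is correct. Note, however, that the paper does not actually prove this lemma: it is listed among the ``existing results'' and simply cited from \textcite{van1996empirical} Lemma 2.3.1, so there is no in-paper proof to compare against---your proposal is precisely the classical argument that the cited reference contains.
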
 

\begin{lemma}{(Dudley's entropy integral inequality) (\textcite{vandervaartWeakConvergence1996} Corollary 2.2.8)}
    \label{Dudley}
    Let $(Z_f)_{f\in\mathcal{F}}$ be a separable process with sub-Gaussian increments. 
    Then for some constant $K$, we have for any $f_0$
    $$
        \E[\sup_{f\in\mathcal{F}} |Z_f|] \leq \E[|Z_{f_0}|] + K \int_0^{\infty} \sqrt{\log N(t, \mathcal{F}, h)} dt
    $$
    for some constant $K$.
\end{lemma}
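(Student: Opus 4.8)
The plan is to prove this classical bound by the \emph{chaining} method. First I would peel off the fixed index: writing $Z_f = (Z_f - Z_{f_0}) + Z_{f_0}$ and using the triangle inequality gives $\E[\sup_f |Z_f|] \leq \E[|Z_{f_0}|] + \E[\sup_f |Z_f - Z_{f_0}|]$, so it suffices to bound the centered supremum by a constant times the entropy integral. I may assume $\mathcal{F}$ is totally bounded with finite diameter $D$ under $h$, since otherwise the integral is infinite and the inequality is vacuous. Here the sub-Gaussian increment hypothesis means $\E[\exp(\lambda(Z_f - Z_g))] \leq \exp(\lambda^2 h(f,g)^2 / 2)$ for all $\lambda$ and all $f, g$.

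Next I would build a dyadic sequence of nets. For each integer $j \geq 0$ put $\delta_j = D \cdot 2^{-j}$ and let $\mathcal{T}_j$ be a minimal $\delta_j$-cover of $\mathcal{F}$, so that $|\mathcal{T}_j| = N(\delta_j, \mathcal{F}, h)$, taking $\mathcal{T}_0 = \{f_0\}$. For each $f$ I select a nearest net point $\pi_j(f) \in \mathcal{T}_j$ with $h(f, \pi_j(f)) \leq \delta_j$. Separability of the process justifies passing to a version for which $Z_{\pi_j(f)} \to Z_f$, so I may write the telescoping identity $Z_f - Z_{f_0} = \sum_{j \geq 0} (Z_{\pi_{j+1}(f)} - Z_{\pi_j(f)})$ and therefore $\sup_f |Z_f - Z_{f_0}| \leq \sum_{j \geq 0} \sup_f |Z_{\pi_{j+1}(f)} - Z_{\pi_j(f)}|$.

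I would then control each level with a sub-Gaussian maximal inequality. Each link satisfies $h(\pi_{j+1}(f), \pi_j(f)) \leq \delta_{j+1} + \delta_j \leq 2\delta_j$, so $Z_{\pi_{j+1}(f)} - Z_{\pi_j(f)}$ is sub-Gaussian with parameter at most a constant multiple of $\delta_j$, and the number of distinct such links is at most $|\mathcal{T}_{j+1}| \cdot |\mathcal{T}_j| \leq N(\delta_{j+1}, \mathcal{F}, h)^2$. The elementary bound $\E[\max_{i \leq M} |X_i|] \leq C \sigma \sqrt{\log(1 + M)}$ for mean-zero sub-Gaussian variables with parameter $\sigma$---obtained by applying Jensen to the moment generating function of the maximum together with a union bound---then yields $\E[\sup_f |Z_{\pi_{j+1}(f)} - Z_{\pi_j(f)}|] \leq C' \delta_j \sqrt{\log N(\delta_{j+1}, \mathcal{F}, h)}$.

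Finally I would convert the series $\sum_{j \geq 0} \delta_j \sqrt{\log N(\delta_{j+1}, \mathcal{F}, h)}$ into the entropy integral. Since $\delta_j = 2(\delta_j - \delta_{j+1})$ and $t \mapsto \sqrt{\log N(t, \mathcal{F}, h)}$ is nonincreasing, each summand is dominated by a constant multiple of $\int_{\delta_{j+1}}^{\delta_j} \sqrt{\log N(t, \mathcal{F}, h)} \, dt$; summing telescopes these intervals to give $\int_0^D \sqrt{\log N(t, \mathcal{F}, h)} \, dt$, which equals the stated $\int_0^\infty \sqrt{\log N(t, \mathcal{F}, h)} \, dt$ since $N(t, \mathcal{F}, h) = 1$ and the integrand vanishes for $t \geq D$. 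Collecting constants into $K$ delivers the claim. The main obstacle is the discrete-to-continuous bookkeeping in this last step---bounding the link counts at each scale through the nested covers and aligning the geometric mesh $\delta_j$ with the decreasing integrand so that the resulting constant $K$ is genuinely universal---while the separability assumption is exactly what legitimizes the chaining limit and the measurability of $\sup_f |Z_f|$.
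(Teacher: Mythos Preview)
Your chaining argument is the standard proof of Dudley's inequality and is essentially correct. Note, however, that the paper does not supply its own proof of this lemma: it is stated as an existing result and cited directly from \textcite{van1996empirical}, Corollary~2.2.8, so there is no in-paper argument to compare against. Your proposal therefore goes beyond what the paper does by actually proving the cited result. One small bookkeeping point: in the sum-to-integral conversion you want to dominate $\delta_j\sqrt{\log N(\delta_{j+1},\mathcal{F},h)}$ by a constant times the integral over an interval on which the integrand is at least $\sqrt{\log N(\delta_{j+1},\mathcal{F},h)}$; since the integrand is nonincreasing, the natural interval is $[\delta_{j+2},\delta_{j+1}]$ rather than $[\delta_{j+1},\delta_j]$, but this only shifts the constant and does not affect the argument.
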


\begin{lemma}{(\textcite{vandervaartWeakConvergence1996} Theorem 2.6.7)}
    \label{VC}
    Suppose $\mathcal{F}$ is a VC subgraph class with VC dimension at most $V < \infty$ 
    and suppose $\mathcal{F}$ has a measurable envelope function $F$.  
    For $q\geq 1$ let $P$ be a probability measure such that $\lVert F \rVert_{q,P} >0$. Then 
    $$
    N(\delta \lVert F \rVert_{L_q(P)}, \mathcal{F}, L_q(P)) \leq KV(16e)^V(1/\delta)^{q(V-1)}
    $$
    for some constant $K$ and $0<\delta<1$.
\end{lemma}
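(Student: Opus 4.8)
The plan is to prove this classical uniform entropy bound (van der Vaart--Wellner Theorem 2.6.7) in two stages: first reduce the covering problem for the function class $\mathcal{F}$ to a covering problem for a VC class of \emph{sets} in $\mathcal{X}\times\R$, and then bound the covering numbers of that set class by combining Sauer's lemma with a random-sampling (probabilistic extraction) argument. The normalization by $\lVert F\rVert_{q,P}$ and the exponent $q(\mathbf{v}-1)$ should both fall out of these two reductions.

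First I would pass to subgraphs. By definition of a VC-subgraph class, the collection $\mathcal{C} = \{\,\mathrm{sub}(f): f\in\mathcal{F}\,\}$ of subgraphs (the regions between the graph of $f$ and the horizontal axis) is a VC class of subsets of $\mathcal{X}\times\R$ with VC index at most $\mathbf{v}$. To transfer the $L_q(P)$ geometry to these sets, I would introduce the auxiliary probability measure $R$ on $\mathcal{X}\times\R$ that places the mass $d\nu(t)=q|t|^{q-1}dt$ on the vertical fibre over $x$ restricted to $|t|\le F(x)$, normalized by $\int F^q\,dP=\lVert F\rVert_{q,P}^q$. The elementary inequality $|a-b|^q \le |a^q-b^q|$ for $a,b\ge 0$ and $q\ge 1$ then gives, for $f,g$ bounded by $F$, the comparison $R(\mathrm{sub}(f)\,\triangle\,\mathrm{sub}(g)) \ge \lVert f-g\rVert_{q,P}^q / \lVert F\rVert_{q,P}^q$ (signed functions are handled analogously at the cost of a constant factor, which is the source of the $16e$). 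Consequently any $\delta\lVert F\rVert_{q,P}$-separated family in $L_q(P)$ maps to a family of subgraphs that is $\delta^q$-separated in $L_1(R)$, so it suffices to bound the packing number $D(\delta^q,\mathcal{C},L_1(R))$, with the exponent $q(\mathbf{v}-1)$ produced by the substitution $\epsilon=\delta^q$ at the end.

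Next I would bound the packing number of the VC class of sets (van der Vaart--Wellner Theorem 2.6.4). Let $C_1,\dots,C_m$ be $\epsilon$-separated in $L_1(R)$, so $R(C_i\triangle C_j)>\epsilon$ for $i\ne j$. Drawing $Z_1,\dots,Z_n$ i.i.d.\ from $R$, the probability that a fixed pair fails to be separated by the sample, i.e.\ that no $Z_k$ lands in $C_i\triangle C_j$, is $(1-R(C_i\triangle C_j))^n\le e^{-n\epsilon}$; a union bound over the $\binom{m}{2}$ pairs shows that for $n$ slightly larger than $\epsilon^{-1}\log m$ there is positive probability that all $m$ sets cut out distinct subsets of $\{Z_1,\dots,Z_n\}$. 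Hence $m$ is at most the shatter coefficient $\Delta_{\mathcal{C}}(n)$, which Sauer's lemma controls by the degree-$(\mathbf{v}-1)$ polynomial $(en/(\mathbf{v}-1))^{\mathbf{v}-1}$. Substituting the chosen $n$ and solving the resulting inequality for $m$ yields a bound of the form $K\mathbf{v}(16e)^{\mathbf{v}}\epsilon^{-(\mathbf{v}-1)}$, and combining this with the first step delivers exactly the stated inequality.

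The main obstacle is the final algebraic step: the crude union-bound choice $n\asymp \epsilon^{-1}\log m$ leaves an extraneous $(\log m)^{\mathbf{v}-1}$ factor, giving only $\epsilon^{-(\mathbf{v}-1)}$ \emph{up to logarithmic factors} rather than the clean polynomial. Removing this logarithm, so that the exponent is exactly $\mathbf{v}-1$ with no residual $\log(1/\delta)$, requires the sharper packing bound of Haussler, proved by a more delicate double-sampling / one-inclusion-graph argument in place of the naive union bound. I would invoke Haussler's packing inequality at precisely this point; the two reductions above are routine, and this refinement is the only genuinely delicate ingredient.
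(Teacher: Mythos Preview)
The paper does not contain a proof of this lemma. It is listed in Appendix~\ref{proofappendix} among the ``existing results that I will use,'' with a bare citation to \textcite{van1996empirical} Theorem~2.6.7, alongside symmetrization and Dudley's entropy integral; it is then invoked as a black box inside the proof of Lemma~\ref{eplemma}. There is therefore no paper-side argument to compare your proposal against.

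That said, your sketch is an accurate outline of the proof as it appears in the cited source. The two reductions you describe---passing from $\mathcal{F}$ to the VC class of subgraphs via the weighted auxiliary measure on $\mathcal{X}\times\R$, and then bounding packing numbers of the set class by combining Sauer's lemma with a random-sampling argument---are precisely the structure of van der Vaart--Wellner's Theorems~2.6.4 and~2.6.7, and you correctly flag that the logarithm-free exponent $q(\mathbf{v}-1)$ requires Haussler's sharper packing inequality in place of the naive union bound. For the purposes of this paper, however, none of that machinery is reproduced: the lemma is simply quoted, so a one-line citation would match the paper exactly.
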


I now state and prove a useful result which establishes the relationship between
the complexity of the policy class $\Pi$ and the complexity of the class of 
regret estimates.  
Define the following function classes:
\begin{align*}
    & \mathcal{F} := \{ f :  f(x) = \sum_{j=1}^J \pi_j(x) \Gamma_j(x), \pi \in \Pi\}
    \\
    & \Pi_j := \{\pi_j : \pi_j(x) = \1[\pi(x) = d_j], \pi\in\Pi\}, \quad \forall j \in \{1,\dots, J \}
\end{align*}
The assumption that $\mathcal{F}$ is separable will be maintained.

\begin{lemma}\label{covering lemma}
    Under Assumption $\ref{DGP}.2$,
    \begin{align*}
        N(\delta, \mathcal{F}, L_2(P_N)) &\leq \prod_{j=1}^J N(\epsilon, \Pi_j, L_2(P_N))
    \end{align*}
    where $\epsilon = \delta/(B\sqrt{J})$ and $B$ is the bound on $v(d,x)$ implied by Assumption \ref{DGP}.2.
\end{lemma}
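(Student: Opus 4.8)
The plan is to bound the covering number of $\mathcal{F}$ by relating distances between functions in $\mathcal{F}$ to distances between the underlying policy indicator profiles $(\pi_1,\dots,\pi_J)$, and then cover the product space $\Pi_1\times\cdots\times\Pi_J$ coordinatewise. First I would fix a sample $X_1,\dots,X_N$, hence the empirical measure $P_N$, and take two policies $\pi,\tilde\pi\in\Pi$ with corresponding functions $f,\tilde f\in\mathcal{F}$ given by $f(x)=\sum_{j=1}^J\pi_j(x)\Gamma_j(x)$ and $\tilde f(x)=\sum_{j=1}^J\tilde\pi_j(x)\Gamma_j(x)$. The pointwise difference is $f(x)-\tilde f(x)=\sum_{j=1}^J(\pi_j(x)-\tilde\pi_j(x))\Gamma_j(x)$, and since $|\Gamma_j(x)|\le B$ (each $\Gamma_j$ is a difference of two values $v(d,x)$, each bounded by $B$ under Assumption \ref{DGP}.2 — I would note $\Gamma_j\le 2B$ or more carefully track the constant, but the stated $B$ is fine if $B$ already denotes the relevant bound on regret contributions), I can apply the triangle inequality in $L_2(P_N)$ to get $\lVert f-\tilde f\rVert_{L_2(P_N)}\le B\sum_{j=1}^J\lVert \pi_j-\tilde\pi_j\rVert_{L_2(P_N)}$.

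Next I would apply Cauchy–Schwarz to the sum over $j$: $\sum_{j=1}^J\lVert\pi_j-\tilde\pi_j\rVert_{L_2(P_N)}\le\sqrt{J}\bigl(\sum_{j=1}^J\lVert\pi_j-\tilde\pi_j\rVert_{L_2(P_N)}^2\bigr)^{1/2}$, so that $\lVert f-\tilde f\rVert_{L_2(P_N)}\le B\sqrt{J}\,\max_j\lVert\pi_j-\tilde\pi_j\rVert_{L_2(P_N)}$ after bounding each term in the sum-of-squares by the max (which loses a little but is cleaner; alternatively keep the $\ell_2$ aggregate). The point is that if each $\tilde\pi_j$ is within $\epsilon=\delta/(B\sqrt J)$ of $\pi_j$ in $L_2(P_N)$, then $\tilde f$ is within $\delta$ of $f$. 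Therefore, given an $\epsilon$-cover $\mathcal{T}_j$ of $\Pi_j$ for each $j$, the set of functions $\{x\mapsto\sum_j\tau_j(x)\Gamma_j(x):\tau_j\in\mathcal{T}_j\}$ — whose cardinality is at most $\prod_{j=1}^J|\mathcal{T}_j|=\prod_{j=1}^J N(\epsilon,\Pi_j,L_2(P_N))$ — is a $\delta$-cover of $\mathcal{F}$. Taking the $\mathcal{T}_j$ to be minimal covers gives the claimed inequality.

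One subtlety I would flag: a profile $(\tau_1,\dots,\tau_J)$ of covering-set elements need not itself arise from a single policy $\pi\in\Pi$ (the $\tau_j$ may not partition $\mathcal{X}$), but this is harmless — the covering set need not be a subset of $\mathcal{F}$, only close to every element of $\mathcal{F}$ in $L_2(P_N)$, and the functions $x\mapsto\sum_j\tau_j(x)\Gamma_j(x)$ are perfectly well-defined for any choice of $\tau_j$. The main obstacle, such as it is, is bookkeeping the constant: one must make sure that the bound $B$ referenced in the statement is indeed an almost-sure bound on $|\Gamma_j(x)|$ (or absorb the factor $2$ into $B$), since $\Gamma_{j,P}(x)=\max_{m\in\mathcal{M}_P}(\max_d v_m(d,x)-v_m(d_j,x))$ is a difference of two quantities each bounded by $B$. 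Given the phrasing "$B$ is the bound on $v(d,x)$," I would either state the cover with $\epsilon=\delta/(2B\sqrt J)$ or remark that $|\Gamma_j|\le 2B$ and rescale; this is the only place a reader might want care, and the rest is the triangle inequality and Cauchy–Schwarz as above.
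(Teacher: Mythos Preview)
Your approach is correct in substance but takes a genuinely different route from the paper. The paper writes $f$ in its nested form $f(x)=\max_k\max_{m\in\mathcal{M}_P}\bigl(v_m(d_k,x)-\sum_j\pi_j(x)v_m(d_j,x)\bigr)$, introduces the maximizers $(k^*,m^*)$ for $f$ and $(\tilde k,\tilde m)$ for $\tilde f$, and uses an optimality argument to sandwich $f(x)-\tilde f(x)$ between two expressions of the form $\sum_j(\tilde\pi_j(x)-\pi_j(x))v_m(d_j,x)$, after which it applies Cauchy--Schwarz pointwise in $\mathbb{R}^J$, squares, and integrates. You instead exploit directly the linear representation $f(x)=\sum_j\pi_j(x)\Gamma_j(x)$ with $\Gamma_j$ fixed in $\pi$, so the difference is immediately $\sum_j(\pi_j-\tilde\pi_j)\Gamma_j$ and the optimality argument is unnecessary. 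Your route is more elementary and arguably cleaner; the paper's route has the virtue of not relying on the reduction to the $\Gamma_j$ form, but that reduction is already established in the text.

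One arithmetic slip to fix: your claim $\lVert f-\tilde f\rVert_{L_2(P_N)}\le B\sqrt{J}\,\max_j\lVert\pi_j-\tilde\pi_j\rVert_{L_2(P_N)}$ does not follow from your chain of inequalities. From $\lVert f-\tilde f\rVert\le B\sum_j\lVert\pi_j-\tilde\pi_j\rVert$ you get at best $BJ\epsilon$ when each $\lVert\pi_j-\tilde\pi_j\rVert\le\epsilon$; the Cauchy--Schwarz step $\sum_j a_j\le\sqrt{J}\bigl(\sum_j a_j^2\bigr)^{1/2}$ followed by $\bigl(\sum_j a_j^2\bigr)^{1/2}\le\sqrt{J}\max_j a_j$ returns the same $J\max_j a_j$, not $\sqrt{J}\max_j a_j$. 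So the exact constant in the lemma statement would require $\epsilon=\delta/(BJ)$ rather than $\delta/(B\sqrt{J})$ via your route. As you already note for the $B$ versus $2B$ issue, this is bookkeeping only: it does not affect the product-of-covering-numbers structure, and the downstream use in Lemma~\ref{eplemma} (where the entropy integral absorbs constants depending on $B$ and $J$) is unaffected.
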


\begin{proof}
    Let $f \in \mathcal{F}$ be given. From the discussion in Section \ref{section-decision}, $f$ can be written as 
    \begin{align}
        f(x) &= \max_k \max_{m \in \mathcal{M}_P} \bigg( v_{m}(d_k,x) - \sum_{j=1}^J \pi_j(x)  v_{m}(d_j,x) \bigg)
        \label{f}
    \end{align} 
    for some $\pi \in \Pi$. 
    For each treatment $j$, let $\mathcal{T}_j$ be an $\epsilon$-cover of $\Pi_j$ 
    and let $\tilde \pi_j$ be an element of $\mathcal{T}_j$ satisfying 
    $\E_N[(\pi_j(X) - \tilde \pi_j(X))^2]^{1/2} \leq \epsilon$.  
    Define the approximating function $\tilde f$ by 
    \begin{align}
        \tilde f(x) = \max_k \max_{m \in \mathcal{M}_P}  v_{m}(d_k,x) - \sum_{j=1}^J \tilde \pi_j(x) v_{m}(d_j,x) 
        \label{tilde_f}
    \end{align}  
    Finally, let $k^*, m^*(x)$ be maximizers of (\ref{f}) and let $\tilde k, \tilde m(x)$ be maximizers of (\ref{tilde_f}).

    For each $x$ we have by the optimality of $k^*$ and $m^*$
    \begin{align*} 
        f(x) - \tilde f(x)
        &= v_{m^*}(d_{k^*},x) - v_{\tilde m}(d_{\tilde k},x) 
        + \sum_{j=1}^J \bigg[ \tilde \pi_j(x) v_{m^*}(d_j,x) - \pi_j(x)  v_{\tilde m}(d_j,x) \bigg]
        \\
        &\geq  v_{\tilde m}(d_{\tilde k},x) - v_{\tilde m}(d_{\tilde k},x) 
        + \sum_{j=1}^J \bigg[ \tilde \pi_j(x) v_{\tilde m}(d_j,x) - \pi_j(x)  v_{\tilde m}(d_j,x) \bigg]
        \\
        &= \sum_{j=1}^J \bigg[ (\tilde \pi_j(x) - \pi_j(x)) v_{\tilde m}(d_j, x)\bigg]
    \end{align*}
    Likewise, optimality of $\tilde m $ and $\tilde k$ imply 
    \begin{align*} 
        f(x) - \tilde f(x)
        &\leq \sum_{j=1}^J \bigg[ (\tilde \pi_j(x) - \pi_j(x)) v_{m^*}(d_j, x)\bigg]
    \end{align*}
    Together, we have 
    \begin{align*}
        |f(x) - \tilde f(x)| 
        &\leq \max\bigg\{\bigg| \sum_{j=1}^J (\tilde \pi_j(x) - \pi_j(x)) v_{\tilde m}(d_j,x)  \bigg|, \bigg|  \sum_{j=1}^J (\tilde \pi_j(x) - \pi_j(x)) v_{m^*}(d_j,x)\bigg| \bigg\}
        \\
        &\leq \max \bigg\{ \lVert (\tilde \pi_j(x) - \pi_j(x))_{j=1}^j \rVert \lVert v_{\tilde m}(\cdot,x) \rVert , \lVert(\tilde \pi_j(x) - \pi_j(x))_{j=1}^j \rVert \lVert v_{m^*}(\cdot, x) \rVert \bigg\} 
        \\
        &\leq B \bigg( \sum_j (\pi_j(x) - \tilde \pi_j(x))^2\bigg)^{1/2}
    \end{align*}
    where the second line is by the Cauchy-Schwartz inequality. 
    Squaring and integrating over $x$,
    \begin{align*}
        \E_{N,P}[(f(X) - \tilde f(X))^2]
        &\leq B^2\E_{N,P}\bigg[\sum_j (\pi_j(X) - \tilde \pi_j(X))^2\bigg]
        \\ &\leq B^2 J \epsilon^2
        \\ &\leq \delta^2
    \end{align*}
    Taking the square root of both sides shows that $\lVert f - \tilde f \rVert_{L_2(P_N)} \leq \delta$.
    Consider the set of all such functions constructed this way,
    $$
        \mathcal{T} := \bigg\{\tilde f : \tilde f(x) = \max_k \max_{m \in \mathcal{M}}  v_{m}(d_k,x) - \sum_{j=1}^J \tilde \pi_j(x) v_{m}(d_j,x) , \; 
        \tilde \pi = (\tilde \pi_1, \dots, \tilde \pi_J), \tilde \pi_j\in \mathcal{T}_j \bigg\}
    $$
    we see that $| \mathcal{T}| = \prod_j |\mathcal{T}_j|$, and $\mathcal{T}$ is a $\delta$-cover of $\mathcal{F}$.
\end{proof}

We can now prove the main results of the paper.

\subsection{Proof of Lemma \ref{eplemma}}

\begin{proof}
    To simplify notation, for now we consider $P$ fixed and suppress dependence on $P$.
    By the definition of $\mathcal{F}$, we have 
    $$
    \E\bigg[\sup_{\pi\in\Pi} \bigg| \overline{R}(\pi) 
        - \tilde{R}_N(\pi) \bigg| \bigg] 
        = \E\bigg[\sup_{f\in\mathcal{F}} \bigg| \frac{1}{N}\sum_{i=1}^N f(X_i) - \E[f(X_i)]\bigg| \bigg]
    $$
    Hence, we can apply Lemma \ref{Symmetrization} to obtain
    \begin{align*} 
        \E\bigg[\sup_{\pi\in\Pi} \bigg| \overline{R}(\pi) 
        - \tilde{R}_N(\pi) \bigg| \bigg]
        &\leq
        2 \mathcal{R}_N(\mathcal{F})
    \end{align*}
    Now, define $Z_f = \frac{1}{\sqrt{N}}\sum_{i=1}^N\epsilon_i f(X_i)$. 
    The increments of $(Z_f)_{f\in\mathcal{F}}$ are given by 
    $\frac{1}{\sqrt{N}}\sum_{i=1}^N\epsilon_i (f(X_i) - g(X_i))$.  
    Conditional on $(X_i)_{i=1}^N$, we can apply Hoeffding's inequality 
    (e.g. \textcite{vandervaartWeakConvergence1996} Lemma 2.2.7) to establish that this is 
    sub-Gaussian with parameter 
    $\left(\frac{1}{N} \sum_{i=1}^N (f(X_i) - g(X_i))^2 \right)^{1/2} = \lVert f - g \rVert_{L_2(P_N)}$.
    We can then apply Lemma \ref{Dudley} conditional on $(X_i)_{i=1}^N$ to obtain
    \begin{align*} 
        \frac{1}{\sqrt{N}} 
        \E_\epsilon[\sup_{f\in\mathcal{F}}|Z_f|] 
        & \leq \frac{1}{\sqrt{N}}\E_\epsilon[|Z_{f_0}|] +  \frac{K}{\sqrt{N}} \int_0^{\infty} \sqrt{\log N(t, \mathcal{F}, L_2(P_N))} dt
    \end{align*}
    for some $f_0\in\mathcal{F}$.  
    Moreover, since $f_0$ is bounded by $2B$, $|Z_{f_0}| \leq |\sum_{i=1}^N \epsilon_i \frac{2B}{\sqrt{N}}|$. 
    Again by Hoeffding's inequality, this implies that $Z_{f_0}$ is sub-Gaussian with parameter 
    $K'\left(\frac{1}{N}\sum_{i=1}^N (2B)^2\right)^{1/2} = K'2B$ which does not depend on $N$. 
    Basic properties of sub-Gaussian random variables 
    (e.g. \textcite{vershyninHighdimensionalProbabilityIntroduction2018} Proposition 2.5.2) 
    imply that $\E_\epsilon[|Z_{f_0}|] \leq K''$ for some constant $K''$. Thus,
    \begin{align*}
        \frac{1}{\sqrt{N}} 
        \E_\epsilon[\sup_{f\in\mathcal{F}}|Z_f|]&\leq 
        \frac{1}{\sqrt{N}}\left( K'' + K \int_0^\infty \sqrt{\log N(t, \mathcal{F}, L_2(P_N))} dt \right)
        \\
        &=
        \frac{1}{\sqrt{N}}\left( K'' + K \int_0^{4B} \sqrt{\log N(t, \mathcal{F}, L_2(P_N))} dt \right)
    \end{align*}
    where we have used the fact that since the diameter of $\mathcal{F}$ is $2B$, 
    the integrand is $0$ for $t>4B$. 
    By Lemma \ref{subgraph}, each class $\Pi_j$ is a VC subgraph class of functions 
    with VC dimension at most $v$.  
    Then applying Lemma \ref{covering lemma} and a change of variables yields
    \begin{align*} 
        \frac{1}{\sqrt{N}} 
        \E_\epsilon[\sup_{f\in\mathcal{F}}|Z_f|]&\leq \frac{1}{\sqrt{N}}\left( K'' + K \int_0^{4/\sqrt{J}} \sqrt{\sum_j \log N(t, \Pi_j, L_2(P_N))} B \sqrt{J} dt \right)
    \end{align*}
    Apply Lemma \ref{VC} to each VC subgraph class of functions $\Pi_j$, 
    which have envelope $1$, and allow $K$ to subsume other constants to obtain
    \begin{align}
        \frac{1}{\sqrt{N}} 
        \E_\epsilon[\sup_{f\in\mathcal{F}}|Z_f|]&\leq 
        \frac{1}{\sqrt{N}}\left( K'' + K \int_0^{4/\sqrt{J}} \sqrt{ \log K''' V(16e)^V(1/t)^{2(V-1)}} dt \right)
        \\
        &\leq
        \frac{1}{\sqrt{N}}\left( K'' + K\sqrt{V}\right)
        \\ &\leq K\sqrt{\frac{V}{N}}
    \end{align}
    Finally, since $\mathcal{R}_N(\mathcal{F}) = \E[\frac{1}{\sqrt{N}} 
    \E_\epsilon[\sup_{f\in\mathcal{F}}|Z_f|]]$, we obtain
    \begin{align*} 
        \E\bigg[\sup_{\pi\in\Pi} \bigg| \overline{R}(\pi) 
        - \tilde{R}_N(\pi) \bigg| \bigg]
        &\leq K\sqrt{\frac{V}{N}}
    \end{align*}
    Note that the constant $K$ does depend on $B$ and $J$.
\end{proof}

\subsection{Proof of Lemma \ref{lpconvergence}}

\begin{proof}
    To simplify notation, for now we consider $P$ fixed and suppress dependence on $P$.
    For every $1 \leq j,k \leq J$, let $\gamma_{jk} : \R^{J_0} \mapsto 2^{\R^J}$ be the 
    identified set for covariate-level regret, viewed as a set-valued mapping from the 
    first stage conditional mean response vector to subsets of $\R^J$. 
    That is, hold $x$ fixed and define 
    $\gamma_{jk}(w) = \{ b_{jk}(x)' m - c_{jk}(x) : Fm = w, Sm \leq r\}$.  
    For this proof, we view $\Gamma_{jk}$ as a function of the first stage conditional 
    mean response $m_0$ to consider how $\Gamma_{jk}$ changes with perturbations to $m_0$.  
    Thus, $\Gamma_{jk}(m_0(\cdot,X_i)) = \max\{\gamma_{jk}(m_0(\cdot,X_i))\}$.

    For any matrix $A$, let $A^\dag : \text{null}(A)^\perp \mapsto \text{range}(A)$ 
    denote the Moore-Penrose pseudoinverse operator, where $\text{null}(A)$ and 
    $\text{range}(A)$ denote the null space and range of $A$, respectively.  
    For any $w \in \R^{J_0}$,
    \begin{align*} 
        \{m : Fm = w\} = \{m : m = F^\dag w + y, \; y \in \text{null}(F)\}
    \end{align*}
    Let $\tilde w \in \R^{J_0}$ be given.  
    For any $J$, let $\mathcal{B}_J$ be the unit ball in $\R^J$, 
    $\mathcal{B}_J := \{ w \in \R^J : \lVert w \rVert \leq 1 \}$. 
    Since $\tilde w \in w + \lVert w - \tilde w \rVert \mathcal{B}_{J_0}$,
    \begin{align*} 
        \{m : m = F^\dag \tilde w + y, \; y \in \text{null}(F)\}
        &\subseteq
        \{m : m = F^\dag z + y, \; y \in \text{null}(F), \; z \in w + \lVert w - \tilde w \rVert \mathcal{B}_{J_0} \}
        .
    \end{align*}
    Let $\lVert A \rVert$ denote the operator norm of a matrix $A$.  
    Since
    $F^\dag \lVert w - \tilde w \rVert \mathcal{B}_{J_0}
    \subseteq \lVert F^\dag \rVert \lVert w - \tilde w \rVert \mathcal{B}_{J}$,
    we have
    \begin{align*}
        &\{m : m = F^\dag z + y, \; y \in \text{null}(F), \; z \in w + \lVert w - \tilde w \rVert \mathcal{B}_{J_0} \}
        \\
        &\subseteq \{m : m = F^\dag w + y, \; y \in \text{null}(F)\} + \lVert F^\dag \rVert \lVert \tilde w - w \rVert  \mathcal{B}_J 
        .
    \end{align*}
    This implies
    \begin{align*} 
        \{m : Fm = \tilde w\} \cap \{m : Sm \leq r\} 
        \subseteq \left(\{m : m = F^\dag w + y, y \in \text{null}(F)\} + \lVert F^\dag \rVert \lVert \tilde w - w \rVert  \mathcal{B}_J\right) \cap \{m : Sm \leq r\}
    \end{align*}
    and since $\gamma_{jk}(\tilde w)$ consists of scalars of the form $b_{jk}(x)' m - c_{jk}(x)$ for $m$ in this set, we have
    \begin{align*}
        \gamma_{jk}(\tilde w) 
        &\subseteq \gamma_{jk}(w) + \lVert b_{jk}(x) \rVert  \lVert F^\dag \rVert \lVert \tilde w - w \rVert  \mathcal{B}_1
    \end{align*}
    and likewise 
    \begin{align*} 
        \gamma_{jk}(w) \subseteq \gamma_{jk}(\tilde w) + \lVert b_{jk}(x) \rVert  \lVert F^\dag \rVert \lVert \tilde w - w \rVert  \mathcal{B}_1
    \end{align*}
    Therefore, the correspondence $\gamma_{jk}$ is Lipschitz with respect to the Hausdorff distance, 
    with Lipschitz constant $\sup_x \lVert b_{jk}(x) \rVert \lVert F^\dag \rVert := \kappa < \infty$ 
    (\cite{rockafellarVariationalAnalysis2009}).
    Importantly, this implies 
    $| \Gamma_{jk}(\hat m_0(\cdot,X_i) - \Gamma_{jk}(m_0(\cdot, X_i)) | \leq \kappa \lVert \hat m_0(\cdot, X_i) - m_0(\cdot,X_i) \rVert$.  
    While not essential to our analysis, we note that in our case $\lVert F^\dag \rVert = 1$.

    We can now prove the main claim of the lemma.
    \begin{align*} 
        &\E\bigg[\sup_{\pi\in\Pi} \bigg|\tilde{R}_N(\pi) 
        -  \overline{R}_N(\pi) \bigg| \bigg]
        \\
        &=\E\bigg[\sup_{\pi\in\Pi} \bigg| \frac{1}{N}\sum_{i=1}^N \sum_{j=1}^J \pi_{ij} (\max_k \Gamma_{jk}(\hat m_0(\cdot,X_i)) - \max_k \Gamma_{jk}(m_0(\cdot,X_i))) \bigg| \bigg]
        \\
        &\leq \E\bigg[\frac{1}{N}\sum_{i=1}^N \max_{j,k} \bigg| \Gamma_{jk}(\hat m_0(\cdot,X_i)) - \Gamma_{jk}(m_0(\cdot,X_i)) \bigg| \bigg]
        \\
        &\leq \E\bigg[\frac{1}{N}\sum_{i=1}^N  \kappa \lVert \hat m_0(\cdot, X_i) - m_0(\cdot,X_i) \rVert \bigg]
    \end{align*}
    Finally, we bound this uniformly in $P$ by Assumption \ref{rootnconvergence}. We conclude that 
    \begin{align*}
        \sup_{P\in\mathcal{P}_C}\E_P\bigg[\sup_{\pi\in\Pi} \bigg|\tilde{R}_{N,P}(\pi) 
        -  \overline{R}_N(\pi) \bigg| \bigg]
        &\leq \mathcal{O}(\rho_N^{-1})
    \end{align*}
\end{proof}

\subsection{Proof of Theorem \ref{regretconvergence}}

\begin{proof}
    Combining the bounds in Lemmas \ref{eplemma} and \ref{lpconvergence}
    establishes Theorem \ref{regretconvergence}.
\end{proof}

\section{Extensions of theoretical results}
\label{appendix-extensions}
\subsection{Finite-sample bounds}

In this section we consider strengthening the asymptotic result of Theorem 
\ref{regretconvergence} to hold for finite samples.
We replace Assumption \ref{rootnconvergence} with the following assumption
that the estimate $\hat m_0$ has finite-sample error bounds.

\begin{assumption}
    \label{rootnbound}
    There exists a constant $K > 0$ and a sequence $\rho_N \to \infty$ such that
    for all $N \geq 1$,
    the estimate $\hat m_0$ satisfies 
    \begin{enumerate}
        \item $\sup_{P \in \mathcal{P}}
            \E_P\bigg[ 
                \frac{1}{N} \sum_{i=1}^N 
                \lVert \hat m_0(\cdot,X_i) - m_{0,P}(\cdot,X_i) \rVert 
            \bigg] 
            \leq K \rho_N^{-1}$
        \item $\hat{\mathcal{M}} = 
            \{ m : S m(\cdot,X) \leq r, \; F m(\cdot,X) = \hat m_0(\cdot,X)\}$ 
            is nonempty, almost surely,
            for all $P \in \mathcal{P}$.
    \end{enumerate}
\end{assumption}

This assumption leads immediately to an extension of 
Lemma \ref{lpconvergence}.
\begin{corollary}
    \label{lpconvergencefinite}
    Under assumptions \ref{DGP}, \ref{linear}, and \ref{rootnbound},
    \begin{align*}
        \sup_{P \in \mathcal{P}_C} \E_P\bigg[ 
            \sup_{\pi \in \Pi} 
            \bigg| \tilde{R}_{N,P}(\pi) - \overline{R}_N(\pi) \bigg| 
        \bigg]
        \leq K \rho_N^{-1}
        .
    \end{align*}
\end{corollary}

\begin{proof}
    By the exact same argument as in the proof of Lemma \ref{lpconvergence},
    we have that for any $P \in \mathcal{P}$,
    \begin{align*}
        \E_P\bigg[ 
            \sup_{\pi \in \Pi} 
            \bigg| \tilde{R}_{N,P}(\pi) - \overline{R}_N(\pi) \bigg| 
        \bigg]
        \leq
        \E_P\bigg[ 
            \frac{1}{N} \sum_{i=1}^N 
            \kappa \lVert \hat m_0(\cdot,X_i) - m_{0,P}(\cdot,X_i) \rVert 
        \bigg]
    \end{align*}
    and the conclusion follows by Assumption \ref{rootnbound}.
\end{proof}

Combining the bounds of Lemma \ref{eplemma} and Corollary \ref{lpconvergencefinite},
we obtain the following simple extension of Theorem \ref{regretconvergence}.
\begin{corollary}
    \label{regretconvergencefinite}
    Let $\mathcal{P}_C$ be a set of distributions for which  (1) Assumptions \ref{DGP} holds with
    constant $C$ and (2) Assumption \ref{rootnbound} holds.
    Under Assumptions \ref{linear} and \ref{policy},
    \begin{align*}
        \sup_{P\in\mathcal{P}_C} 
        \left(\E_P[\overline{R}_P(\hat\pi)] - \overline{R}_P(\pi^*_P) \right)
        \leq 
        K ( N^{-1/2} \vee \rho_N^{-1})
    \end{align*}
    for some constant $K$ depending only on $C$ and $J$.
\end{corollary}

We now show that the finite-sample bound in Assumption \ref{rootnbound} 
is satisfied with $\rho_N = N^{1/2}$ when covariates are discrete, and therefore 
Corollary \ref{regretconvergencefinite} is satisfied with $\rho_N = N^{1/2}$.

\begin{proposition}
    \label{rootnbounddiscrete}
    Let $\mathcal{P}$ be a set of distributions for which 
    \begin{enumerate}
        \item $|\mathcal{X}| = M < \infty$ and $Y_i$ is binary
        \item $P(X_i = x, D_i = d) \geq \delta > 0$ for all 
            $x \in \mathcal{X}$ and $d \in \mathcal{D}_0$.
    \end{enumerate}
    Then there exists a sample size $\bar N$ such that for all $N \geq \bar N$,
    \begin{align*}
        \sup_{P \in \mathcal{P}} \E_P\bigg[ 
            \frac{1}{N} \sum_{i=1}^N 
            \lVert \hat m_0(\cdot,X_i) - m_{0,P}(\cdot,X_i) \rVert 
        \bigg] 
        \leq 
        \left(\frac{J_0 \times |\mathcal{X}|}{\delta}\right)^{1/2} N^{-1/2}
        .
    \end{align*}
\end{proposition}

\begin{proof}
    Define $Z_i = (D_i, X_i)$, and let $L = J_0 \times M$ 
    denote the number of possible combinations of $D_i$ and $X_i$.
    For any $P \in \mathcal{P}$, let $p_\ell = \P[Z_i = \ell]$. 
    By Chernoff's inequality 
    (\cite{vershyninHighdimensionalProbabilityIntroduction2018} Theorem 2.3.1),
    \begin{align}
        \label{probzbad}
        P\left(
            \sum_{i=1}^N \1[Z_i = \ell] \leq \frac{N p_\ell}{2}
        \right)
        \leq
        \exp\left(
            (\ln(2)-1) \frac{N p_\ell}{2}
        \right)
        .
    \end{align}

    For any $\ell \in [L]$, let $m_\ell = \E[Y_i \mid Z_i = \ell]$.
    Conditional on $\sum_{i=1}^N \1[Z_i = \ell]$,
    we have
    \begin{align*}
        \E\left[
            \left(
                \frac{1}{\sum_{i=1}^N \1[Z_i = \ell]}
                \sum_{i=1}^N y_i \1[Z_i = \ell]
                - m_\ell
            \right)^2
        \right]
        &=
        \frac{m_\ell(1-m_\ell)}{\sum_{i=1}^N \1[Z_i = \ell]}
        \\
        &\leq
        \frac{1}{4 \sum_{i=1}^N \1[Z_i = \ell]}
    \end{align*}
    for any $\ell \in [L]$, since $m_\ell \in [0,1]$.

    Since $p_\ell \geq \delta$ by assumption, we have that conditional on
    $\sum_{i=1}^N \1[Z_i = \ell] > \frac{N p_\ell}{2}$,
    \begin{align}
        \label{meanmgood}
        \E\left[
            \left(
                \hat m_\ell - m_\ell
            \right)^2
        \right]
        \leq
        \frac{1}{2 N \delta}
    \end{align}
    and therefore
    \begin{align*}
        \E\left[
            \left(
                \hat m_\ell - m_\ell
            \right)^2
        \right]
        &\leq
        \E\left[
            \left(
                \hat m_\ell - m_\ell
            \right)^2
            \mid
            \sum_{i=1}^N \1[Z_i = \ell]
            > \frac{N p_\ell}{2}
        \right]
        \P\left(
            \sum_{i=1}^N \1[Z_i = \ell]
            > \frac{N p_\ell}{2}
        \right)
        \\ &\quad
        +
        \E\left[
            \left(
                \hat m_\ell - m_\ell
            \right)^2
            \mid
            \sum_{i=1}^N \1[Z_i = \ell]
            \leq \frac{N p_\ell}{2}
        \right]
        \P\left(
            \sum_{i=1}^N \1[Z_i = \ell]
            \leq \frac{N p_\ell}{2}
        \right)
        \\
        &\leq
        \frac{1}{2 N \delta}
        + \exp\left(
            (\ln(2)-1) \frac{N \delta}{2}
        \right)
    \end{align*}
    where in the last lines we have used the bounds
    (\ref{probzbad}) and (\ref{meanmgood}),
    and the fact that $(\hat m_\ell - m_\ell)^2 \leq 1$.
    There exists $\bar N$ such that for all $N \geq \bar N$,
    the second term is less than $\frac{1}{2 N \delta}$
    and therefore
    \begin{align*}
        \E\left[
            \left(
                \hat m_\ell - m_\ell
            \right)^2
        \right]
        &\leq
        \frac{1}{N \delta}
        .
    \end{align*}
    
    By Jensen's inequality, the overall error is bounded by
    \begin{align*}
        \E\left[
            \frac{1}{N} \sum_{i=1}^N 
            \lVert \hat m_0(\cdot,X_i) - m_{0,P}(\cdot,X_i) \rVert
        \right]
        &\leq
        \left(
            \sum_{\ell=1}^L
            \E\left[
                \left(
                    \hat m_\ell - m_\ell
                \right)^2
            \right]
        \right)^{1/2}
    \end{align*}
    and therefore
    \begin{align*}
        \E\left[
            \frac{1}{N} \sum_{i=1}^N 
            \lVert \hat m_0(\cdot,X_i) - m_{0,P}(\cdot,X_i) \rVert
        \right]
        &\leq
        \left(
            \frac{L}{N \delta}
        \right)^{1/2}
    \end{align*}
    for all $N \geq \bar N$.
    Since $L$ and $\delta$ do not depend on $P$,
    this bound is uniform over $\mathcal{P}$.
\end{proof}

\subsection{Estimation of the parameters of the utility function}

In this section we relax the assumption that $b(d,x)$ and $c(d,x)$ are known.
We instead assume that estimates of $b(d,x)$ and $c(d,x)$ are available.
This assumption is stronger than the assumption that $\hat m_0(d,x)$ converges
to $m_0(d,x)$, because it implies a consistent estimate of $b(d,x)$ and $c(d,x)$
is available for new treatments as well as existing ones.
This assumption may be satisfied if $b(d,x) = b(x)$ and $c(d,x) = c(x)$
are known to not depend on $d$.
For example, $b(x)$ may represent the value of a good to a household with characteristics $x$,
and $c(x)$ may represent the cost of delivering the good to the household.
Alternatively, it is also possible that $b(d,x)$ and $c(d,x)$ depend on $d$ in a known way,
such as price minus marginal cost, $b(d,x) = d - b(x)$.

Redefine the regret score analogously to (\ref{gammalp}),
using the estimated utility function.
\begin{align*}
    \hat \Gamma_{jk}(X_i) = 
    \max_{m \in \R^J}
    \quad
    \hat{b}_{jk}(X_i)' m - 
    &\hat{c}_{jk}(X_i)
    \\
    \text{s.t.} \quad
    S m & \leq r
    \\
    F m & = \hat m_0(X_i)
\end{align*}

We assume that the estimated parameters converge to the true parameters at the same rate
as the estimated mean of the treatment effect.
We also assume that mean conditional response is bounded,
as it is in Section \ref{section-application}.

\begin{assumption}
    \label{rootnboundutility}
    For some sequence $\rho_N \to \infty$ and some class of distributions 
    $\mathcal{P}$,
    \begin{enumerate}
        \item $\limsup_{N \to \infty} \sup_{P \in \mathcal{P}}
            \rho_N
            \E_P\bigg[ 
                \frac{1}{N} \sum_{i=1}^N 
                \lVert \hat m_0(\cdot,X_i) - m_{0,P}(\cdot,X_i) \rVert 
            \bigg] 
            \leq \infty$
        \item $\limsup_{N \to \infty} \sup_{P \in \mathcal{P}}
            \rho_N
            \E_P\bigg[ 
                \frac{1}{N} \sum_{i=1}^N 
                \lVert \hat b(\cdot,X_i) - b(\cdot,X_i) \rVert 
            \bigg] 
            \leq \infty$
        \item $\limsup_{N \to \infty} \sup_{P \in \mathcal{P}}
            \rho_N
            \E_P\bigg[ 
                \frac{1}{N} \sum_{i=1}^N 
                \lVert \hat c(\cdot,X_i) - c(\cdot,X_i) \rVert 
            \bigg] 
            \leq \infty$
        \item $\hat{\mathcal{M}} = 
            \{ m : S m(\cdot,X) \leq r, \; F m(\cdot,X) = \hat m_0(\cdot,X)\}$ 
            is nonempty, almost surely,
            for all $P \in \mathcal{P}$.
        \item There exists a bounded set $\overline{\mathcal{M}} \subset \R^J$
            such that $\hat{\mathcal{M}} \subset \overline{\mathcal{M}}$
            almost surely, for all $P \in \mathcal{P}$.
    \end{enumerate}
\end{assumption}

We show that under this assumption, the conclusions of 
Lemma \ref{lpconvergence} and Theorem \ref{regretconvergence} hold.

\begin{proposition}
    \label{regretconvergenceutility}
    Under Assumptions \ref{linear}, \ref{policy}, and \ref{rootnboundutility},
    \begin{align*}
        \sup_{P \in \mathcal{P}} \E_P\left[
            \overline{R}_P(\hat \pi) - \overline{R}_P(\pi^*_P)
        \right]
        \leq \mathcal{O}(N^{-1/2} \vee \rho_N^{-1})
    \end{align*}
\end{proposition}

\begin{proof}
    We write the difference between the estimated and true regret scores as
    \begin{align*}
        \bigg|
            \hat \Gamma_{jk}(X_i) - \Gamma_{jk}(X_i) 
        \bigg| 
        &= 
        \bigg|
            \max_{m \in \hat{\mathcal{M}}}
            \hat{b}_{jk}(X_i)' m - \hat{c}_{jk}(X_i) - 
            \left(
                \max_{m \in \mathcal{M}}
                b_{jk}(X_i)' m - c_{jk}(X_i)
            \right)
        \bigg|
        \\
        &\leq
        \bigg|
            \max_{m \in \hat{\mathcal{M}}}
            \hat{b}_{jk}(X_i)' m -
            \max_{m \in \hat{\mathcal{M}}}
            b_{jk}(X_i)' m
        \bigg|
        \\
        &\quad +
        \bigg|
            \max_{m \in \hat{\mathcal{M}}}
            b_{jk}(X_i)' m -
            \max_{m \in \mathcal{M}}
            b_{jk}(X_i)' m
        \bigg|
        \\
        &\quad +
        \bigg|
            \hat{c}_{jk}(X_i) - c_{jk}(X_i)
        \bigg|
    \end{align*}
    The first term is bounded by
    \begin{align*}
        \bigg|
            \max_{m \in \hat{\mathcal{M}}}
            (\hat{b}_{jk}(X_i) - b_{jk}(X_i))' m
        \bigg|
        &\leq
        \bigg|
            \max_{m \in \overline{\mathcal{M}}}
            (\hat{b}_{jk}(X_i) - b_{jk}(X_i))' m
        \bigg|
        \\
        &\leq
        \lVert \hat{b}_{jk}(X_i) - b_{jk}(X_i) \rVert
        \overline{M}
        \\
        &\leq
        2 \lVert \hat{b}(\cdot,X_i) - b(\cdot,X_i) \rVert
        \overline{M}
    \end{align*}
    where $\overline{M} = \max_{m \in \overline{\mathcal{M}}} \lVert m \rVert$.
    The second term is bounded by 
    \begin{align*}
        \kappa \lVert \hat{m}_0(X_i) - m_{0,P}(X_i) \rVert
    \end{align*}
    by Lemma \ref{lpconvergence}.
    The third term is bounded by
    \begin{align*}
        | c_j(X_i) - \hat{c}_j(X_i) | + | c_k(X_i) - \hat{c}_k(X_i) |
        \leq
        2 \lVert \hat{c}(\cdot,X_i) - c(\cdot,X_i) \rVert
    \end{align*}
    and therefore all three terms are $\mathcal{O}(\rho_N^{-1})$
    uniformly in $P \in \mathcal{P}$.

    Together, we have that
    \begin{align*}
        \sup_{P \in \mathcal{P}} \E_P\bigg[
            \sup_{\pi \in \Pi}
            \bigg| \tilde{R}_N(\pi) - \overline{R}_N(\pi) \bigg|
        \bigg]
        &\leq
        \sup_{P \in \mathcal{P}} \E_P\bigg[
            \frac{1}{N}
            \sum_{i=1}^N
            \max_{j,k}
            \bigg|
                \hat \Gamma_{jk}(X_i) - \Gamma_{jk}(X_i)
            \bigg|
        \bigg]
        \\
        &\leq
        \mathcal{O}(\rho_N^{-1})
    \end{align*}
    
    Combining this with Lemma \ref{eplemma} yields the conclusion.
\end{proof}

\section{Simulation study}
\label{appendix-simulations}
This section presents the results of a simulation study that examines
how the choice of estimator $\hat m_0$ affects the performance of the
resulting estimated policy.

We consider a firm choosing a price to maximize profit.
The firm faces logit demand given by
\begin{align*}
    P(y_i = 1 | z_i) =
    \frac{\exp(z_i' \beta)}{1 + \exp(z_i' \beta)}
\end{align*}
and constant marginal cost of $1$.
Here $z_i = (1, x_i, p_i, x_i p_i)$ and $x_i \sim U[0,1]$ is a continuous
covariate.
The price $p_i$ is randomly assigned from $\mathcal{D}_0 = \{1, 3, 5\}$ 
independently of $x_i$.
The set of possible prices is $\mathcal{D} = \{1, 2, 3, 4, 5\}$.
The conditional mean response is assumed to be decreasing in price,
holding $x_i$ fixed.

For the simulation, we set $\beta = (2.25, 1, -0.25, -1)$.
The purchase probabilities and profits are plotted in Figure 
\ref{fig:purchase-prob-montecarlo}.

\begin{figure}[ht]
    \centering
    \caption{Purchase probabilities and profits}
    \includegraphics[width=0.4\textwidth]{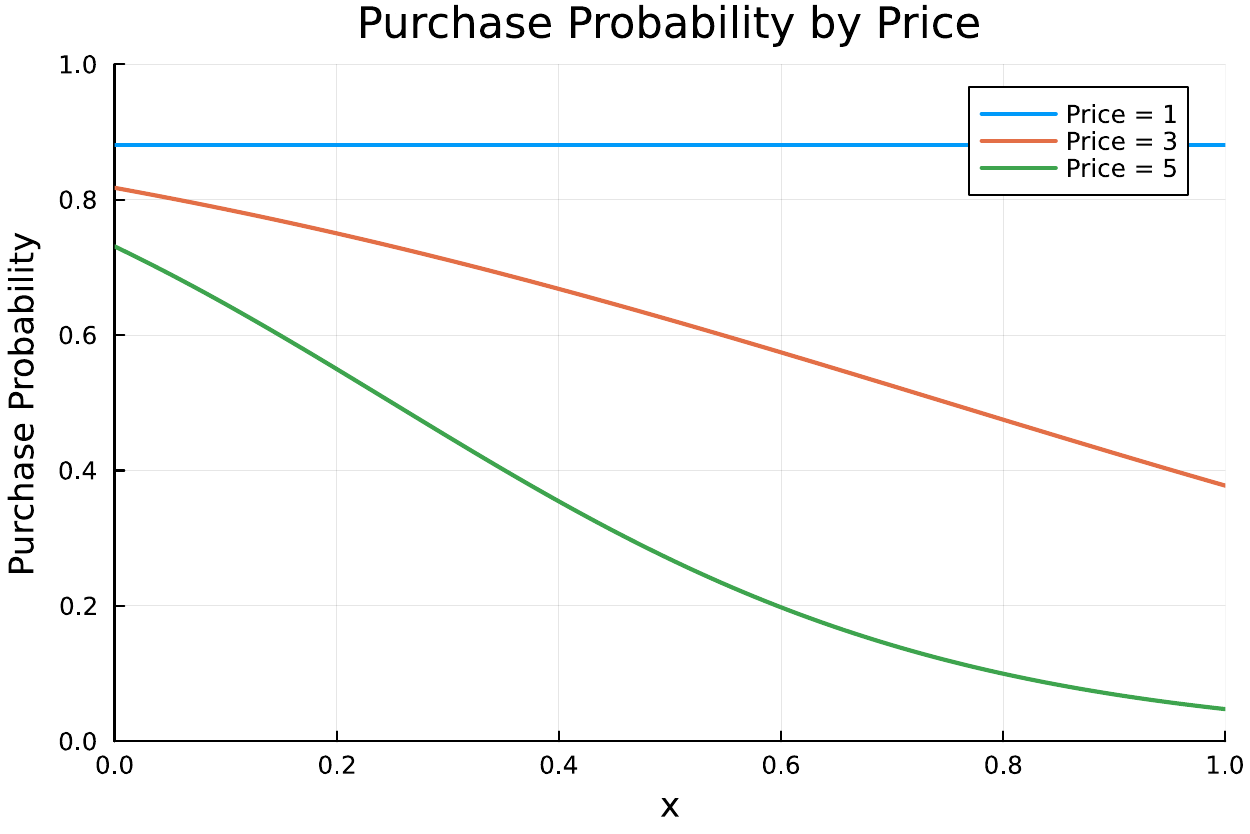}
    \includegraphics[width=0.4\textwidth]{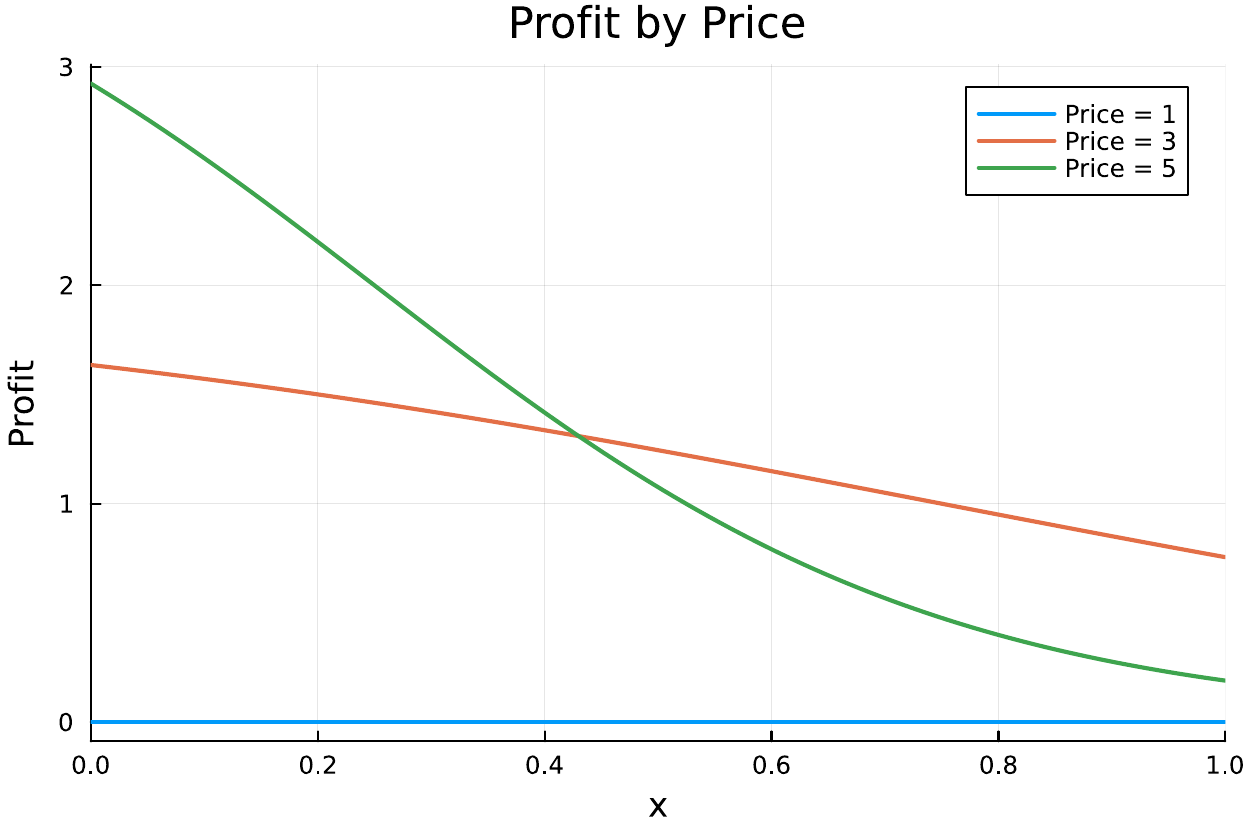}
    \label{fig:purchase-prob-montecarlo}
\end{figure}

A policy is defined by a scalar $\beta$ and a vector of cutoffs $\{\kappa_j\}_{j=1}^4$
with $\kappa_{j-1} \leq \kappa_j$ for all $j$.
Unit $i$ is assigned to price $j$ if 
$\kappa_{j-1} < \beta x_i \leq \kappa_j$ 
(letting $\kappa_0 = -\infty$ and $\kappa_5 = \infty$).
Only the sign of $\beta$ matters, since the cutoffs must be increasing.

We consider $N \in \{100, 250, 500, 750, 1000\}$ observations.
We perform $200$ Monte Carlo simulations for each value of $N$.
For each simulation, we estimate two models for the response function:
a logit model in $x_i$ separately for each price, and a Lasso model with 
a logit link function on a dictionary of Chebyshev polynomials in $x_i$,
separately for each price.
The regularization parameter is selected by cross-validation.
The maximum regret (estimated on holdout data) is plotted in Figure 
\ref{fig:regret-montecarlo}.

\begin{figure}[ht]
    \centering
    \includegraphics[width=0.8\textwidth]{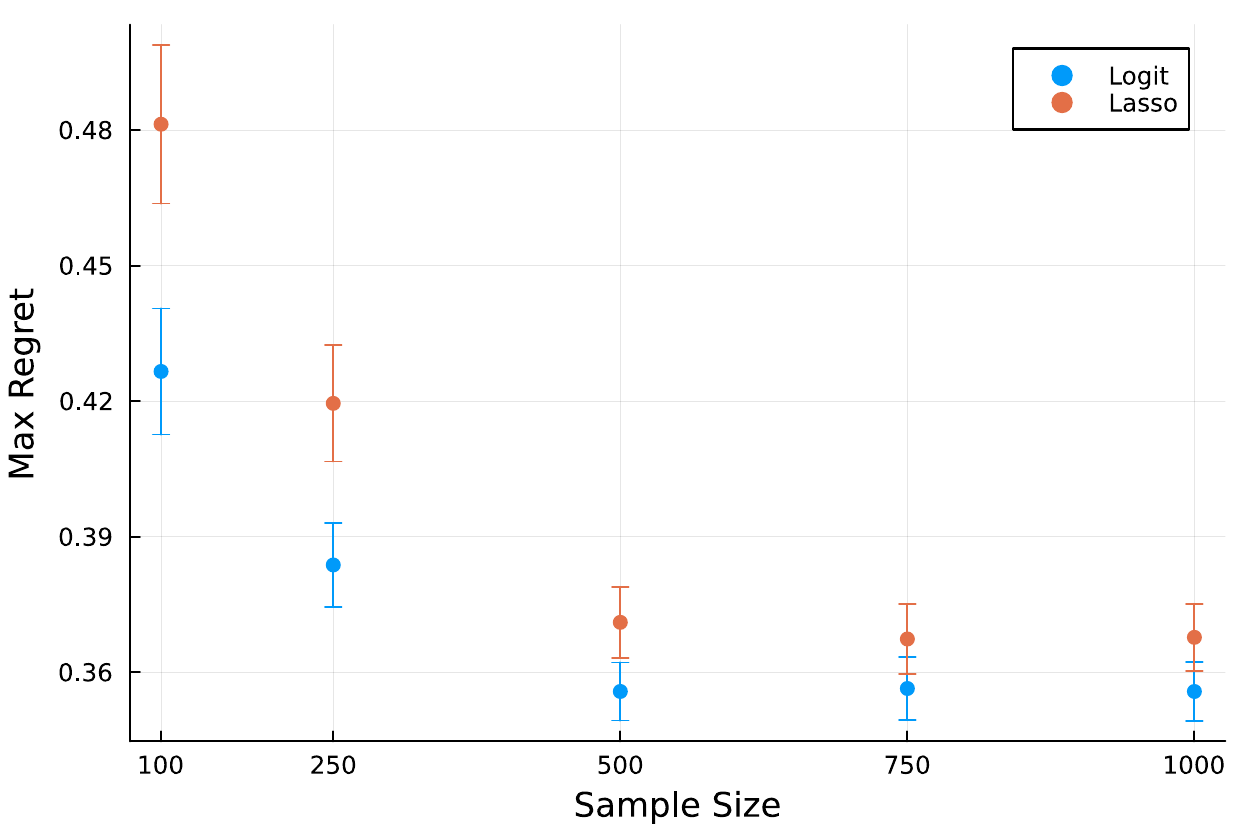}
    \caption{Maximum regret}
    \label{fig:regret-montecarlo}
\end{figure}

As predicted by Theorem \ref{regretconvergence},
the Lasso estimator achieves higher regret than the correctly specified
parametric model.

\section{Computational details}
\label{appendix-computation}

\subsection{Shape constraints}

Mean takeup as a function of price, holding covariates fixed, is bounded between $0$ and $1$.  It is assumed to be downward sloping and the subsidy is assumed to exhibit decreasing returns to scale, so that takeup is convex.  These constraints can be expressed as $S \tilde m(.,X) \leq r$ where
\begin{gather*}
    S_1 = \begin{bmatrix}
        \frac{-1}{d_2 - d_1} & \frac{1}{d_2 - d_1} & 0 & \dots \\
        0 & \frac{-1}{d_3 - d_2} & \frac{1}{d_3 - d_2} & 0 & \dots \\
        \vdots \\
        & & & & \dots & 0 & \frac{-1}{d_J - d_{J-1}} & \frac{1}{d_J - d_{J-1}}
    \end{bmatrix}
    \\
    S_2 = \begin{bmatrix}
        \frac{-1}{d_2 - d_1} & \frac{1}{d_2 - d_1} + \frac{1}{d_3 - d_2} & \frac{-1}{d_3 - d_2} & 0 & \dots \\
        0 & \frac{-1}{d_3 - d_2} & \frac{1}{d_3 - d_2} + \frac{1}{d_4 - d_3} & \frac{-1}{d_4 - d_3} & 0 & \dots \\
        \vdots\\
        & & & & \dots & 0 & \frac{-1}{d_{J-1} - d_{J-2}} & \frac{1}{d_{J-1} - d_{J-2}} + \frac{1}{d_J - d_{J-1}} & \frac{-1}{d_J - d_{J-1}}
    \end{bmatrix}
    \\
    S = \begin{bmatrix}
        -I \\ I \\ S_1 \\ S_2
    \end{bmatrix}
    \quad \quad
    r = \begin{bmatrix}
        0_J \\ -1_J \\ 0_{J-1} \\ 0_{J-2}
    \end{bmatrix}
\end{gather*}

\subsection{Dual representation of maximum regret}

For each individual $i$, the maximum regret is obtained by 
\begin{align*} 
    \hat\Gamma_j(X_i) &= \max_{k,m} b_{jk}(X_i)' m - c_{jk}(X_i) \quad s.t. \quad Sm \leq r, \quad Fm = \hat m(.,X_i)
\end{align*}
For now, suppress the dependence on $X_i$ and $j$.  Let $\lambda$ be the vector of Lagrange dual variables associated with the inequality constraint, and let $\eta$ be the vector of Lagrange dual variables associated with the equality constraint.  We can rewrite the linear program as 
\begin{align*} 
    &\min_{\mu} \mu \quad s.t. \quad \mu + c_k \geq \max_m \{ b_k' m \quad s.t. \quad Sm \leq r, Fm = \hat m_0\} \quad \forall k
    \\
    &\min_{\mu} \mu \quad s.t. \quad \mu + c_k \geq \min_{\lambda, \eta} \{\lambda'r + \eta'\hat m_0 \quad s.t. \quad \lambda' S + \eta'F \geq b_k, \quad \lambda \geq 0 \} \forall k
    \\
    &\min_{\mu, \lambda, \eta} \mu \quad s.t. \quad \mu + c_k \geq \lambda'r + \eta'\hat m_0, \quad \lambda' S + \eta'F \geq b_k, \quad\lambda \geq 0 \quad \forall k
\end{align*}
Thus, computing each $\hat \Gamma_j(X_i)$ is a single linear program.  
Since the programs are independent across $i$, they can be solved simultaneously 
by summing the objective across individuals.

Since the dual formulation above is expressed as a minimization problem,
it can be solved jointly with the minimization over $\pi$ in (\ref{empiricalMMR}).
Alternatively, since there are finitely many $\hat\Gamma_j(X_i)$,
we can solve these linear programs before performing the policy minimization,
and plug the values into (\ref{empiricalMMR}).
In the application of Section \ref{section-application},
solving the problem jointly did not decrease computation time.

\subsection{MILP Formulation of MMR Problem}

Consider the set of policies given by (\ref{policyclass}) in the problem 
(\ref{empiricalMMR}).
To express this as a mixed integer-linear program, 
introduce the binary variables $g_{ij}$ for $i=1,\dots,N$ and $j=1,\dots,J-1$ 
to indicate whether $X_i'\beta$ is above cutoff $j$.
For notational convenience, set $g_{i0} = 1$ and $g_{iJ} = 0$.

We introduce constraints to ensure that $g_{ij}$ is one if and only if 
$X_i'\beta$ is above cutoff $j$ using the ``big-M'' method.
These constraints are 
\begin{align*} 
    X_i' \beta - c_j &\leq M g_{ij}
    \\
    X_i' \beta - c_j &\geq - M (1 - g_{ij}) + \epsilon
\end{align*}
where $M$ is a sufficiently large constant and
$\epsilon$ is a sufficiently small numerical error tolerance.
The first constraint ensures that $g_{ij} = 1$ if $X_i'\beta > c_j$,
and the second constraint ensures that $g_{ij} = 0$ if 
$X_i'\beta \leq c_j + \epsilon$.
The constant $M$ must be chosen large enough to ensure that 
$| X_i'\beta - c_j | \leq M$ for all $i$ and $j$.
The tolerance $\epsilon$ is introduced to imitate a strict inequality 
constraint, which is not possible to impose exactly in a mixed integer-linear 
program.
Otherwise, a solution of $\beta = 0, c_j = 0$ for all $j$
would permit $g_{ij}$ to be either $0$ or $1$ for all $i$ and $j$.
For any optimal policy $(\beta, c)$, the policy $ (t\beta, tc)$ is also optimal
for $t > 0$.
Thus, without loss of generality, we may set $M = 1$.

This leads to the following mixed integer-linear program:
\begin{align*} 
    \min_{g, \beta, c} &\sum_{i=1}^N \sum_{j=1}^J (g_{ij} - g_{i,j-1}) \hat\Gamma_j(X_i)
    \\
    s.t. \quad & c_1 \leq c_2 \leq \dots \leq c_{J-1}
    \\
    & g_{ij} \geq X_i'\beta - c_j + \epsilon, \quad i = 1,\dots,N, \quad j = 1,\dots,J-1
    \\
    & g_{ij} \leq 1 + X_i'\beta - c_j, \quad i = 1,\dots,N, \quad j = 1,\dots,J-1
    \\
    & \beta_1 \geq 0
\end{align*}
which we can solve using a standard mixed integer-linear programming solver.

\section{Robustness}
\label{appendix-robustness}
\subsection{Maximin welfare with Lipschitz constraint}
\label{appendix-lipschitz}

I first illustrate the effect of shape restrictions on the maximin welfare policy.
I give an example of shape restrictions that ensure the maximin welfare policy
assigns a new treatment to the population.
The maximin welfare policy without covariates is given by the price that
maximizes the lower bound on welfare in Figure \ref{worstcaseexample}.
This policy assigns a price of $15$ thousand shillings to the population.
Although the maximin welfare policy restricts to the original set of 
treatments, this is not true in general and depends on the utility function
and the shape restrictions imposed.

To illustrate this, I give an example in which the maximin welfare policy
assigns a new treatment to the population.
Specifically, in the example of Section \ref{section-application},
impose in addition that takeup is Lipschitz continuous,
with a Lipschitz constant of $0.06$.
The resulting bounds on takeup and welfare are plotted in Figure \ref{fig:welfarebounds_lip}.

\begin{figure}
    \center 
    \caption{Bounds on takeup and utility at each price with Lipschitz constraint}
    \includegraphics[width=0.49\textwidth]{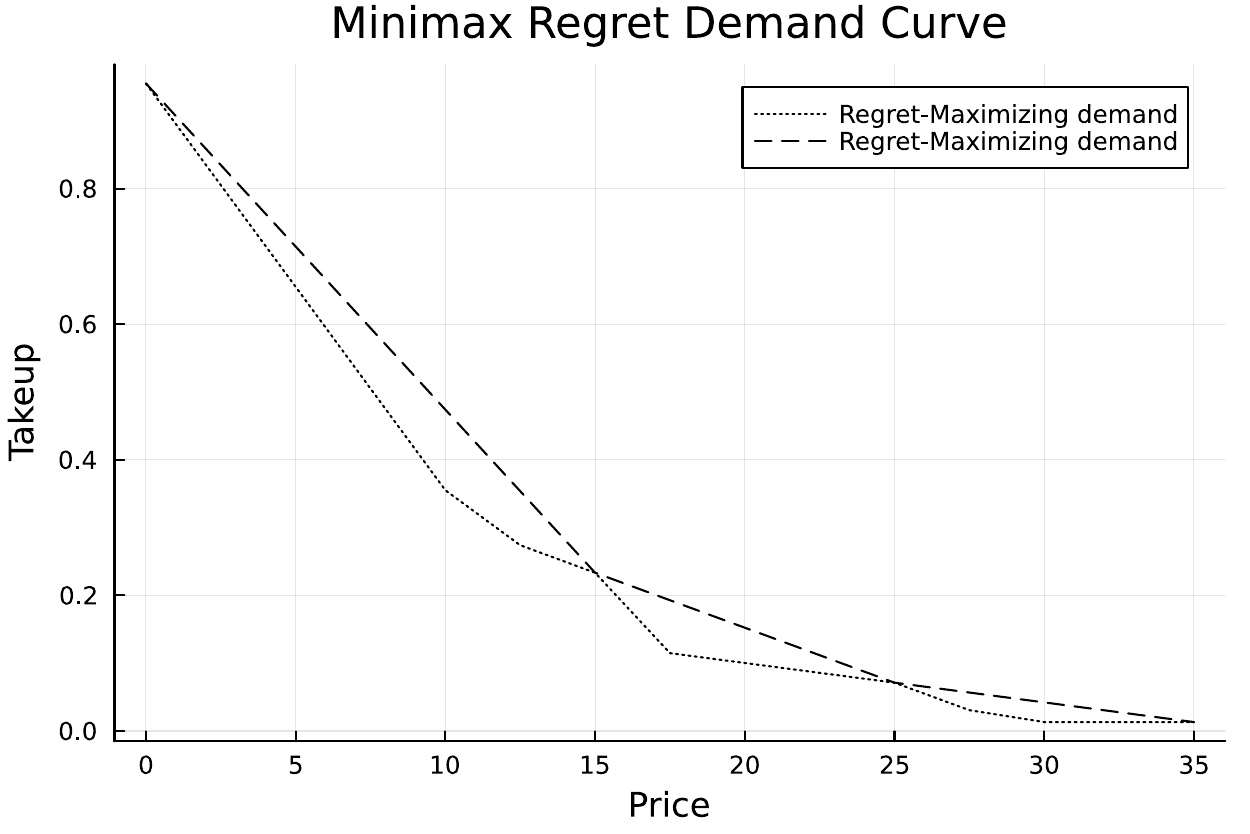}
    \includegraphics[width=0.49\textwidth]{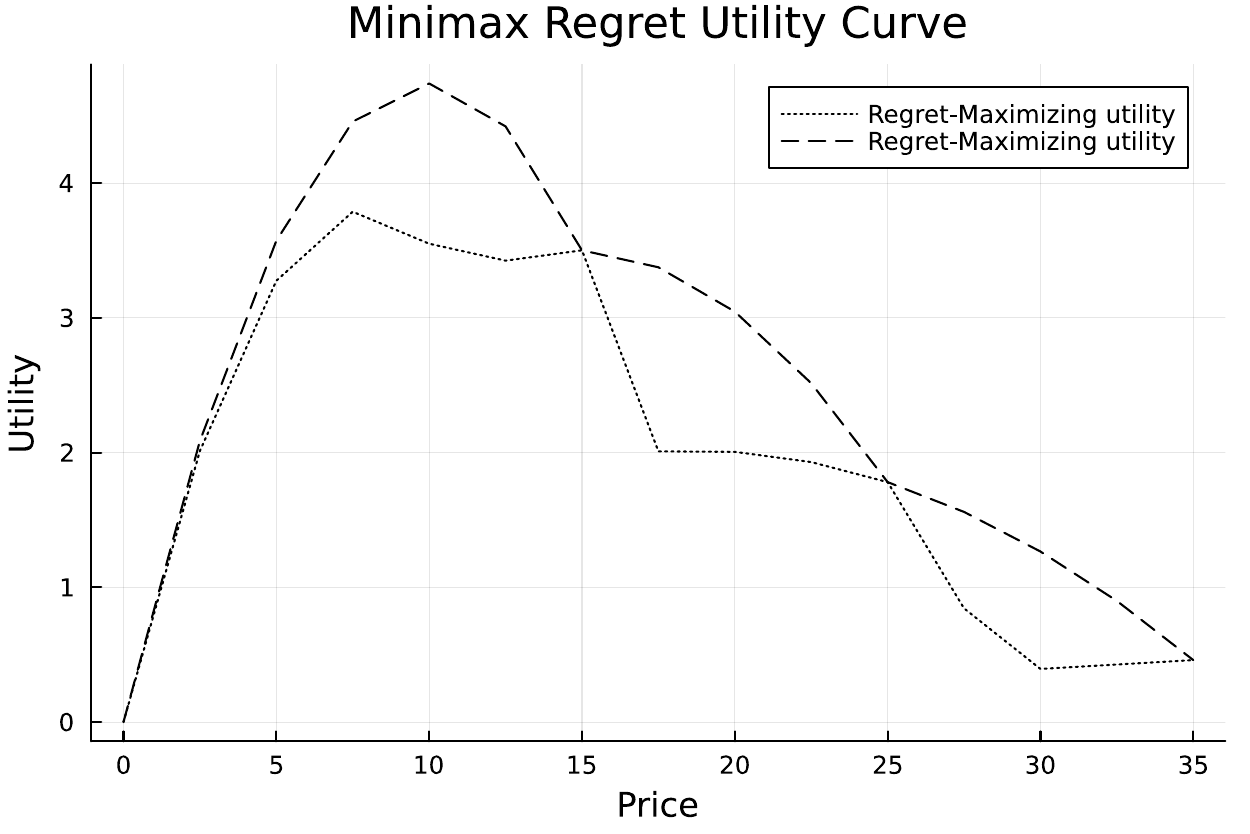}
    \label{fig:welfarebounds_lip}
    \\
    \small \raggedright
    The maximal and minimal possible expected takeup at each price, 
    and the corresponding bounds on expected utility generated by these bounds on takeup.
\end{figure}

From Figure \ref{fig:welfarebounds_lip}, we see that the maximin welfare policy
assigns a price of $7.5$ thousand shillings to the population.

\subsection{Robustness to utility parameters and treatment set}

In these section I present the optimal policy under different specifications of 
$\mathcal{D}$ and $\alpha$.
I consider three specifications of $\mathcal{D}$ constructed from equally spaced
prices between $0$ and $35$ thousand shillings,
where $d_{j+1} - d_j = \Delta$ for $\Delta \in \{5, 2.5, 1\}$.
I consider three values of $\alpha \in \{25, 35, 45\}$.
The results are plotted in Figures \ref{firstfig}-\ref{lastfig},
except for $\Delta = 2.5$ and $\alpha = 25$, which is shown in Section 
\ref{section-application}.

\begin{figure}
    \center 
    \caption{$\Delta = 5$, $\alpha=25$}
    \label{firstfig}
    \includegraphics[width=0.8\textwidth]{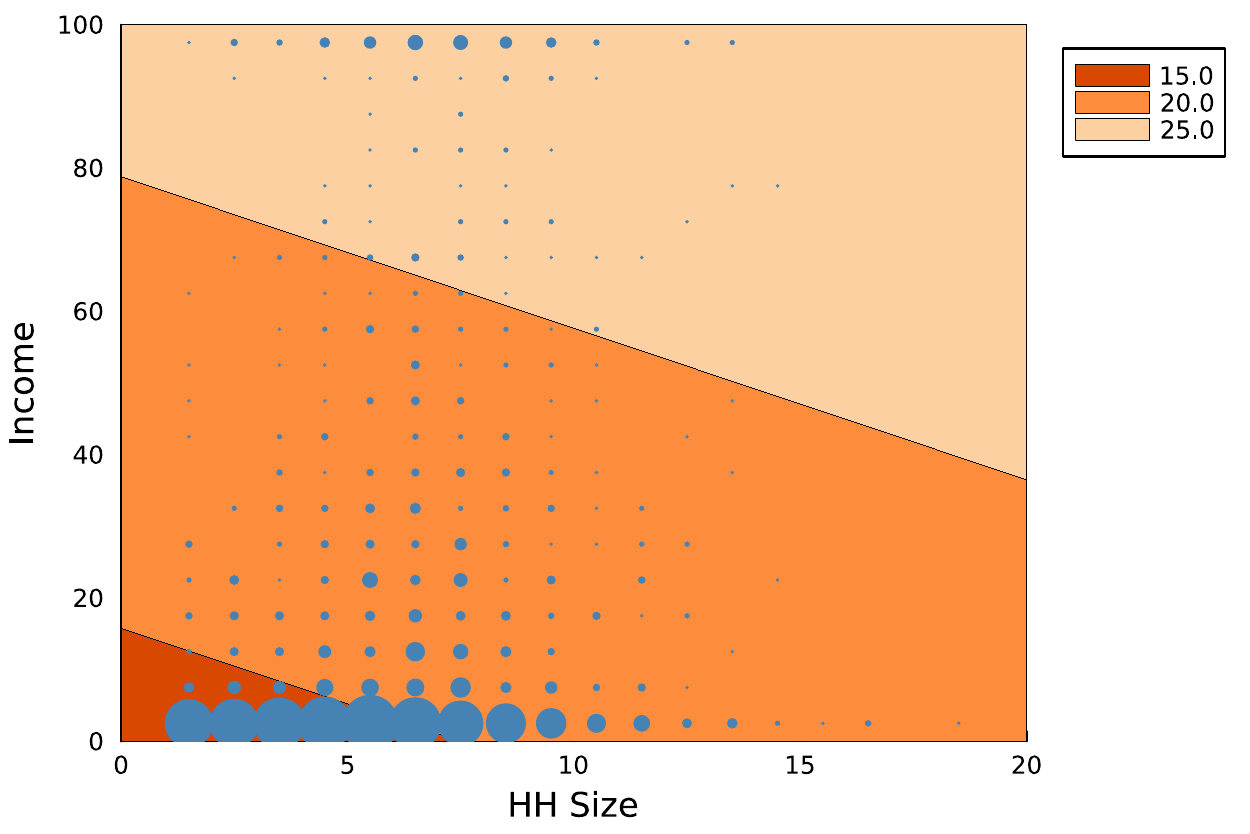}
    \\
    \small \raggedright
    The estimated optimal treatment allocation as a function of household size 
    and earnings.  The size of the dots is proportional to the number of people 
    at each value of covariates.  
    The shaded regions indicate which covariate values are assigned to each treatment.
\end{figure}
\begin{figure}
    \center 
    \caption{$\Delta = 5$, $\alpha=35$}
    \includegraphics[width=0.8\textwidth]{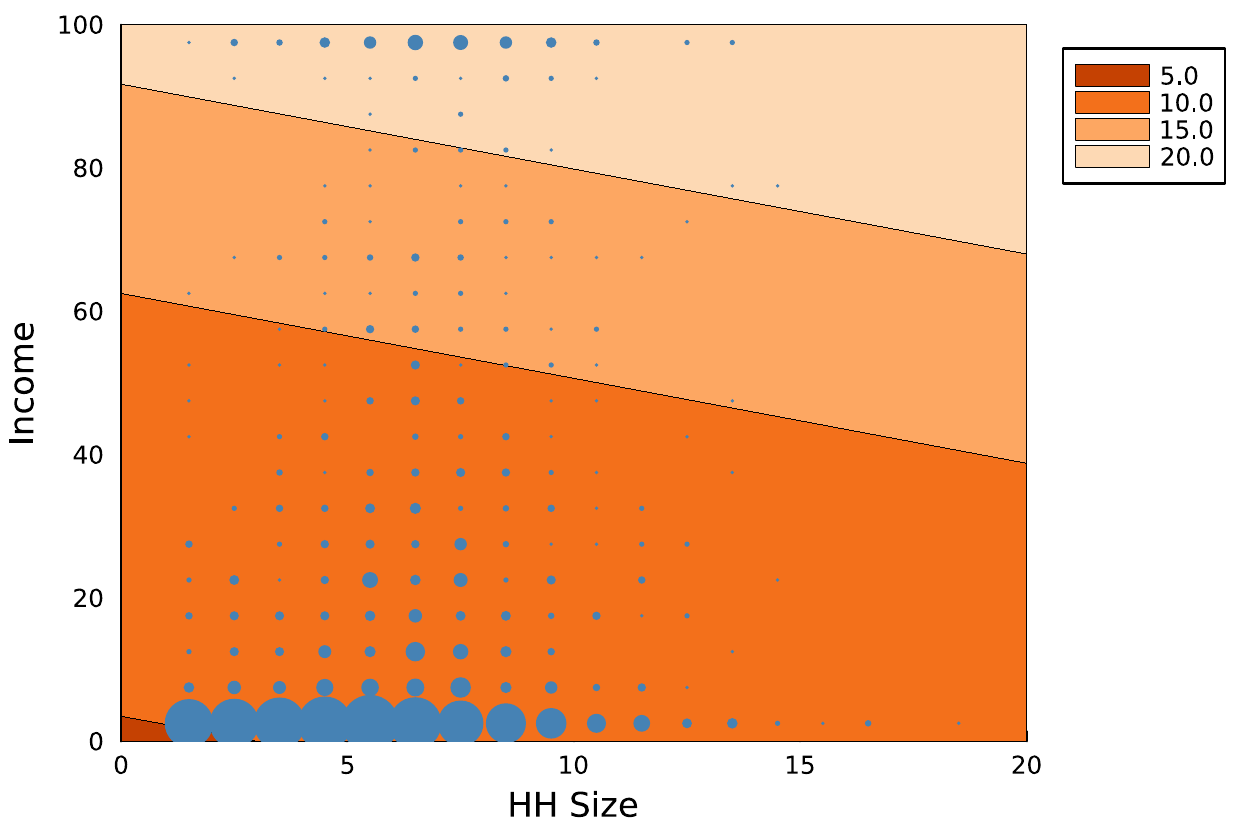}
    \\
    \small \raggedright
    The estimated optimal treatment allocation as a function of household size 
    and earnings.  The size of the dots is proportional to the number of people 
    at each value of covariates.  
    The shaded regions indicate which covariate values are assigned to each treatment.
\end{figure}
\begin{figure}
    \center 
    \caption{$\Delta = 5$, $\alpha=45$}
    \includegraphics[width=0.8\textwidth]{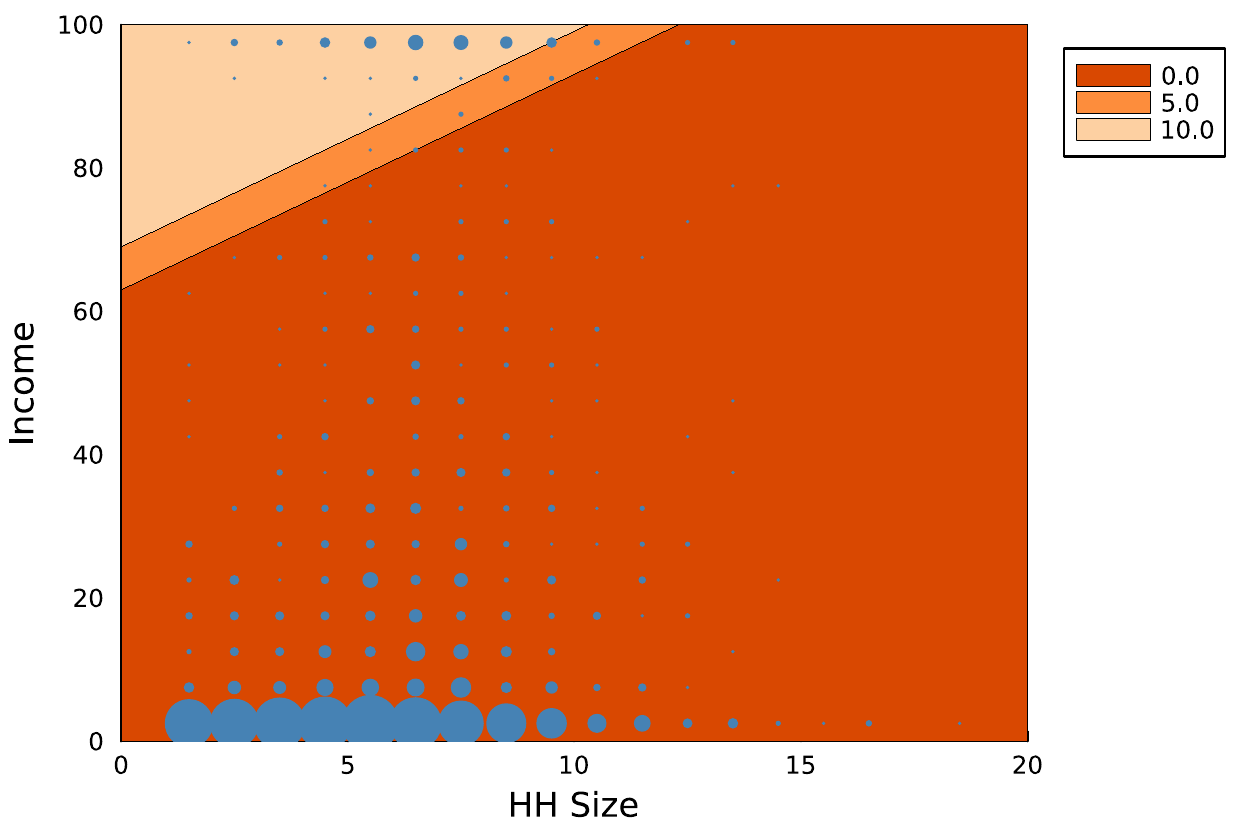}
    \\
    \small \raggedright
    The estimated optimal treatment allocation as a function of household size 
    and earnings.  The size of the dots is proportional to the number of people 
    at each value of covariates.  
    The shaded regions indicate which covariate values are assigned to each treatment.
\end{figure}

\begin{figure}
    \center 
    \caption{$\Delta = 2.5$, $\alpha=25$}
    \includegraphics[width=0.8\textwidth]{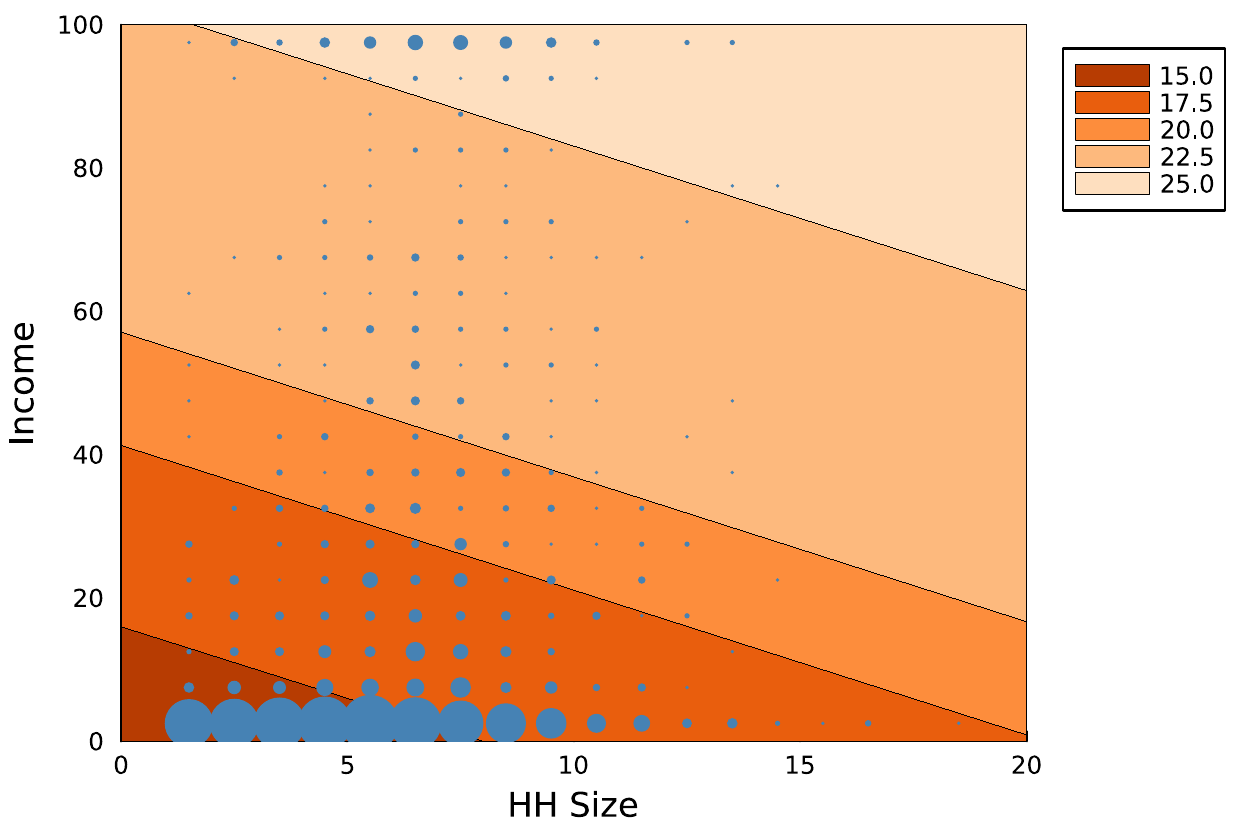}
    \\
    \small \raggedright
    The estimated optimal treatment allocation as a function of household size 
    and earnings.  The size of the dots is proportional to the number of people 
    at each value of covariates.  
    The shaded regions indicate which covariate values are assigned to each treatment.
\end{figure}
\begin{figure}
    \center 
    \caption{$\Delta = 2.5$, $\alpha=45$}
    \includegraphics[width=0.8\textwidth]{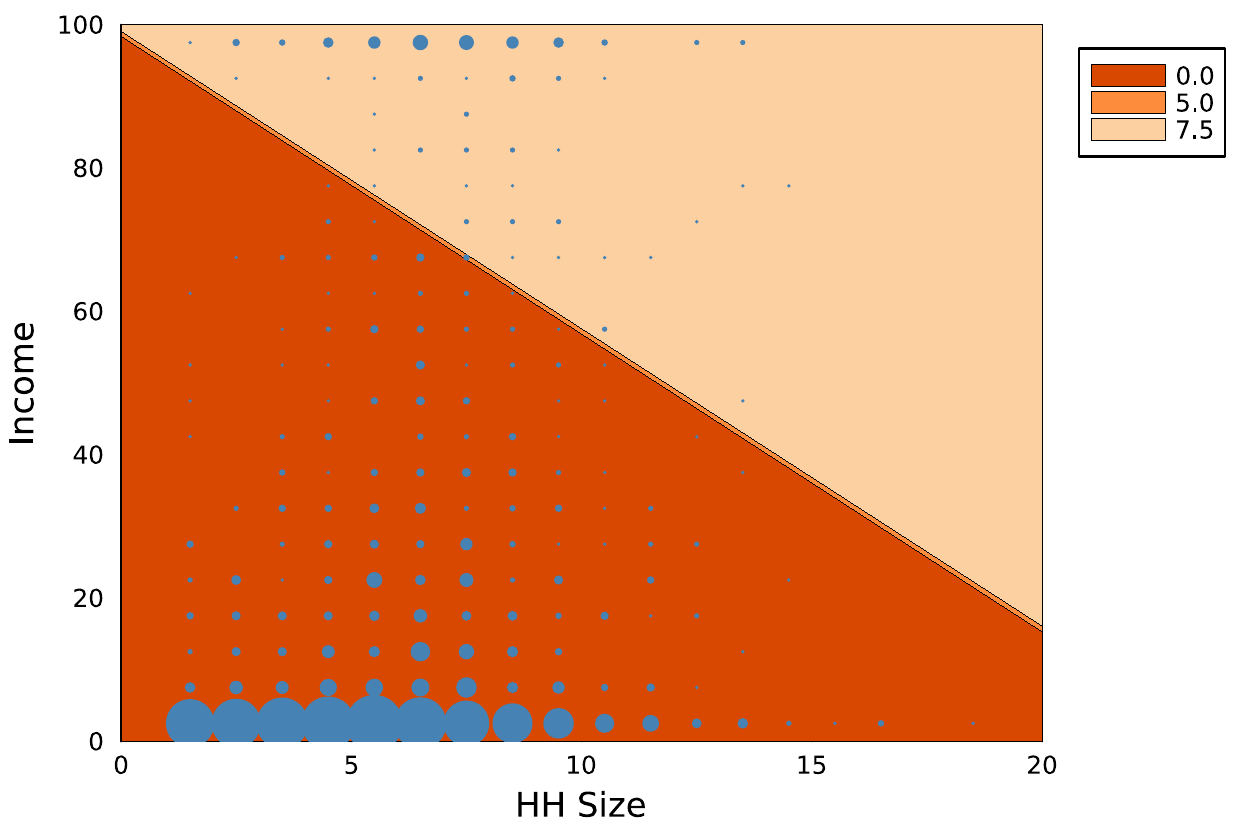}
    \\
    \small \raggedright
    The estimated optimal treatment allocation as a function of household size 
    and earnings.  The size of the dots is proportional to the number of people 
    at each value of covariates.  
    The shaded regions indicate which covariate values are assigned to each treatment.
\end{figure}

\begin{figure}
    \center 
    \caption{$\Delta = 1$, $\alpha=25$}
    \includegraphics[width=0.8\textwidth]{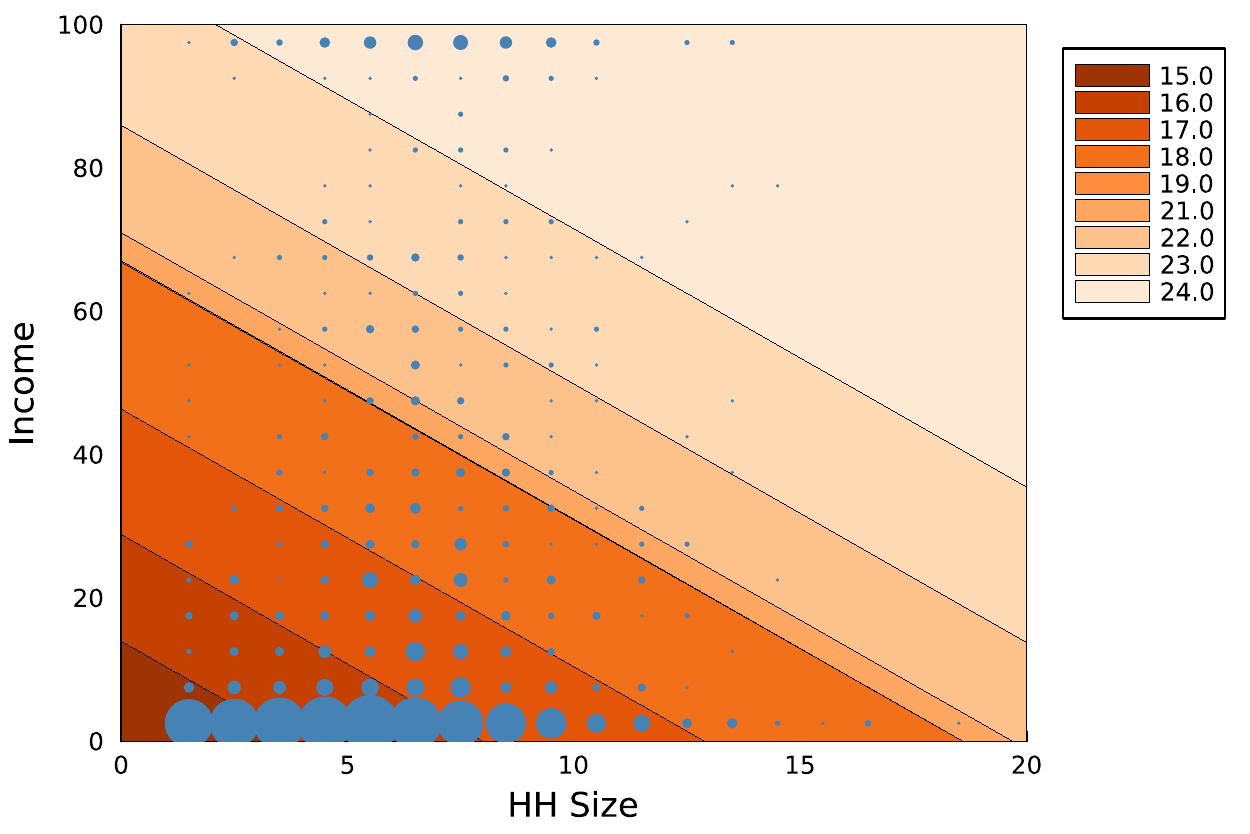}
    \\
    \small \raggedright
    The estimated optimal treatment allocation as a function of household size 
    and earnings.  The size of the dots is proportional to the number of people 
    at each value of covariates.  
    The shaded regions indicate which covariate values are assigned to each treatment.
\end{figure}
\begin{figure}
    \center 
    \caption{$\Delta = 1$, $\alpha=35$}
    \includegraphics[width=0.8\textwidth]{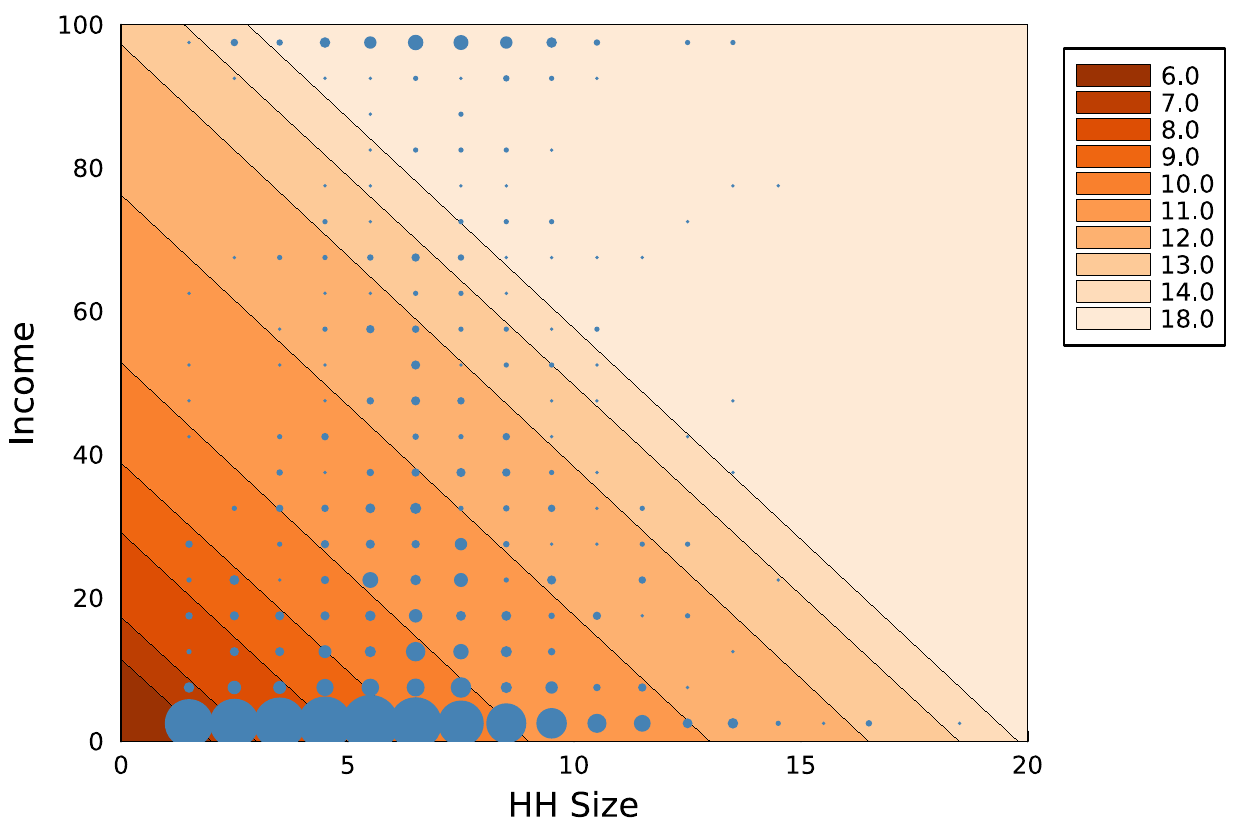}
    \\
    \small \raggedright
    The estimated optimal treatment allocation as a function of household size 
    and earnings.  The size of the dots is proportional to the number of people 
    at each value of covariates.  
    The shaded regions indicate which covariate values are assigned to each treatment.
\end{figure}
\begin{figure}
    \center 
    \caption{$\Delta = 1$, $\alpha=45$}
    \includegraphics[width=0.8\textwidth]{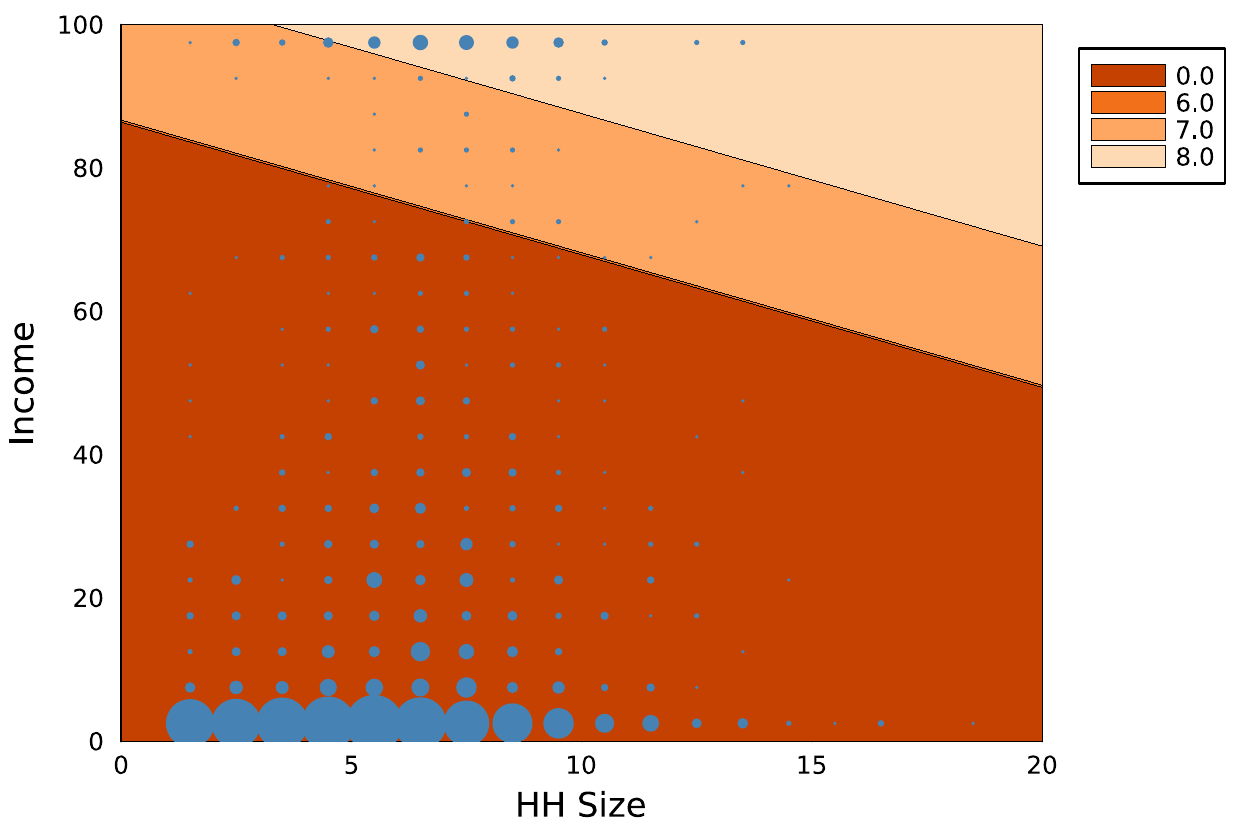}
    \label{lastfig}
    \\
    \small \raggedright
    The estimated optimal treatment allocation as a function of household size 
    and earnings.  The size of the dots is proportional to the number of people 
    at each value of covariates.  
    The shaded regions indicate which covariate values are assigned to each treatment.
\end{figure}

\end{document}